\def\focs{0} 
\def\mnotes{0} 
\newcommand{\mnote}[1]{}
	\newcounter{mynotes}
	\newcommand{\mnote}[1]{\addtocounter{mynotes}{1}{{\bf !}}
	\marginpar{\scriptsize  {\arabic{mynotes}.\ {\sf \textcolor{red}{#1}}}}}
\newcommand{\enote}[1]{\mnote{E: #1}}
\newcommand{\anote}[1]{\mnote{A: #1}}
\newtheorem{theorem}{Theorem}
\newtheorem{corollary}[theorem]{Corollary}
\newtheorem{lemma}[theorem]{Lemma}
\newtheorem{observation}[theorem]{Observation}
\newtheorem{proposition}[theorem]{Proposition}
\newtheorem{definition}[theorem]{Definition}
\newtheorem{claim}[theorem]{Claim}
\newtheorem{conjecture}[theorem]{Conjecture}
\newcommand{\qed}{\rule{7pt}{7pt}}
\newenvironment{proof}{\noindent{\bf Proof:}\hspace*{1em}}{\qed\bigskip}
\newenvironment{proof-sketch}{\noindent{\bf Proof Sketch:}\hspace*
{1em}}{\qed\bigskip}
\newenvironment{proof-idea}{\noindent{\bf Proof Idea:}\hspace*{1em}}
{\qed\bigskip}
\newenvironment{proof-of-lemma}[1]{\noindent{\bf Proof of Lemma #1}
  \hspace*{1em}}{\qed\bigskip}
\newenvironment{proof-attempt}{\noindent{\bf Proof Attempt}\hspace*
{1em}}{\qed\bigskip}
\newenvironment{proofof}[1]{\noindent{\bf Proof
of #1:}}{\qed\bigskip}
\newenvironment{remark}{\noindent{\bf Remark:}\hspace*{1em}}{\bigskip}
\newcommand{\qed}{\IEEEQED}
\newenvironment{proof-sketch}{\begin{IEEEproof}[Sketch of Proof]}{\end{IEEEproof}}
\newenvironment{proof-idea}{\begin{IEEEproof}[Proof Idea]}{\end{IEEEproof}}
\newenvironment{proof-of-lemma}[1]{\begin{IEEEproof}[Proof of Lemma #1]}{\end{IEEEproof}}
\newenvironment{proof-attempt}{\begin{IEEEproof}[Proof Attempt]}{\end{IEEEproof}}
\newenvironment{proofof}[1]{\begin{IEEEproof}[Proof of #1]}{\end{IEEEproof}}
\newenvironment{remark}{\begin{IEEEproof}[Remark]}{\end{IEEEproof}}
\newcommand{\calf}{\mathcal{F}}
\newcommand{\calp}{\mathcal{P}}
\newcommand{\calh}{\mathcal{H}}
\newcommand{\cals}{\mathcal{S}}
\newcommand{\calm}{\mathcal{M}}
\newcommand{\eps}{\epsilon}
\DeclareMathOperator*{\E}{\mathbb{E}}
\newcommand{\Z}{{\mathbb Z}}
\newcommand{\F}{{\mathbb F}}
\newcommand{\N}{{\mathbb N}}
\newcommand{\eqdef}{{\stackrel{\rm def}{=}}}
\newcommand{\ind}{\mathop{\mathsf{ind}}}
\newcommand{\mspan}{{\rm span}}
\newcommand{\zo}{{\{0,1\}}}
\newcommand{\ignore}[1]{}
\newcommand{\indic}{\mathbf{1}}
\begin{document}

\title{A Unified Framework for Testing Linear-Invariant Properties}

\title{A Unified Framework for Testing Linear-Invariant Properties}
\author{Arnab Bhattacharyya\thanks{Computer Science and Artificial
  Intelligence Laboratory, MIT.  Email: {\tt abhatt@mit.edu}.  Supported in part by a DOE
  Computational Science Graduate Fellowship and NSF
Awards 0514771, 0728645, and 0732334.} \and Elena Grigorescu\thanks{Computer Science and Artificial
  Intelligence Laboratory, MIT.  Email: {\tt elena\_g@csail.mit.edu}. Supported by NSF award CCR-0829672.}
\and Asaf
Shapira\thanks{School of Mathematics and School of Computer Science, Georgia Institute of
  Technology, Atlanta, GA 30332. Email: {\tt asafico@math.gatech.edu}.  Supported in part by NSF
  Grant DMS-0901355.}}

%
%
%

\maketitle

\begin{abstract}

 The study of the interplay between the testability of properties of Boolean functions and the invariances acting on their domain which preserve the property was initiated by  Kaufman and Sudan (STOC 2008).
Invariance with respect to $\F_2$-linear transformations is arguably the most common symmetry exhibited by natural properties of Boolean functions on the hypercube. Hence, an important goal in Property Testing is to describe necessary and
sufficient conditions for the testability of linear-invariant properties. This direction was explicitly proposed for investigation in a recent
survey of Sudan.
 We obtain the following results:

\begin{enumerate}
\item We show that every linear-invariant property that can be characterized
by forbidding induced solutions to a (possibly infinite) set of linear equations can be tested with
one-sided error.

\item We show that every linear-invariant property that can be tested with one-sided error can be
  characterized by forbidding induced solutions to a (possibly infinite) set of {\em systems} of linear equations.
\end{enumerate}

We conjecture that our result from item (1) can be extended to cover systems of linear equations. We further show that the validity of this conjecture
would have the following implications:

\begin{enumerate}
\item It would imply that every linear-invariant property that
is closed under restrictions to linear subspaces is testable with one-sided error. Such a result would unify several previous results on testing
Boolean functions, such as the testability of low-degree polynomials and of Fourier dimensionality.

\item It would imply that a linear-invariant property ${\cal P}$ is testable with one-sided error {\bf if and only if} ${\cal P}$ is closed under restrictions to linear subspaces, thus
resolving Sudan's problem.
\end{enumerate}

\end{abstract}

\thispagestyle{empty}
\newpage
\setcounter{page}{1}

\section{Introduction}\label{sec:intro}

Let ${\cal P}$ be a property of Boolean functions. A {\em testing} algorithm for ${\cal P}$ is a randomized algorithm that can quickly
distinguish between the case that $f$ satisfies ${\cal P}$ from the case that $f$ is far from satisfying ${\cal P}$. The problem of characterizing the
properties of Boolean functions for which such an efficient algorithm exists is considered by many to be the most important open problem in this area. Since a complete characterization
seems to be out of reach, several researchers have recently considered the problem of characterizing the testable properties ${\cal P}$ that belong to certain
``natural'' subfamilies of properties. One such family that has been extensively studied is the
family of so called {\em linear-invariant} properties. Our main result is two fold. We first show
that every property in a large family of linear-invariant properties is indeed testable. Next, we
conjecture that an even more general family of properties can be tested and show that such a result 
would give a {\em characterization} of the linear-invariant properties that are testable with
one-sided error.

\subsection{Background on property testing}\label{subsecbackground}

We start with the formal definitions related to testing Boolean functions. Let ${\cal P}$ be a property of Boolean functions over the $n$-dimensional Boolean hypercube. In other words, ${\cal P}$ is simply a subset of
the set of functions $f: \{0,1\}^n \to \{0,1\}$. Two functions $f,g : \{0,1\}^n \to \{0,1\}$ are $\epsilon$-far if they differ on at least $\epsilon 2^n$ of the inputs.
We say that $f$ is $\epsilon$-far from satisfying a property ${\cal P}$ if it is $\epsilon$-far from any function $g$ satisfying ${\cal P}$. A {\em tester} for the property ${\cal P}$ is a randomized algorithm
which can quickly distinguish between the case that an input function $f$ satisfies ${\cal P}$ from the case that it is $\epsilon$-far from satisfying ${\cal P}$. Here we assume that
the input function $f$ is given to the tester as an oracle, that is, the tester can ask an oracle for the value of the input functions $f$ on a certain $x \in \{0,1\}^n$.
We say that ${\cal P}$ is {\em strongly testable} (or simply {\em testable}) if ${\cal P}$ has a
tester which makes only a constant number of queries to the oracle, where this constant can depend on
$\epsilon$ but should be independent\footnote{Observe that since we aim for asymptotic results (that is, we think of $n \rightarrow \infty$), our property ${\cal P}$ can actually be described as ${\cal P}=\bigcup^{\infty}_{i=1}{\cal P}_n$, where ${\cal P}_n$ is the collection of functions over the $n$-dimensional Boolean hypercube which satisfy ${\cal P}$.} of $n$. Finally, we say that a testing
algorithm has {\em one-sided} error if it always accepts input functions satisfying ${\cal P}$. (We
always demand that the tester rejects input functions which are $\eps$-far from satisfying ${\cal P}$ with probability at
least, say, $2/3$.)

The study of testing of Boolean functions began with the work of Blum, Luby and Rubinfeld \cite{BLR} on testing linearity of Boolean functions. This work was further extended by Rubinfeld and Sudan \cite{RS}.
Around the same time, Babai, Fortnow and Lund \cite{BFL} also studied similar problems as part of their work on MIP=NEXP. These works are all related to the PCP Theorem, and an important part of it involves tasks
which are similar in nature to testing properties of Boolean functions. The work of Goldreich, Goldwasser and Ron \cite{GGR} extended these results to more combinatorial settings, and initiated the study of
similar problems in various areas. More recently, numerous testing questions in the Boolean
functions settings have sparked great interest: testing  dictators \cite{ParnasRS02}, low-degree
polynomials~\cite{AKKLR,Samorodnitsky07}, juntas \cite{FKRSS04,blais-junta}, concise
representations~\cite{diakonikolasLMORSW07}, halfspaces \cite{MatulefORS09}, codes
\cite{KaufmanLitsyn, KS07,
  KoppartyS09}.   
 These are documented in several surveys
\cite{FischerSurvey,RubinfeldICM,RonSurvey,SudanSurvey}, and we refer the reader to these surveys for more background and references on property testing.

\subsection{Invariance in testing Boolean functions}\label{subsecinvariahnce}

What features of a property make it testable?
One area in which this question is
relatively well understood is testing properties of dense graphs \cite{AS08,AFNS06, BCLSSV06}. In sharp
contrast, this question is far from being well understood in the case of testing properties of
Boolean functions. In an attempt to remedy this, Sudan and several coauthors \cite{KS, GKS08, GKS09,
BS09} have recently begun to
investigate the role of invariance in property testing. The idea is that in order to be able to test
if a combinatorial structure satisfies a property using very few queries to its representation, the
property we are trying to test must be closed under certain transformations. For example, when
testing properties of dense graphs, we are allowed to ask if two vertices $i$ and $j$ are adjacent
in the graph, and the assumption is that the property we are testing is invariant under renaming of
the vertices.  In other words, if we think of the input as an ${n \choose 2}$ dimensional $0/1$ vector
encoding the adjacency matrix of the input, then the property should be closed under transformations
(of the edges) which result from permuting the vertices of the graph.

A natural notion of invariance that one can consider when studying Boolean functions over the
hypercube is linear-invariance, which is in some sense the analogue for graph properties being
closed under renaming of the vertices (we further discuss this analogy in Subsection
\ref{subsecmain}). Formally, a property of Boolean functions ${\cal P}$ is said to be
linear-invariant if for every function $f: \F_2^n \to \zo$ satisfying $\cal P$ and for any $\F_2$-linear transformation
$L:\F_2^n
\to \F_2^n$ the function $f \circ L$ satisfies ${\cal P}$ as well, where we define $(f \circ
L)(x)=f(L(x))$. Note that here we identify $\{0,1\}^n$ with $\F_2^n$, and we will use this convention
from now on throughout the paper.  For a thorough discussion of the importance of linear-invariance,
we refer the reader to Sudan's recent survey on the subject \cite{SudanSurvey} and to the paper of
Kaufman and Sudan which initiated this line of work \cite{KS}.

\subsection{The main result}\label{subsecmain}

Our main result in this paper (stated in Theorem \ref{thm:main} below) is that a natural family of linear-invariant properties of Boolean functions can all be tested with one-sided error. The statement requires some preparation.

\begin{definition}[$(M,\sigma)$-free]\label{defmbfree}
Given an $m \times k$ matrix $M$ over $\F_2$ and $\sigma  \in
\zo^k$ for integers $m > 0$ and $k>2$, we say that a function $f: \F_2^n \to \zo$ is 
{\em $(M,\sigma)$-free} if there is no $x = (x_1,\dots,x_k) \in (\F_2^n)^k$ such that $Mx = 0$
and for all $1 \leq i \leq k$ we have $f(x_i)=\sigma_i$.
\ignore{If such an $x$ exists we say that $f$ {\em induces $(M,\sigma)$ at $x$} and denote this by $(M,\sigma) \to f$.}
\end{definition}

\begin{remark}
By removing linearly dependent rows, we can ensure that $\mathsf{rank}(M) = m$
without loss of generality.  We will assume this fact henceforth.
\end{remark}

Let us give some intuition about the above definition. Given a function $f: \F_2^n \to \zo$, it is natural to
consider the set $S_f=\{x \in \F^n_2:f(x)=1\}$. Suppose for the rest of this paragraph that in
the above definition $\sigma=1^k$. In this case $f$ is $(M,\sigma)$-free if and only if $S_f$
contains no solution to the system of equations $Mx=0$, that is, if there is no $v \in S^k_f$
satisfying $Mv=0$. Note that when considering graph properties,
the notion of $(M,1^k)$-freeness is analogous to the graph property of being $H$-free\footnote{If $H$ is a graph on $h$ vertices, then we
say that a graph $G$ is $H$-free if $G$ contains no set of $h$ vertices that contain a copy of $H$ (possibly with some other edges).}, where $H$ is some fixed graph. Observe that in both
cases the property is {\em monotone} in the sense that if $f$ is $(M,1^k)$-free, then removing
elements from $S_f$ results in a set that contains
no solution to $Mx=0$. Similarly if $G$ is $H$-free, then removing edges from $G$ results in an $H$-free graph.

Let us now go back to considering arbitrary $\sigma \in \{0,1\}^k$ in Definition \ref{defmbfree},
where again the intuition comes from graph properties. Observe that a natural variant of the
monotone graph property of being $H$-free is the property of being induced $H$-free\footnote{If $H$
is a graph on $h$ vertices, then we say that a graph $G$ is induced $H$-free if $G$ contains no set
of $h$ vertices that contain a copy of $H$ and no other edges.}. Note that being induced $H$-free is
no longer a monotone property since if $G$ is induced $H$-free then removing an edge can actually
create induced copies of $H$. Getting back to the property of being $(M,\sigma)$-free, observe that
we can think of this as requiring $S_f$ to contain no {\em induced} solution to the system of
equations $Mx=0$. That is, the requirement is that there should be no vector $v$ satisfying $Mv=0$, where $v_i\in S_f$ if
$\sigma_i=1$ and $v_i \in \F^n_2 \setminus S_f$ if $\sigma_i=0$. So we can think of $\sigma$ as encoding which
elements of a potential solution vector $v$ should belong to $S_f$ and which should belong to its
complement. For this reason we will adopt the convention of calling $(M,\sigma)$ a {\em forbidden
induced system of equations}.

Continuing with the graph analogy, once we have the property of being induced $H$-free, for some fixed
graph $H$, it is natural to consider the property of being induced ${\cal H}$-free where ${\cal H}$
is a fixed finite set of graphs.  Several natural graph properties can be described as being induced ${\cal
H}$-free (e.g. being a line-graph), but it is of course natural to further generalize this notion and
allow ${\cal H}$ to contain an infinite number of forbidden induced graphs. One then gets a very
rich family of properties like being Perfect, $k$-colorable, Interval, Chordal etc. This
generalization naturally motivates the following definition which will be key to our main results.

\begin{definition}[${\cal F}$-free]\label{defFfree}
Let $\mathcal{F} = \{(M^1,\sigma^1),$$(M^2,\sigma^2),$$\dots\}$ be a (possibly infinite) set of induced systems of linear equations. A function $f$ is said to be
${\cal F}$-free if it is $(M^i,\sigma^i)$-free\footnote{In the sense of Definition \ref{defmbfree}} for all $i$.
\end{definition}

Observe that this definition is an OR-AND type restriction, that is, we require that $f$ will not satisfy {\em any} of the systems $(M^i,\sigma^i)$, where $f$ satisfies $(M^i,\sigma^i)$ if
it satisfies all the equations of $M^i$ (in the sense of Definition \ref{defmbfree}). We are now ready to state our main result.

\begin{theorem}[Main Result]\label{thm:main}
Let $\mathcal{F} = \{(M^1,\sigma^1),(M^2,\sigma^2),\dots\}$ be a possibly infinite set
of induced equations (that is, all the matrices $M^i$ are of rank one), each on more than two
variables. Then the property of being ${\cal F}$-free is testable 
with one-sided error.
\end{theorem}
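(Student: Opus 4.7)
The plan is to analyze the natural one-sided tester: given a query budget $q = q(\epsilon, \mathcal{F})$, sample uniformly random points $y_1,\ldots,y_q \in \F_2^n$, query $f$ on all of them, and reject iff there exist $(M^i,\sigma^i) \in \mathcal{F}$ and indices $j_1,\ldots,j_{k_i}$ such that $M^i(y_{j_1},\ldots,y_{j_{k_i}}) = 0$ and $f(y_{j_\ell}) = \sigma^i_\ell$ for every $\ell$. One-sided error is immediate: an $\mathcal{F}$-free $f$ admits no such witness. The content of the theorem is soundness, which by an elementary union-bound calculation reduces to the following quantitative statement: there exist $K = K(\epsilon,\mathcal{F})$ and $\delta = \delta(\epsilon,\mathcal{F}) > 0$ so that every $f$ which is $\epsilon$-far from $\mathcal{F}$-free has at least $\delta \cdot 2^{n(k_i-1)}$ induced $(M^i,\sigma^i)$-solutions for some $(M^i,\sigma^i) \in \mathcal{F}$ with $k_i \le K$.

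The first step would be a compactness reduction from infinite to finite $\mathcal{F}$. I would argue by contradiction that there is a finite $\mathcal{F}_0 \subseteq \mathcal{F}$ such that any function which is $\epsilon$-far from $\mathcal{F}$-free is already $\epsilon/2$-far from $\mathcal{F}_0$-free: otherwise one could exhaust $\mathcal{F}$ by $\mathcal{F}_0^{(1)} \subseteq \mathcal{F}_0^{(2)} \subseteq \cdots$ and extract functions $f_n$ that are $\mathcal{F}_0^{(n)}$-free but $\epsilon$-far from $\mathcal{F}$-free; passing to a limit in an appropriate $\F_2^\omega$-limit object (or via an ultraproduct) would produce a function that is simultaneously $\mathcal{F}$-free and $\epsilon$-far from $\mathcal{F}$-free. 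This mirrors Alon-Shapira's reduction for hereditary graph properties and isolates the analytic content in a finite $\mathcal{F}_0 = \{(M^1,\sigma^1),\ldots,(M^t,\sigma^t)\}$.

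The second step is an induced arithmetic removal lemma for $\mathcal{F}_0$: if $f$ has $o(2^{n(k_i-1)})$ induced $(M^i,\sigma^i)$-solutions for every $i \le t$, then $f$ is $o(1)$-close to being $\mathcal{F}_0$-free. Since every $M^i$ is a single linear equation in more than two variables, the equation has Gowers-Wolf complexity one, and Green's arithmetic regularity lemma over $\F_2^n$ suffices: one obtains a subspace $H \le \F_2^n$ of bounded codimension such that $f$ is $U^2$-uniform on almost every coset of $H$. On this partition the count of induced $(M^i,\sigma^i)$-solutions is well approximated by a sum, over coset-tuples compatible with $M^i$, of products of local $\sigma^i_j$-densities; small global counts force almost all product terms to be small, which lets us modify $f$ on each light coset to obtain an $\mathcal{F}_0$-free function close to $f$.

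The main technical obstacle is the \emph{induced} (as opposed to monotone) character of the removal lemma. When $\sigma \ne 1^{k}$, simply zeroing $f$ on light cosets — the move underlying the monotone removal lemma of Green and of Kr\'al-Serra-Vena — can create new violations of other patterns, since flipping a $1$ to a $0$ now populates $\sigma^i_j = 0$ positions of other equations. Handling this requires a coordinated modification across all $t$ equations on a common partition, with a single modification decision per coset that preserves $\mathcal{F}_0$-freeness against every $i$; this is the arithmetic analog of the induced graph removal lemma of Alon, Fischer, Krivelevich, and Szegedy. The hypothesis $k_i > 2$ is used throughout: it is precisely what keeps each equation within reach of Fourier-level regularity and avoids degenerate pair-identification constraints that would fail to yield a density statement at all.
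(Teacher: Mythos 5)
Your Step 2 is broadly in the spirit of the paper's core argument, but your Step 1 --- the ``compactness reduction from infinite to finite $\mathcal{F}$'' --- is a genuine gap, and it is exactly where the main difficulty of the theorem lives. First, the statement you want to extract is not justified: from the negation you only get functions $f_m$ that are $\eps$-far from $\mathcal{F}$-free and $\eps/2$-\emph{close} to $\mathcal{F}_0^{(m)}$-free (not $\mathcal{F}_0^{(m)}$-free), and even after repairing this, passing to an $\F_2^\omega$-limit or ultraproduct does not by itself yield a contradiction. In the limit object, $\mathcal{F}_0^{(m)}$-freeness of the $g_m$ only gives \emph{zero density} of induced solutions to each pattern, while ``$\eps$-far from $\mathcal{F}$-free'' is an infimum over a property that is not closed in any obvious topology; concluding that a limit object with vanishing induced-solution densities must actually be close to (a limit of) $\mathcal{F}$-free functions is itself a removal-lemma-strength statement, i.e.\ it presupposes essentially what you are trying to prove. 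Note also that this step does not ``mirror Alon--Shapira'': their handling of infinitely many forbidden induced subgraphs is not a compactness reduction to a finite subfamily (such a reduction with loss only $\eps/2$ is not available; a posteriori one only gets farness $\approx \delta_\calf(\eps)/k_\calf(\eps)$ from the truncated family, which can be far smaller than $\eps/2$). Instead they --- and this paper --- bound the size of the needed forbidden configuration through the regularity partition itself: one classifies the cleaned-up function by a pattern $\mu:\F_2^\ell\to\{0,1,*\}$ on the cosets, defines $\Psi_\calf$ (Definition \ref{def:psi}) as the worst-case size of the smallest equation partially induced by such a pattern, and feeds $\Psi_\calf$ back into a \emph{functional} regularity lemma (Lemma \ref{lem:functionalreg}, Corollary \ref{cor:allreg}) whose uniformity parameter depends on the order of the partition. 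Without this mechanism (or a fully developed limit theory over $\F_2$ doing equivalent work), your outline does not handle infinite $\mathcal{F}$.

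A secondary omission: in the induced (non-monotone) setting the troublesome coset is the subspace $H$ itself, for which no uniformity can be guaranteed and which cannot simply be zeroed out or discarded, since modifications there must not create solutions that the original $f$ fails to witness. The paper resolves this by a second-level partition inside $H'$ (Lemma \ref{lem:allreginside}), whose existence rests on Tur\'an- and Ramsey-type theorems for subspaces (Theorem \ref{thm:ramsey}); the counting is then done on shifted copies $J(u)+W$ of that structured block, so that every induced solution found for the modified function $F$ transfers back to genuine induced solutions of $f$ (Theorem \ref{thm:modif}). Your ``coordinated modification across all $t$ equations with one decision per coset'' captures the AFKS flavor but does not address this zero-coset issue, which is the other new technical ingredient your sketch would need.
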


Note that, in the above statement, each $M^i$ contains a single equation, rather than a {
system} of equations as in Definition \ref{defFfree}.  In fact, though, what we prove is quite a
bit stronger: Theorem \ref{thm:main} holds when each $M^i$ is of {\em complexity $1$}, instead of
just rank $1$.  The notion of complexity of a linear system is derived from work by Green and Tao
\cite{GT06} (See Section \ref{subsec:extension} for the formal definition.)  There, we also show that
any matrix of rank at most two is of complexity $1$, and, hence, Theorem \ref{thm:main} is obviously
a corollary of this stronger result.  But for the sake of simplicity, let us restrict
ourselves to discussing matrices of rank one in this section.

Let us compare this result to some previous works.  One work that initiated some of the
recent results on testing Boolean functions was obtained by Green \cite{Green05}. His result can be
formulated as saying that for any rank one matrix $M$, the property of being $(M,1^k)$-free can be
tested with one-sided error. Green conjectured that the same result holds for any {\em system} of
linear equations. This conjecture was recently confirmed by Shapira \cite{Shap09} and Kr\'al{'},
Serra and Vena \cite{KSV08b}. In our language, the results of \cite{Shap09,KSV08b} can be stated as
saying that for any matrix $M$, the property of being $(M,1^k)$-free is testable with one-sided
error. The case of arbitrary $\sigma$ was first explicitly considered in \cite{BCSX09} where it
was shown that if $M$ is a rank one matrix, then $(M,\sigma)$-freeness is equivalent to a finite set
of properties, all of which were already known to be testable.  Tim Austin (see \cite{Shap09})
conjectured that the result of \cite{Shap09} for an arbitrary matrix $M$ can be extended to show
testability of  $(M,\sigma)$-freeness for every vector $\sigma$.  Shapira
\cite{Shap09} further conjectured that his result can be extended to the case when we forbid an infinite set of
systems of linear equations as in Definition \ref{defFfree}. So Theorem
\ref{thm:main} partially resolves the above conjecture, since it can handle an infinite number of
induced equations (but not an infinite number of forbidden arbitrary {\em systems} of equations).

Another way to think of Theorem \ref{thm:main} comes (yet again) from the analogy with graph properties. Alon
and Shapira \cite{AS08} have shown that for every set of graphs ${\cal F}$, the property of being
induced ${\cal F}$-free is testable with one-sided error. Since in many ways\footnote{This analogy
  is informal, but see \cite{KSV09} and \cite{SzegSym10} for some formal connections.}, copies of a
fixed graph $H$ in a graph $G$ correspond to finding solutions of a {\em single} equation in a set $S \subseteq
\F^n_2$, Theorem \ref{thm:main} can be considered to be a Boolean functions analog of the result of
\cite{AS08}.  Just like the graph property of being free of a particular subgraph $H$ is analogous to
being $(M,\sigma)$-free where $M$ has rank $1$, the {\em hypergraph} property of being free of a
particular sub-hypergraph ${\cal H}$ is analogous to being  $(M,\sigma)$-free for an arbitrary
$M$. Now, the result of \cite{AS08} has been later extended to hypergraphs by Austin and Tao
\cite{AT08} and R\"odl and Schacht \cite{RodlSchacht}; so, it is natural to expect that one could
also handle an infinite number of forbidden induced systems of equations in the functional case as
well. All the above motivates us to raise the following conjecture.\anote{The Rodl et al. work
  actually only shows testability for hereditary properties of $k$-uniform hypergraphs for every
  {\em fixed} $k$.  So a more reasonable conjecture would be if the rank of each system was
  bounded.  But then in terms of subspace hereditary, this would be awkward to state...}

\begin{conjecture}\label{mainconj}
For every (possibly infinite) set of systems of induced equations ${\cal F}$, the property of being
${\cal F}$-free is testable with one-sided error.
\end{conjecture}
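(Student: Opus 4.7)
I would follow the same three-phase strategy used to prove Theorem~\ref{thm:main}, but replace each complexity-$1$ ingredient by its higher-order Fourier-analytic analogue. The plan is (i) prove an \emph{induced} removal lemma for a single system $(M,\sigma)$ of arbitrary rank, (ii) extend it to a finite family by an iterated refinement argument, and (iii) reduce the infinite case to the finite case by compactness. The tester is then the standard one: sample a uniformly random affine subspace of constant dimension $d(\eps)$ and reject iff the restriction of $f$ contains an induced solution to any forbidden system that ``fits'' inside it.

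For step (i), I would start from the non-induced removal lemma for a single arbitrary system established by Shapira~\cite{Shap09} and Kr\'al'--Serra--Vena~\cite{KSV08b} and pass to the induced version via inclusion-exclusion, following the template used in~\cite{BCSX09} for the rank-one case. Writing the number of induced $(M,\sigma)$-solutions as an alternating sum of non-induced solution counts, where each term constrains a subset of the $k$ coordinates to lie in $f^{-1}(1)$ and leaves the rest free, reduces the task to simultaneously bounding every term in the expansion. For this I would invoke the arithmetic regularity lemma and counting lemma of Green--Tao~\cite{GT06}: decompose $f$ as a bounded-complexity nilsequence factor plus a Gowers-uniform error, and conclude that the count of solutions in each term is essentially a polynomial in the densities of $f$ on the atoms of the factor. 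Modifying $f$ on the small collection of atoms that actually support the induced pattern then yields an $(M,\sigma)$-free function close to $f$.

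For step (ii), given a finite family $\mathcal{F}_0 = \{(M^1,\sigma^1),\dots,(M^t,\sigma^t)\}$, I would refine the nilsequence factor until, for each pattern $(M^i,\sigma^i)$, every atom is either essentially empty of induced copies or carries close to its expected density, then simultaneously flip values on the ``bad'' atoms. Because the modifications are uniform on atoms, the counting lemma continues to apply after the flip, so in particular no new $(M^j,\sigma^j)$-copies are produced in significant numbers; this mirrors the induced-subgraph cleanup of Alon--Shapira~\cite{AS08}. For step (iii), I would show that if $f$ is $\eps$-far from $\mathcal{F}$-free then it is already $\eps/2$-far from $\mathcal{F}_N$-free for the finite sub-family $\mathcal{F}_N \subseteq \mathcal{F}$ of systems with at most $N = N(\eps)$ variables (and hence rank at most $N$), and then apply step (ii) to $\mathcal{F}_N$. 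One-sided completeness of the resulting tester is automatic since any induced $(M^i,\sigma^i)$-solution visible in a subspace restriction is also one in $f$, while soundness comes from the removal bound for $\mathcal{F}_N$.

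I expect the main obstacle to lie in step (i). In the rank-one case, the inclusion-exclusion expansion is controlled by a single $U^2$-regular partition, and the error terms coming from the non-induced removal bounds combine harmlessly because they are of the same type across all $2^k$ terms. For higher rank, each of the $2^k$ terms in the expansion is itself a linear system of its own complexity in the sense of~\cite{GT06}, so one needs a single factor fine enough to be simultaneously regular for every one of them, and must rule out the possibility that higher-order Gowers-norm errors conspire across the inclusion-exclusion sum to hide induced solutions through correlated cancellations. Setting up the quantitative dependencies between the regularity parameter, the Gowers-uniformity error, and the depth of the inclusion-exclusion is, I expect, the technical crux of the entire plan.
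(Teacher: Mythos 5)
You are proposing a proof of Conjecture~\ref{mainconj}, which the paper does not prove: it is stated as an open conjecture, the paper establishes only the complexity-$1$ case (Theorems~\ref{thm:main} and~\ref{thm:mainstrong}), and in the concluding remarks the authors explicitly identify why the general case is out of reach with these methods. Your step (i) runs directly into that identified obstruction: to count induced solutions of a system of complexity $c>1$ one needs uniformity with respect to the Gowers $U^{c+1}$ norm, not the $U^2$/Fourier uniformity of Green's regularity lemma, and the higher-order decomposition you invoke from Green--Tao (structured nilsequence part plus Gowers-uniform error) is available over the integers and over fields of large characteristic \cite{GT10}, but not over $\F_2$, where the relevant inverse theorems had not been clarified. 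So the central analytic ingredient of your plan is not something you can cite; supplying it \emph{is} the open problem, and without it neither your counting step nor your cleanup step has a counting lemma to rest on (the paper's Lemma~\ref{lem:gencount} is exactly the complexity-$1$ special case of the Generalized von Neumann theorem, and there is no analogue at higher complexity in this setting).

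Beyond that, two steps of your outline have genuine gaps even granting the regularity machinery. First, step (iii) is not a compactness argument: since $\calf_N\subseteq\calf$, the $\calf_N$-free functions form a \emph{larger} set than the $\calf$-free ones, so being $\eps$-far from $\calf$-free gives no lower bound at all on the distance to $\calf_N$-free; producing a bound $k_\calf(\eps)$ on the size of a violated system is precisely the hard part, and in the paper it is extracted from the structure of $f$ itself via the classification function $\Psi_\calf$ combined with the functional regularity lemma (Lemma~\ref{lem:functionalreg}), not assumed up front. Second, an induced removal lemma does not follow from the non-induced one by inclusion-exclusion: the alternating-sum identity counts induced solutions but says nothing about how to repair $f$, and your cleanup in step (ii) -- flipping values on bad atoms so that ``no new copies are produced in significant numbers'' -- is not sufficient, since the repaired function must contain \emph{no} induced solutions (equivalently, as in Theorem~\ref{thm:modif}, one must locate many induced copies of a bounded-size system inside the original $f$, not the modified one). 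The non-monotonicity of induced freeness is exactly where the paper has to work hardest: the zero coset cannot be regularized away, which is why the argument needs the Ramsey theorem for subspaces (Theorem~\ref{thm:ramsey}), Lemma~\ref{lem:allreginside}, and the secondary partition through $W$ and $H''$ to transfer solutions from the cleaned function $F$ back to $f$. Your sketch has no analogue of this mechanism at higher complexity, and it is not clear that one exists even assuming the higher-order regularity lemma over $\F_2$.
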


As the reader can easily convince himself, a graph property ${\cal P}$ is equivalent to being induced ${\cal H}$-free if and only if ${\cal P}$ is closed under vertex removal. Such properties are usually called {\em hereditary}.
This motivates us to define the following analogous notion for properties of Boolean functions.

\begin{definition}[Subspace-Hereditary Properties]\label{DefHereditary}
A linear-invariant property $\mathcal{P}$ is said to be {\em subspace-hereditary} if it is
closed under restriction to subspaces. That is, if $f$ is in $\calp_n$ and $H$ is a
$m$-dimensional linear subspace of $\F_2^n$, then $f|_H \in \calp_m$ also, where\footnote{Note that
  we are implicitly composing $f|_H$ with a linear transformation so that it is now defined on
  $\F_2^m$.  Here, we are using the fact that $\calf$ is linear-invariant.}
$f|_H : \F_2^m \to \zo$ is the restriction of $f$ to $H$.
\end{definition}

When considering linear-invariant properties, one can also obtain the following (slightly cleaner)
view of the properties of Definition \ref{defFfree}. This equivalence is analogous to the graph
properties mentioned above. We stress that this equivalence is a further indication of the
``naturalness'' of the notion of linear-invariance and its resemblance to the closure of graph
properties under vertex renaming. We defer its proof to the appendix. 

\begin{proposition}\label{HereditaryInduced} A linear-invariant property ${\cal P}$ is subspace-hereditary
if and only if there is a (possibly infinite) set of systems of induced equations
${\cal F}$ such that ${\cal P}$ is equivalent to being ${\cal F}$-free.
\end{proposition}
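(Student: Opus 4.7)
The plan is to handle each direction separately. The $(\Leftarrow)$ direction is routine: if $\mathcal{P}$ equals the property of being $\mathcal{F}$-free, then for any $\F_2$-linear map $L:\F_2^m\to\F_2^n$ and any $f\in\mathcal{P}_n$, I claim $f\circ L$ is also $\mathcal{F}$-free. Indeed, a purported witness $(x_1,\dots,x_{k_i})\in(\F_2^m)^{k_i}$ of $(M^i,\sigma^i)\to f\circ L$ pushes forward via $L$ to a witness $(L(x_1),\dots,L(x_{k_i}))$ of $(M^i,\sigma^i)\to f$, because $M^i$ has entries in $\F_2$ and $L$ is linear, so $M^i L(x)=L(M^i x)=0$ and the $\sigma^i$-values match. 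Specializing $L$ to subspace inclusions yields subspace-hereditary; specializing to endomorphisms yields linear-invariance.

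For the $(\Rightarrow)$ direction, the plan is to construct $\mathcal{F}$ explicitly from $\mathcal{P}$ with one forbidden configuration per ``bad'' function. For each $k\geq 1$ and each $g:\F_2^k\to\{0,1\}$ with $g\notin\mathcal{P}_k$, I introduce an induced system $(M^g,\sigma^g)$ on $2^k$ variables $\{x_\epsilon\}_{\epsilon\in\F_2^k}$. The rows of $M^g$ encode every $\F_2$-linear dependency among the indices: for each $(c_\epsilon)\in\F_2^{2^k}$ with $\sum_\epsilon c_\epsilon\epsilon=0$ in $\F_2^k$, I add the constraint $\sum_\epsilon c_\epsilon x_\epsilon=0$; and I set $\sigma^g_\epsilon=g(\epsilon)$. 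The key observation is that the solutions of $M^g x=0$ in $(\F_2^n)^{2^k}$ are exactly the assignments of the form $x_\epsilon=Y(\epsilon)$ for some linear $Y:\F_2^k\to\F_2^n$; hence $(M^g,\sigma^g)\to f$ iff there exists a linear $Y$ with $f\circ Y=g$.

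The equivalence $f\in\mathcal{P}_m \Leftrightarrow f$ is $\mathcal{F}$-free then reduces to two checks. First, if $f\notin\mathcal{P}_m$, take $g=f$ and $Y=\mathrm{id}$: since $(M^f,\sigma^f)\in\mathcal{F}$ and $(M^f,\sigma^f)\to f$, $f$ is not $\mathcal{F}$-free. Second, if $f\in\mathcal{P}_m$ and $(M^g,\sigma^g)\to f$ via some $Y:\F_2^k\to\F_2^m$, I must conclude $g\in\mathcal{P}_k$ to contradict $g\in\mathcal{F}$. Factoring $Y=\iota_H\circ\pi_Y$ through its image $H=\mathrm{Im}(Y)$ yields $g=(f|_H)\circ\pi_Y$ with $f|_H\in\mathcal{P}_{\dim H}$ by subspace-hereditary. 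When $Y$ is injective, $\pi_Y$ is an isomorphism and $g\in\mathcal{P}_k$ by linear-invariance (or equivalently, by extending $Y$ to an endomorphism of $\F_2^m$, applying linear-invariance, and restricting back via subspace-hereditary). The main obstacle is the non-injective case, where $\dim H<k$ and $g$ is a ``lift'' of $f|_H$ to a strictly higher-dimensional space; resolving it needs the combined closure of $\mathcal{P}$ under non-invertible endomorphisms (from linear-invariance) and under subspace restriction (from subspace-hereditary), which together force such trivial lifts of restrictions to remain in $\mathcal{P}$.
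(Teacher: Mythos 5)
Your construction is essentially the one the paper uses: you forbid, for each bad truth table $g:\F_2^k\to\zo$ with $g\notin\calp_k$, the induced system whose matrix encodes all $\F_2$-linear dependencies among the $2^k$ index points, so that solutions of $M^g x=0$ are precisely the tuples $x_\epsilon=Y(\epsilon)$ for linear $Y:\F_2^k\to\F_2^n$. The paper does exactly this, with one inessential economy: it puts into $\calf$ only the \emph{minimal} obstructions (pairs $(d,S)$ with $\indic_S\notin\calp_d$ but all proper subspace restrictions in $\calp$), whereas you take all non-members; your version makes the ``$f\notin\calp\Rightarrow$ not $\calf$-free'' direction even more immediate (witness $Y=\mathrm{id}$), and minimality plays no real role in the other direction either. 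Your $(\Leftarrow)$ direction matches the paper's one-line check.

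The one place you leave an assertion instead of an argument is exactly the delicate point, and it is worth being precise about it. When the witness $Y:\F_2^k\to\F_2^m$ is non-injective and $k\le m$, the closure you invoke really does follow, by the trick you already mention for the injective case: extend $Y$ to an endomorphism $\tilde Y$ of $\F_2^m$ (say, zero on a complement of a copy of $\F_2^k$), use linear-invariance to get $f\circ\tilde Y\in\calp_m$, then restrict to that copy to get $f\circ Y\in\calp_k$ by subspace-heredity; you should spell this out, since it is the crux. When $k>m$, however (and such witnesses do arise, since $\calf$ contains obstructions of every dimension and the zero map is always a solution), the claim that linear-invariance plus subspace-heredity ``force such lifts to remain in $\calp$'' is \emph{not} a formal consequence of the two hypotheses: neither axiom ever moves from dimension $m$ up to dimension $k>m$, and degenerate examples (e.g.\ $\calp_n$ equal to all functions for $n\le n_0$ and empty for $n>n_0$) are linear-invariant and subspace-hereditary yet fail the equivalence. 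To be fair, the paper's own proof hides the identical issue: it asserts that any solution of $M_d x=0$ consists of ``the elements of a $d$-dimensional subspace,'' which tacitly assumes the associated linear map is injective. So your argument is on par with the paper's; just be aware that the non-injective, higher-dimensional case is not handled by either write-up and genuinely requires an extra (implicit) closure-under-lifts convention on $\calp$.
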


We mention that while the notions of graph properties being hereditary and functions being subspace-hereditary
are somewhat more natural than the equivalent notions of being free of induced subgraphs
and equations respectively, it is actually easier to think about these properties using the latter
notion when proving theorems about them. This was the case in \cite{AS08}, and it will be the case
in the present paper as well.  Proposition \ref{HereditaryInduced} along with Conjecture
\ref{mainconj} implies the following:

\begin{corollary}\label{cor:alltestable}
If Conjecture \ref{mainconj} holds, then every linear-invariant subspace-hereditary property is testable with one-sided tester.
\end{corollary}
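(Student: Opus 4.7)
The plan is to derive the corollary as an immediate consequence of combining Proposition \ref{HereditaryInduced} with Conjecture \ref{mainconj}; there is essentially no additional work beyond chaining these two statements. Specifically, I would start by fixing an arbitrary linear-invariant subspace-hereditary property $\mathcal{P}$, then use the ``only if'' direction of Proposition \ref{HereditaryInduced} to produce a (possibly infinite) family $\mathcal{F} = \{(M^1,\sigma^1),(M^2,\sigma^2),\dots\}$ of systems of induced linear equations such that $\mathcal{P}$ coincides with $\mathcal{F}$-freeness in the sense of Definition \ref{defFfree}. Invoking Conjecture \ref{mainconj} applied to this $\mathcal{F}$ then yields a one-sided-error tester for $\mathcal{F}$-freeness, which, by the equivalence just established, is a one-sided-error tester for $\mathcal{P}$ itself.

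The only genuine content sits inside the two ingredients being combined, and neither is being proven here. Proposition \ref{HereditaryInduced} is the structural translation: the proof (deferred to the appendix) should go by collecting, for every $m$ and every $f_0 : \F_2^m \to \zo$ that violates $\mathcal{P}$, an induced system of equations encoding the statement ``the input function contains $f_0$ as a restriction to some $m$-dimensional subspace''; linear-invariance of $\mathcal{P}$ ensures this encoding is well-defined up to change of basis, and subspace-heredity ensures that avoiding all such systems is equivalent to satisfying $\mathcal{P}$. Conjecture \ref{mainconj} is precisely the testability statement we are assuming as a hypothesis.

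The only subtlety worth flagging is that one must be careful that the $\mathcal{F}$ produced by Proposition \ref{HereditaryInduced} really is of the form accepted by Conjecture \ref{mainconj} (finite systems on finitely many variables, though infinitely many such systems collected together), and that the equivalence ``$f$ satisfies $\mathcal{P}$ iff $f$ is $\mathcal{F}$-free'' is preserved under the $\epsilon$-farness metric in both directions, i.e.\ that being $\epsilon$-far from $\mathcal{P}$ is the same as being $\epsilon$-far from $\mathcal{F}$-freeness. This is automatic once the set-theoretic equivalence of the two properties is established, since $\epsilon$-farness depends only on the underlying set of accepted functions. Hence the ``main obstacle'' is entirely external to the corollary's proof, namely establishing Conjecture \ref{mainconj}; the corollary itself is a one-line deduction.
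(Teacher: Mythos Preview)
Your proposal is correct and matches the paper's own treatment exactly: the paper simply states that the corollary follows from Proposition \ref{HereditaryInduced} together with Conjecture \ref{mainconj}, without giving any further argument. Your additional remarks about the form of $\mathcal{F}$ and the preservation of $\epsilon$-farness are valid sanity checks but, as you note, are automatic once the set-theoretic equivalence is in hand.
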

Observe that if Conjecture \ref{mainconj} holds, then Corollary \ref{cor:alltestable} would give yet
another surprising similarity between linear-invariant properties of boolean functions and graph
properties, since it is known \cite{AS08} that every hereditary graph property is
testable. Actually, as we discuss in the next subsection, if Conjecture 4 holds, then an even
stronger similarity would follow.

Many interesting properties of the hypercube that have been studied for testability are
linear-invariant.  Important examples include linearity \cite{BLR}, being a polynomial of low degree
\cite{AKKLR}, and low Fourier dimensionality and sparsity \cite{GOSSW}.  These properties have all been 
shown to be testable.  Moreover, they all turn out to be subspace-hereditary.  Thus, if our
Conjecture \ref{mainconj} is true, as we strongly believe, then we could explain the testability
of all these properties through a unified perspective that uses no features of these properties
other than their linear invariance.  Note that our main result, Theorem \ref{thm:main}, already shows
(yet again!) that linearity is testable but from a completely different viewpoint than
used in previous analysis.  Furthermore, to show the testability of low degree polynomials
(a.k.a., Reed-Muller codes), we would only need to resolve
Conjecture \ref{mainconj} for a {\em finite}
\footnote{
The characterization of polynomials of degree
$d$ using forbidden induced equations is shown in Appendix \ref{app:reedmuller}. 
}
family of
forbidden induced systems of equations.
\anote{I removed the sentence about dimensionality and sparsity
only being known to have $2$-sided testers, because we observe later
that subspace-heredity implies they also have $1$-sided testers.}

\ignore{
Regarding the properties of Fourier dimensionality
and sparsity, they are currently only known to have two-sided testers \cite{GOSSW}, while Corollary \ref{cor:alltestable}
will potentially yield one-sided testers, resolving an issue raised in \cite{SudanSurvey}.}

\subsection{The proposed characterization of testable linear-invariant properties}\label{subseccharac}

We now turn to discuss our second result, which based on Conjecture \ref{mainconj} gives a
characterization of the linear-invariant properties of Boolean functions that can be tested with
one-sided error using ``natural'' testing algorithms. Let us start with formally defining the types
of ``natural'' testers we consider here. 

\begin{definition}[Oblivious Tester]\label{defoblivious}
An {\em oblivious tester} for a property $\calp=\{\calp_n\}_n$ is a (possibly 2-sided error) non-adaptive, probabilistic algorithm, which, given a distance parameter $\epsilon$,
and oracle access to an input function $f:\F^n_2 \to \{0,1\}$, performs the following steps:
\begin{enumerate}
\item Computes an integer $d = d(\eps)$.  If $d(\eps) > n$, let $H = \F_2^n$.  Otherwise, let $H
  \leq \F_2^n$ be a subspace of dimension  $d(\eps)$ chosen uniformly at random.
\item Queries $f$ on all elements $x \in H$.
\item Accepts or rejects based only on the outcomes of the received answers, the value of $\eps$,
  and its internal randomness.
\end{enumerate}
\end{definition}

We now discuss the motivation for considering the above type of algorithms. The fact that the tester
is non-adaptive and queries a random linear subspace is without loss of generality 
(see Proposition
\ref{prop:canonical}); 
 this is analogous to the fact \cite{AFKS, GoldreichTrevisan} that one can assume a graph property
tester makes its decision only by inspecting a randomly chosen induced subgraph.
The only essential restriction we place on oblivious testers is that their behavior cannot depend on
the value of $n$, the domain size of the input function. If we allow the testing algorithm to make
its decisions based on $n$, then  it can do very strange and unnatural things. For example, we can
now consider properties that depend on the parity of $n$. As was shown in
\cite{ASseparation}, the algorithm can use the size of the input in order to compute the optimal
query complexity. All these abnormalities will not allow us to give any meaningful
characterization. As observed in \cite{AS08} by restricting the algorithm to make its decisions
while not considering the size of the input, we can still test any (natural) property while at the
same time avoid annoying technicalities. We finally note that all the testing algorithms for
testable properties of Boolean functions in prior works were indeed oblivious, and that furthermore
many of them implicitly consider only oblivious testers. In particular, these types of testers were considered in \cite{SudanSurvey}.

As it turns out, oblivious testers can potentially\footnote{The potential relies on the
  validity of Conjecture \ref{mainconj}.} test properties which are slightly more general than
subspace-hereditary properties.
These are defined as follows.

\begin{definition}[Semi Subspace-Hereditary Property]\label{def:semihereditary}
A property $\calp=\{\calp_n\}_n $ is {\em semi subspace-hereditary} if there exists a subspace-hereditary property $\calh$ such that
\begin{enumerate}
\item Any function $f$ satisfying $\calp$ also satisfies $\calh$.
\item There exists a function $M:(0,1)\to \N$ such that for every $\eps \in (0,1)$, if $f:\F^n_2 \to \{0,1\}$ is $\epsilon$-far from
satisfying $\calp$ and $n \geq M(\epsilon)$, then $f|_V$ does not satisfy $\calh$.  \ignore{there
  exists a subspace $V \subseteq \F^n_2$ such that $f|_V$ does not satisfy $\calh$.}
\end{enumerate}
\end{definition}

The intuition behind the above definition is that a semi subspace-hereditary property can only deviate from being ``truly'' subspace-hereditary on functions over a finite domain, where the finiteness is controlled by the
function $M$ in the definition. Our next theorem connects the notion of oblivious testing and semi
subspace-hereditary properties.  Assuming Conjecture \ref{mainconj}, it essentially
characterizes the linear-invariant properties that are testable with one-sided error, thus resolving
Sudan's problem raised in \cite{SudanSurvey}.  

\begin{theorem}\label{thm:charac}
If Conjecture \ref{mainconj} holds, then a linear-invariant property ${\cal P}$ is testable by a
one-sided error oblivious tester {\bf if and only if} ${\cal P}$ is semi subspace-hereditary. 
\end{theorem}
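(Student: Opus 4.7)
The plan is to prove both directions of the biconditional; Conjecture \ref{mainconj} is used only in the implication from semi subspace-hereditary to testable, and the other direction is unconditional.

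For the hard direction (testable $\Rightarrow$ semi subspace-hereditary), let $T$ be the given one-sided oblivious tester. I would define $\calh$ to be the subspace-hereditary closure of $\calp$: namely, $\calh_m$ consists of all $g : \F_2^m \to \zo$ expressible as $g = f \circ \iota$ for some $n \geq m$, some $f \in \calp_n$, and some injective linear map $\iota : \F_2^m \to \F_2^n$. One checks directly that $\calh$ is linear-invariant and subspace-hereditary, and that $\calp \subseteq \calh$ (take $n = m$ and $\iota$ the identity), which is condition (1) of Definition \ref{def:semihereditary}. For condition (2), fix $\eps$ and suppose $f \in \calh_n$, writing $f = f' \circ \iota$ with $f' \in \calp_{n'}$. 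When $T$ queries $f$ on a random $d(\eps)$-dimensional subspace $K \leq \F_2^n$, the data it receives is, as a function on a $d(\eps)$-dimensional $\F_2$-space, identical to what $T$ would receive on the input $f'$ with subspace $\iota(K) \leq \F_2^{n'}$. Since $T$ is oblivious and accepts $f' \in \calp$ with probability $1$, it accepts $f$ with probability $1$ too. Soundness of $T$ then forces no $\eps$-far-from-$\calp$ function to lie in $\calh_n$, giving condition (2) with $M(\eps) \equiv 1$.

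For the easy direction, let $\calh$ be the subspace-hereditary witness for $\calp$. Proposition \ref{HereditaryInduced} identifies $\calh$ as being $\calf$-free for some family $\calf$ of induced systems of linear equations, and Conjecture \ref{mainconj} then supplies a one-sided tester $T_\calh$ for $\calh$. The tester produced by this framework queries a uniformly random subspace of $\F_2^n$ of some dimension $d_\calh(\eps)$, fitting Definition \ref{defoblivious}, so $T_\calh$ is oblivious. I would construct $T_\calp$ by querying a random subspace of dimension $\max(M(\eps), d_\calh(\eps))$: when $n \leq M(\eps)$ the queried subspace is all of $\F_2^n$ and we decide $\calp$ exactly using at most $2^{M(\eps)}$ queries, while when $n > M(\eps)$ we pass the observed restriction to $T_\calh$. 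Condition (1) of the semi definition ensures that $\calp$-functions pass $T_\calh$, yielding one-sided error; condition (2) ensures that $\eps$-far-from-$\calp$ functions fail $\calh$ on sufficiently large domains, which the soundness of $T_\calh$ then detects.

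The hard part will be formalizing the hard-direction claim that $T$ ``sees the same data'' on $f$ as it would on $f'$. A priori, the tester's decision may depend on the specific subspace it chose, whose distribution differs between $T$ run on $f \in \F_2^n$ and $T$ run on $f' \in \F_2^{n'}$. The resolution is to invoke Proposition \ref{prop:canonical}, which reduces any oblivious tester to a canonical form whose acceptance predicate depends only on the linear-isomorphism class of the queried answers; only under this canonical form does the one-sided guarantee on $f'$ translate into forced acceptance of every $f \in \calh_n$. A secondary subtlety is the parsing of condition (2) of Definition \ref{def:semihereditary}: since $\calh$ is subspace-hereditary, the statement ``$f|_V \notin \calh$ for some $V$'' is equivalent to ``$f \notin \calh$'', and I read the condition as providing enough structure for $T_\calh$'s soundness guarantee to kick in when composed via the construction above.
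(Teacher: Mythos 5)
Your forward direction (testable $\Rightarrow$ semi subspace-hereditary) takes a genuinely different and legitimate route from the paper: you take $\calh$ to be the subspace-hereditary closure of $\calp$ and force acceptance of every $f\in\calh_n$ by matching the answer patterns of the oblivious tester on $f$ with patterns it sees on a $\calp$-function $f'$, whereas the paper builds $\calh$ out of the tester itself, encoding every $d(\eps)$-dimensional restriction pattern that triggers rejection as a forbidden induced system $(M_H,\sigma_S)$ and letting $\calh$ be freeness from all of them. Your version is arguably cleaner (and tester-independent), but two corrections are needed. First, you cannot take $M(\eps)\equiv 1$: when $n<d(\eps)$ the tester run on $f$ reads the full truth table on $\F_2^n$, which is never among the patterns it sees when run on $f'$ over the larger domain, so the forced-acceptance argument says nothing there; you need $M(\eps)=d(\eps)$, exactly as in the paper. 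Second, Proposition \ref{prop:canonical} does not supply the normalization you invoke -- its footnote explicitly leaves open whether the decision depends only on properties of the selected subspace, and it concerns converting an arbitrary tester into subspace form, not canonicalizing an oblivious one. What you actually need is the reading of Definition \ref{defoblivious} in which the decision is a function of the received answer pattern alone (independent of $n$ and of which subspace was drawn); this is the same convention the paper's own proof relies on, so it is a presentational slip rather than a substantive one.

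The backward direction has two genuine gaps. (i) Soundness: you argue that condition (2) makes an $\eps$-far-from-$\calp$ input ``fail $\calh$,'' and that the soundness of $T_\calh$ then detects this. But the soundness guarantee of a tester for $\calh$ applies only to inputs that are \emph{far} from $\calh$, not to inputs that merely do not satisfy it; a one-sided tester for $\calh$ may accept, with probability arbitrarily close to $1$, a function with a single witness against $\calh$. The missing idea is the paper's triangle argument: if $f$ is $\eps$-far from $\calp$ and $n\geq M(\eps/2)$, then any $g$ within distance $\eps/2$ of $f$ is still $\eps/2$-far from $\calp$, so condition (2) applied at scale $\eps/2$ gives $g\notin\calh$; hence $f$ is $\eps/2$-\emph{far} from $\calh$, and one must run $T_\calh$ with distance parameter $\eps/2$ (this is also why the paper's query bound involves $2^{M(\eps/2)}$). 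As written, your composition runs $T_\calh$ at parameter $\eps$ against a hypothesis ($f\notin\calh$) it cannot detect, so the step fails. (ii) You assert that the tester supplied by Conjecture \ref{mainconj} already ``queries a uniformly random subspace of dimension $d_\calh(\eps)$'' and is therefore oblivious. The conjecture only promises \emph{some} one-sided tester; nothing about its statement fixes the tester's form. The paper closes this gap by first invoking Proposition \ref{prop:canonical} and then an Alon-style argument: if the canonicalized tester rejects inputs $\eps$-far from the subspace-hereditary $\calh$ by inspecting a random $d(\eps)$-dimensional subspace, then with high probability the restriction of such an input to a random $3d(\eps)$-dimensional subspace does not satisfy $\calh$, so the manifestly oblivious tester ``accept iff the sampled restriction satisfies $\calh$'' is sound, and it is one-sided by heredity. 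Without some such step your $T_\calp$ is not known to be oblivious, which is precisely what the theorem claims.
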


Getting back to the similarity to graph properties, we note that \cite{AS08} obtained a similar
characterization for the graph properties that are testable with one-sided error.  
Let us close by mentioning two points. The first is that most linear-invariant properties are known
to be testable with one-sided error, and hence the question of characterizing these properties is
well motivated.  In fact, for the subclass of linear-invariant properties which also themselves form
a linear subspace, \cite{BHR05} showed that the optimal tester is always one-sided and non-adaptive.
Our second point is that it is natural to ask if there are  linear-invariant properties which are
not testable. A linear-invariant property with query complexity $\Omega(2^n)$ arises
implicitly from the arguments of \cite{GGR}; see Section
\ref{sec:conclusion} for a brief sketch. A second, more natural, example comes from Reed-Muller
codes. \cite{BKSSZ} shows that for any $1 \ll q(n) \ll n$ the linear-invariant property of
being a $\log_2(q(n))$-Reed-Muller code cannot be tested with $o(q(n))$ queries.  We also conjecture
that the property of two functions being isomorphic upto linear transformations of the variables is
not a testable property.  Lower bounds for isomorphism testing have been studied both in the Boolean
function model \cite{FKRSS04, BO10} and in the dense graph model \cite{FischerIsom}, but our problem
specifically does not seem to have been examined in a property testing setting.

\ignore{

Let us now give another motivation for studying
such properties. The characterizations of \cite{AFNS06,BCLSSV06} of the testable graph properties, use the fact that when testing graph properties one can always assume that the testing algorithm behaves in a ``canonical'' way (essentially non-adaptive algorithms). This was observed by \cite{AFKS} and strengthened by Goldreich and Trevisan \cite{GoldreichTrevisan}. As it turns out, when considering linear-invariant properties we can prove a similar result. The following is the appropriate notion of a canonical tester for testing Boolean functions.

\begin{definition}[Canonical Tester]\label{defcanonical}
A {\em canonical tester} is a tester that works by sampling $q=q(\epsilon)$ vectors $x_1,\ldots,x_q$ querying the input function $f$ all vectors in $\mspan(x_1,\ldots,x_q)$ and
answering based on these answers (NEEDS SOME MORE WORK).
\end{definition}

\begin{proposition}\label{propcanonical}
If a linear-invariant property is testable then it is testable by a canonical tester.
\end{proposition}

\begin{definition}[Strongly Canonical Tester]\label{defcanonical}
A {\em strongly canonical tester} is a canonical tester that has the following additional properties:
\begin{itemize}
\item The tester has one-sided error.
\item The algorithms decisions are independent of the size of the input.
\end{itemize}
\end{definition}

}

\subsection{Paper overview}\label{subsecorganize}

The rest of the paper is organized as follows. In Section \ref{sec:uniformity}, we discuss the
regularity lemma of Green \cite{Green05}.  Just as the graph regularity lemma of Szemer\'edi
\cite{Szemeredi} guarantees that every graph can be partitioned into a bounded number of
pseudorandom graphs, Green's regularity lemma guarantees a similar partition for Boolean
functions. This lemma, whose proof relies on Fourier analysis over $\F^n_2$, was used in
\cite{Green05} to show that properties defined by forbidding a single (non-induced) equation are
testable. This basic approach falls short of being able to handle an infinite number of forbidden
non-induced equations or even a single forbidden induced equation.  We thus need to develop a
variant of Green's regularity lemma that is strong enough to
allow such applications. This new variant is described in Section \ref{sec:uniformity}. The
overall approach is motivated by that taken by Alon et al. \cite{AFNS06} in their formulation of the functional
graph regularity lemma. However, the proof here is somewhat more involved since we need to
develop several tools in order to make the approach work. One of them is a certain Ramsey type
result for $\F^n_2$ which is key to our proof and that may be useful in other settings (see
Theorem \ref{thm:ramsey}).  The approach of \cite{AFNS06} only allows one to handle a {\em finite}
number of forbidden subgraphs, which translates in our setting to being able to handle a finite
number of forbidden equations. So, one last technique we employ is motivated by the ideas from
\cite{AS08} on how to handle an infinite number of forbidden subgraphs. This (somewhat complicated)
technique is described in Section \ref{sec:hereditary}. We believe that these set of ideas will prove to
be instrumental in resolving Conjecture \ref{mainconj}. Section \ref{sec:conclusion} is
devoted to some concluding remarks and open problems. 



\ignore{
\section{Definitions and Results}

???Perhaps no need for this anymore????

Fix $\F$ to be a finite field of order $q \geq 2$, and let $\F^n$ denote the $n$-dimensional vector space over $\F$.  (We
identify, without comment, any vector space over $\F$ of dimension $d$ with $\F^d$ by making the
appropriate linear transformation.)  In
this work, a {\em property} $\mathcal{P}$ will refer to some family $\mathcal{P} = \{\mathcal{P}_1,
\mathcal{P}_2, \dots\}$ where each $\mathcal{P}_n$ is a collection of subsets of $\F^n$.  Property
$\mathcal{P}$ is said to be {\em linear-invariant} if for every $n \geq 1$ and for every $\F$-linear
map $L : \F^n \to \F^n$, it holds that $S \in \mathcal{P}_n$ implies $\{L(v) : v \in S\} \in
\mathcal{P}_n$.

\begin{definition}
A linear-invariant property $\mathcal{P}$ is said to be {\em hereditary} if it is
closed under restriction to subspaces.  That is, if $S \subseteq \F^n$ is in $\calp_n$ and $V$ is a
$m$-dimensional linear subspace  of $\F^n$, then $S|_V \in \calp_m$ also, where
$S|_V \subseteq \F^m$ is the restriction of $S$ to $V$.

Furthermore, a linear-invariant property $\mathcal{P}$  is said to be {\em
  monotone-hereditary} if it is hereditary and also closed under removal of elements.  That is, in
addition to being hereditary, if $S \in \calp_n$ and $T \subseteq S$, then $T \in
\calp_n$ also.
\end{definition}

Induced version of $(M,b)$-freeness from \cite{Shap09}.

\begin{definition}
For positive integers $n,m, k$, given an $m$-by-$k$ matrix $M$ over $\F$ of rank $m$ and a tuple
$\sigma = (\sigma(1),\dots,\sigma(k)) \in  \zo^k$, we say that a set $S \subseteq \F^n$ is {\em $(M,
  \sigma)$-free} if there is no $x = (x_1,\dots,x_k) \in (\F^n)^k$ such  that $Mx = 0$ and $\indic_S(x_i) =
\sigma(i)$ for all $i \in [k]$.  On the other hand, if there exists such $x = (x_1,\dots,x_k) \in
(\F^n)^k$, we say $S$ {\em induces $(M,\sigma)$ at $x$} and denote this by $(M,\sigma) \to S$. If
$\sigma = 1^k$, we omit writing $\sigma$ in the above notation.

Given a family of such pairs $\mathcal{F} = \{(M_1,\sigma_1),(M_2,\sigma_2),\dots\}$ where each
$M_i \in \F^{m_i \times   k_i}$ of rank $m_i$ and $\sigma_i \in \zo^{k_i}$, we then say that a set
$S \subseteq \F^n$ is {\em   $\mathcal{F}$-free} if $S$ is $(M_i,\sigma_i)$-free for all $i \in
[|\mathcal{F}|]$ (all  $i \in \Z^+$ if $\mathcal{F}$ is infinitely large).
\end{definition}

One can check that $\calf$-freeness is a hereditary linear-invariant property, for any fixed family
$\calf$.  In the other direction:

\begin{observation}
Let $\mathcal{P}$ be a hereditary linear-invariant property of $\mathcal{V}$.  Then, there is a
fixed family $\mathcal{F_\mathcal{P}} = \{(M_1,\sigma_1), (M_2,\sigma_2),\dots\}$ such that $\mathcal{P}$ is
equivalent to the property of being $\calf_{\mathcal{P}}$-free.  Furthermore if $\mathcal{P}$ is
monotone-hereditary, then each $\sigma_i$ can be taken to be $1^{k_i}$.
\end{observation}
\begin{proof}
For a hereditary linear-invariant property $\mathcal{P}$, let $\mathsf{Obs}$ denote the collection of
sets which do not have property $\mathcal{P}$ and which are minimal with respect to restriction to subspaces.
In other words, $S$ is contained in $\mathsf{Obs}$ iff there is a $d \geq 1$ such that $S
\subseteq \F^d$ and $S \not \in \calp_d$, but for any vector subspace $U \subseteq \F^d$ of
dimension $d' < d$, $S|_U \in \mathcal{P}_{d'}$.

For every $S \in \mathsf{Obs}$,  we construct a matrix $M_S$ and a tuple $\sigma_S$ such that any
$T$ with property $\calp$ is $(M_S,\sigma_S)$-free.  Take $d$ to be the smallest positive integer
such that $S \subseteq \F^d$ and $S \not \in \calp_d$.  Define $A_S$ to be the $q^d$-by-$d$
matrix over $\F$, where each of the $q^d$ rows corresponds to a distinct element of $\F^d$ represented using
some choice of bases.  Now, define $M_S$ to be a $(q^d-d)$-by-$q^d$ matrix over $\F$, such that $M_S
A_S = 0$ and $\mathsf{rank}(M_S) = q^d-d$.  Define $\sigma_S$ as $(\sigma(1),\sigma(2),\dots,\sigma(q^d))$
where $\sigma(i) = \indic_S(x_i)$ with $x_i$ being the element of $\F^d$ represented in the $i$th row of
$A_S$.  We observe now that any $T \subseteq \F^n$ having property $\calp$ is $(M_S,
\sigma_S)$-free.  Suppose the opposite, so that there exists $x = (x_1,\dots,x_{q^d}) \in (\F^n)^d$
satisfying $Mx = 0$ and $\indic_S(x_i) = \sigma(i)$.  Then, by definition of $M_S$, the $x_1,\dots,
x_{q^d}$ are the elements of a $d$-dimensional subspace $V$ over $\F$, and by definition of
$\sigma_S$, the set $\{x_i :~ i \text{   such that } \sigma(i) = 1\}$ is isomorphic to $S$.  Thus,
$T|_V \in \mathsf{Obs}$, which is a contradiction to the fact that $T$
has property $\calp$ because $\calp$ is hereditary.  Finally, define $\calf_\calp = \{(M_S,
\sigma_S) : S \in \mathsf{Obs}\}$.  We have just seen that any $T$ having property $\calp$ is
$\calf_\calp$-free.  On the other hand, suppose $T$ does not have property $\calp$. Then, because of
heredity, there must be a subspace $V$ such that $T|_V$ is isomorphic to an element $S$ of
$\mathsf{Obs}$ under linear transformations, which means by the above argument, that $T$ will not be
$(M_S,\sigma_S)$-free.

In the case that $\calp$ is known to be monotone-hereditary, we can construct $\calf_\calp$ as
follows.  Let $\mathsf{Obs}'$ denote the collection of sets which do not have property $\calp$ and
which are both minimal with respect to restriction to subspaces as well as removal of elements.
That is, here, $S$ is contained in $\mathsf{Obs}'$ iff there is a $d \geq 1$ such that $S
\subseteq \F^d$ and $S \not \in \calp_d$, but $(i)$, for any vector subspace $U \subseteq \F^d$ of
dimension $d' < d$, $S|_U \in \mathcal{P}_{d'}$, and $(ii)$, for any $S' \subsetneq S$, $S'
\in \calp_d$.  For every $S \in \mathsf{Obs}'$, we construct a matrix $M_S$. Define $A_S$ to be the
$|S|$-by-$d$ matrix over $\F$ where each row corresponds to
a distinct element of $S$ represented using some choice of bases, and let $M_S$ be a
$(|S|-\mathsf{rank}(A_S))$-by-$|S|$ matrix over $\F$ such that $M_SA_S = 0$ and $\mathsf{rank}(M_S)
= |S| - \mathsf{rank}(A_S)$.  Let $\calf_\calp = \{M_S : S \in \mathsf{Obs}'\}$.  A similar argument
to above shows that $\calf_\calp$-freeness is equivalent to $\calp$.
\end{proof}

$S$ is $\eps$-far from $\calf$-free if $|\indic_S - \indic_T|_1 \geq \eps q^n$ for any $T$ that is
$\calf$-free.

\begin{theorem}\label{thm:main}
For every infinite family of equations $\calf = \{(E_1,\sigma_1),(E_2,\sigma_2),\dots,(E_i,\sigma_i),\dots\}$
  with each $E_i$ being a row vector $[1 ~1~\cdots~1]$ of size $k_i$ and $\sigma_i \in \zo^{k_i}$
  a $k_i$-tuple, there are   functions $N_\calf(\cdot)$, $k_\calf(\cdot)$ and
  $\delta_\calf(\cdot)$ such  that the following is true for any $\eps \in (0,1)$.  If a subset
  $S \subseteq \F_2^n$ with $n > N_\calf(\eps)$ is $\eps$-far from being $\calf$-free, then $S$
  induces $\delta \cdot 2^{n(k_i-1)}$ many copies of some $(E_i,\sigma_i)$, where $k_i \leq k_\calf(\eps)$
  and $\delta \geq \delta_\calf(\eps)$.
\end{theorem}

}

\ignore{
\section{Using reduction to hypergraphs}

\begin{theorem}
Suppose we are given a $m$-by-$k$ matrix $M$ over $\F$ of rank $m$ and a tuple $\sigma= (\sigma(1),
\dots, \sigma(k)) \in \zo^k$. Then, for every $\eps > 0$, there is a $\delta = \delta(\eps,m,k)$
with the following property: if $S \subseteq \F^n$ is $\eps$-far from $(M,\sigma)$-free, then
$(M,\sigma)$ occurs at least $\delta q^{n(k-m)}$ times in $S$.
\end{theorem}
\begin{proof}
We apply the results of [Kr\'al-Serra-Vena '09] to obtain a hypergraph
representation of the linear system. More precisely, we obtain an $(m+1)$-uniform $k$-colored
hypergraph $K = ([k], E_K)$ with each of the $k$ edges in $E_K$ colored with a distinct element of
$[k]$, and a $(m+1)$-uniform $k$-partite $k$-colored hypergraph $H = (\F^n \times [k], E_H)$ with each edge in
$E_H$ colored with an element $i \in [k]$ and labeled with an element from $S_i$ where $S_i = S$ if
$\sigma(i) = 1$ and $S_i = \bar{S}$ if $\sigma(i) = 0$.  These hypergraphs $K$ and $H$ have the
following two properties:
\begin{itemize}
\item
If $K$ occurs as a colored (not necessarily induced) subhypergraph in $H$, then $S$ contains
$(M,\sigma)$ at $x = (x_1,\dots,x_k)$ where $x_i$ is the label of the edge colored by $i$ in $K$.
\item
For every $x = (x_1,\dots,x_k)$ where $S$ contains $(M,\sigma)$, there are exactly $q^{nm}$
edge-disjoint colored (not necessarily induced) copies of $K$ in $H$.
\end{itemize}

Suppose $(M,\sigma)$ occurs at most $o(q^{n(k-m)})$ times in $S$.  Then, by the second fact above,
there are $o(q^{nk})$ colored copies of $K$ in $H$.  By the removal theorem for hypergraphs, there
is a set of edges $E \subseteq E_H$ of size $o(q^{n(m+1)})$ such that by removing the edges $E$ from
$H$, the resulting hypergraph contains no copy of $K$.  Now, using $E$, we specify sets $S_{del}
\subseteq S$ and $S_{add} \subseteq \bar{S}$ such that $(S \backslash S_{del}) \cup S_{add}$ is
$(M,\sigma)$-free.  An element $x$ belongs to $S_{del}$ if $E$ contains at least $q^{nm}/k$ edges
that are labeled by $x$ and that are colored with $i$ with $\sigma(i) = 1$.  Similarly, an element
$x$ belongs to $S_{add}$ if $E$ contains at least $q^{nm}/k$ edges that are labeled by $x$ and that
are colored with $i$ with $\sigma(i) = 0$.  Now, we argue that $S' = (S \backslash S_{del}) \cup S_{add}$
is $(M,\sigma)$-free. To see this, notice that
\end{proof}

Our main theorem is the following:

\begin{theorem}
For a positive integer $r$, suppose $\calf = \{(M_i,\sigma_i) : i \in [r]\}$ be a set of $r$
forbidden induced linear equations.  Each $M_i \in \F^{m_i \times  k_i}$ of rank $m_i$ and $\sigma_i
\in \zo^{k_i}$.  Then, for every $\eps > 0$,
there is a $\delta = \delta(\eps,r, \max_i k_i) > 0$ with the following property: if $S \subseteq \F^n$ is
$\eps$-far from $\calf$-free, there is an $i \in [r]$ such that $(M_i,\sigma_i)$ occurs at least
$\delta q^{n(k_i - m_i)}$ times in $S$.
\end{theorem}
\begin{proof}
Our approach is similar to that in [Shapira '09] and [Kr\'al-Serra-Vena '09].  We will construct a ``large'' hypergraph $H$ and
a collection of ``small'' hypergraphs $K_i$ (corresponding to each $(M_i,\sigma_i)$) such that
$(M_i,\sigma_i)$ being contained in $S$ corresponds to the presence of colored copies of
$K_i$ as sub-hypergraphs in $H$.

\ignore{
Let us fix some notation first.  An $r$-uniform hypergraph is {\em simple} if it has no parallel
edges, that is, if different edges contain different subsets of vertices of size $r$.  A {\em
  non-edge} in an $r$-uniform hypergraph $H = (V_H,E_H)$ is a set of $r$ distinct vertices
$\{v_1,\dots,v_r\} \subseteq V_H$ that is not one of the edges $E_H$.  A set of
vertices $U$ in an $r$-uniform hypergraph $H = (V_H, E_H)$ spans a copy of an $r$-uniform hypergraph
$K = (V_K, E_K)$ if there is an injective mapping $\phi$ from $V_K$ to $U$ such that if
$v_1,\dots,v_r$ form an edge in $K$, then $\phi(v_1),\dots,\phi(v_r)$ form an edge in $U \subseteq
V_H$.  }

[Consider $|\calf| = 1$ for now.]

\end{proof}

Note that Reed-Muller codes are contained in this class of properties.  Possibly can extend class of
properties here to $\forall\exists$ properties, as in [Alon-Fischer-Krivelevich-Szegedy '00].
Possibly show analog of Theorem 4 in [Alon-Shapira '05], that there is a one-sided hierarchy in
terms of dependence of $\eps$ for such properties.
}

\section{Pseudorandom Partitions of the Hypercube \label{sec:uniformity}}

The {\em support} of a Boolean function $f$ refers to the subset of the domain on which $f$
evaluates to $1$.
If $H$ is a subspace of $\F_2^n$ and given function $f: H\to \zo$, let $\rho(f)$, the {\em density} of $f$,
denote $\frac{\sum_{x \in H} f(x)}{|H|}$.  Recall that the Fourier coefficients of $f$, defined for
each  $\alpha \in H^*$, are:
\begin{equation*}
\widehat{f}(\alpha) = \E_{x \in H} \left[f(x)\cdot (-1)^{\langle x,\alpha\rangle}\right]
\end{equation*}
For a parameter $\eps \in (0,1)$, we say $f$ is {\em $\eps$-uniform} if $\max_{\alpha \neq 0}
|\widehat{f}(\alpha)| < \eps$.  This definition captures the notion of correlation with a  linear
function on $H$, and it will serve as our definition of pseudorandomness.

Given a function $f: \F_2^n \to \zo$, a subspace $H \leq \F_2^n$ and an element $g \in \F_2^n$, define
the function $f_H^{+g} : H \to \zo$ to be $f_H^{+g}(x) = f(x+g)$ for $x \in H$.  The
support of $f_H^{+g}$ represents the intersection of the support of $f$ with the coset $g + H$.  The following lemma
shows that if a uniform function is restricted to a coset of a subspace of low codimension, then the
restriction does not become too non-uniform and its density stays roughly the same.

\begin{lemma}\label{lem:coset-uniformity}
Let $f: \F_2^n \to \zo$ be an $\eps$-uniform function of density $\rho$, and let $H\leq \F_2^n$ be a
subspace of codimension $k$. Then for any $c\in \F_2^n$, the function $f_H^{+c}: H \to \zo$ is
$(2^k \eps)$-uniform and of density $\rho_c$ satisfying $|\rho_c-\rho|< 2^k \epsilon$.
\end{lemma}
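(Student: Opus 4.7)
The plan is to express each Fourier coefficient of the restriction $f_H^{+c}$ on $H$ as a short sum of Fourier coefficients of $f$ on $\F_2^n$, and then bound each summand using the hypotheses. The key tool is the orthogonality of characters: if $\widetilde{\alpha} \in \F_2^n$ agrees with $\alpha \in H^*$ on $H$, and $\beta \in \F_2^n$, then $\E_{x \in H}[(-1)^{\langle x, \widetilde{\alpha}+\beta\rangle}]$ equals $1$ when $\widetilde{\alpha}+\beta \in H^\perp$ and $0$ otherwise, where $H^\perp \leq \F_2^n$ is the annihilator of $H$ and has size $2^k$.

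First I would apply Fourier inversion on $\F_2^n$ to write $f(x+c) = \sum_{\beta \in \F_2^n}\widehat{f}(\beta)(-1)^{\langle c,\beta\rangle}(-1)^{\langle x,\beta\rangle}$, substitute into $\widehat{f_H^{+c}}(\alpha) = \E_{x \in H}[f(x+c)(-1)^{\langle x,\alpha\rangle}]$, and swap sum and expectation. Using the orthogonality remark above with any fixed extension $\widetilde{\alpha}$ of $\alpha$, this collapses to
\[
\widehat{f_H^{+c}}(\alpha) \;=\; \sum_{\gamma \in H^\perp} \widehat{f}(\widetilde{\alpha}+\gamma)\,(-1)^{\langle c,\widetilde{\alpha}+\gamma\rangle},
\]
a sum of $|H^\perp| = 2^k$ terms.

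Next I would handle the two claims separately. For the density claim, take $\alpha = 0$ (so one may choose $\widetilde{\alpha} = 0$): the $\gamma = 0$ term contributes exactly $\widehat{f}(0) = \rho$, and each of the remaining $2^k-1$ terms has $|\widehat{f}(\gamma)| < \eps$ by uniformity of $f$, giving $|\rho_c - \rho| < (2^k - 1)\eps < 2^k\eps$. For the uniformity claim, take any $\alpha \neq 0$ in $H^*$; one checks that the extension $\widetilde{\alpha}$ is not in $H^\perp$ (since some $x \in H$ has $\langle x,\widetilde{\alpha}\rangle = \langle x,\alpha\rangle = 1$), so all $2^k$ frequencies $\widetilde{\alpha}+\gamma$ are nonzero in $\F_2^n$, and each of the $2^k$ summands is bounded in absolute value by $\eps$; the triangle inequality then gives $|\widehat{f_H^{+c}}(\alpha)| < 2^k\eps$.

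There is no real obstacle here; the only care needed is the bookkeeping in choosing a concrete extension $\widetilde{\alpha}$ of $\alpha$ and verifying that for $\alpha \neq 0$ none of the $2^k$ shifted frequencies is the zero frequency of $\F_2^n$, which would otherwise contribute a term of size $\rho$ rather than $\eps$. The strict inequalities in the conclusion come for free from the strict inequality in the definition of $\eps$-uniformity.
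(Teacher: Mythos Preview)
Your proof is correct and arrives at exactly the same key identity as the paper, namely $\widehat{f_H^{+c}}(\alpha) = \sum_{\gamma \in H^\perp} (-1)^{\langle c,\widetilde{\alpha}+\gamma\rangle}\,\widehat{f}(\widetilde{\alpha}+\gamma)$, followed by the same triangle-inequality bound. The only cosmetic difference is that the paper first expresses $\widehat{f}(\beta+\alpha)$ as an average over cosets of the restricted coefficients and then inverts that relation via orthogonality on $H^\perp$, whereas you obtain the identity in one step from Fourier inversion on $\F_2^n$; your route is slightly more direct but the content is the same.
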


\begin{proof}
Let  $H^{\perp}=\{\alpha\in \F_2^n |~\langle\alpha,  h\rangle=0~ \forall h\in H\}$ be the dual to
the vector space $H$, and let $H'=\F_2^n/H$ be the quotient of $H$ in $\F_2^n$.
We wish to show that, for every $c\in H'$, the Fourier coefficients of   $f_H^{+c}$ are small.

For every $\beta\in \F_2^n/H^{\perp}$ and $\alpha \in H^{\perp}$:
\begin{align*}
{\widehat f}(\beta+\alpha) = \E_{x \in \F_2^n} ~\left[f(x) \chi_{\beta+\alpha}(x)   \right]
                         = \E_{c'\in H'} ~ \E_{h\in H}{f_H^{+c'}}(h) \chi_{\beta+\alpha}(c'+h)
                         &= \E_{c'\in H'} \chi_{\beta+\alpha}(c') \E_{h\in H}{f_H^{+c'}}(h) \chi_{\beta}(h) \\
                         &= \frac{1}{2^k} \sum_{c'\in H'} \chi_{\beta+\alpha}(c'){\widehat{f}_H^{+c'}}(\beta)
\end{align*}
Recall that  $\sum_{\alpha\in H^{\perp}} \chi_{\alpha}(c')=\begin{cases} 0, \mbox{ if } c'\not = 0\\
1, \mbox{ if } c'=0
.\end{cases}
$
Fixing  $\beta\in \F_2^n/H^\perp$ and $c\in H'$ and summing up the quantity computed above over all $\alpha\in H^{\perp}$, we obtain
\begin{eqnarray*}
2^k \left( \sum_{\alpha \in {H^{\perp}}} \chi_{\beta+\alpha}(c) {\widehat f}(\beta+\alpha) \right )&=& \sum_{c'\in H'} \sum_{\alpha \in {H^{\perp}}} \chi_{\beta+\alpha}(c+c')   {\widehat f}_H^{+c'}(\beta)\\
&=& \sum_{\alpha \in {H^{\perp}}}  \chi_{\beta+\alpha}(0){\widehat f}_H^{+c}(\beta)+ \sum_{c'\in H'-\{c\}}\sum_{\alpha \in {H^{\perp}}} \chi_{\beta+\alpha}(c+c')   {\widehat f}_H^{+c'}(\beta)\\
&=& 2^k  {\widehat f}_H^{+c}(\beta) + \sum_{c'\in H'-\{0\}} \sum_{\alpha \in {H^{\perp}}} \chi_{\beta+\alpha}(c') {\widehat f}_H^{+c'+c}(\beta)\\
&=&  2^k {\widehat f}_H^{+c}(\beta) + \sum_{c'\in H'-\{0\}} \chi_{\beta}(c') \left(\sum_{\alpha \in {H^{\perp}}} \chi_{\alpha}(c')\right)  {\widehat f}_H^{+c'+c}(\beta)\\
&=&  2^k {\widehat f}_H^{+c}(\beta).
\end{eqnarray*}

Furthermore,
\begin{eqnarray*}
\left|  {\widehat f_H^{+c}}(\beta) \right | = \left| \sum_{\alpha \in {H^{\perp}}} \chi_{\beta+\alpha}(c) {\widehat f}(\beta+\alpha) \right |
                                               \leq  \sum_{\alpha \in {H^{\perp}}} \left|  \chi_{\beta+\alpha}(c) {\widehat f}(\beta+\alpha)  \right|
                                               = \sum_{\alpha \in {H^{\perp}}} \left| {\widehat f}(\beta+\alpha)  \right|
\end{eqnarray*}

Since $f$ is $\epsilon$-uniform, setting $\beta=0$ in the above inequality shows that $|\rho_c - \rho| \leq \sum_{0 \neq
  \alpha \in H^\perp} |\widehat{f}(\alpha)| < 2^k \eps$.  For nonzero $\beta$ in
$\F_2^n/H^\perp$, it follows again from $\eps$-uniformity that $|{\widehat f_H^{+c}}(\beta)|< 2^k \eps$.
\end{proof}

For a subspace $H \leq \F_2^n$, the {\em $H$-based partition} refers to the partitioning of $\F_2^n$ into
the cosets in $\F_2^n/H$.   If $H' \leq H$, then the $H'$-based partition is called a {\em refinement} of the
$H$-based partition.  The {\em order} of the $H$-based partition is defined to be $[G : H]$, i.e., the
index of $H$ as a subgroup or the dimension of the quotient space $\F_2^n/H$.  Using this
notation, Green's regularity lemma can be stated as follows.

\begin{lemma}[Green's Regularity Lemma \cite{Green05}]\label{lem:regularity}
For every $m$ and $\eps > 0$, there exists $T = T_{\ref{lem:regularity}}(m,\eps)$ such that the
following is true.   Given function $f : \F_2^n \to \zo$ with $n > T$ and $H$-based partition of $\F_2^n$
with order at most $m$, there exists a refined $H'$-based partition of order $k$, with $m \leq k
\leq T$, for which $f_{H'}^{+g}$ is not $\eps$-uniform for at most $\eps 2^n$ many $g \in \F_2^n$.
\end{lemma}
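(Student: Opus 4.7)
The plan is to run a Fourier-analytic energy-increment, in the spirit of Szemer\'edi's regularity proof but using the index $\ind(H) := \sum_{\alpha \in H^\perp} |\widehat f(\alpha)|^2$. A direct Parseval calculation (closely related to the manipulations in the proof of Lemma~\ref{lem:coset-uniformity}) shows that $\ind(H) = \E_{c \in \F_2^n/H} \rho(f_H^{+c})^2 \in [0,1]$, and whenever $H' \le H$ we have $H^\perp \subseteq (H')^\perp$ and the monotonicity identity
\[
\ind(H') - \ind(H) \;=\; \sum_{\gamma\in (H')^\perp \setminus H^\perp} |\widehat f(\gamma)|^2 \;\ge\; 0
\]
holds trivially.

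The engine of the argument is a single-step refinement: \emph{if the $H$-partition, of codimension $k$, fails to be $\eps$-regular, then there exists $H' \le H$ of codimension at most $k + 2^k$ with $\ind(H') \ge \ind(H) + \eps^3$.} To construct $H'$, let $B \subseteq \F_2^n/H$ denote the set of bad cosets, so $|B| > \eps \cdot 2^k$. For each $c \in B$, pick a nonzero character $\beta_c \in H^*$ with $|\widehat{f_H^{+c}}(\beta_c)| \ge \eps$, together with an arbitrary lift $\tilde\beta_c \in \F_2^n$. Define $(H')^\perp := H^\perp + \mspan\{\tilde\beta_c : c\in B\}$; then $\mathrm{codim}(H') \le k + |B| \le k + 2^k$, a bound depending only on $k$ (not on $n$).

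To lower-bound the energy gain I will reuse the identity $\widehat{f_H^{+c}}(\beta) = \sum_{\alpha\in H^\perp} \chi_{\beta+\alpha}(c)\,\widehat f(\beta+\alpha)$ derived in the proof of Lemma~\ref{lem:coset-uniformity}, together with its averaged corollary $\E_c |\widehat{f_H^{+c}}(\beta)|^2 = \sum_{\alpha\in H^\perp} |\widehat f(\tilde\beta+\alpha)|^2$, which is just orthogonality of characters on $\F_2^n/H$. Let $D$ be the set of distinct values of $\beta_c$ for $c\in B$, and partition $B = \bigsqcup_{d \in D} B_d$ accordingly. The cosets $\tilde d + H^\perp$ for $d \in D$ are pairwise disjoint and all lie in $(H')^\perp \setminus H^\perp$, so
\[
\ind(H') - \ind(H) \;\ge\; \sum_{d\in D}\sum_{\alpha\in H^\perp} |\widehat f(\tilde d+\alpha)|^2 \;=\; \frac{1}{2^k} \sum_{d\in D} \sum_{c\in \F_2^n/H} |\widehat{f_H^{+c}}(d)|^2 \;\ge\; \frac{1}{2^k} \sum_{c\in B} |\widehat{f_H^{+c}}(\beta_c)|^2 \;>\; \eps^3,
\]
where the final inequality uses $|\widehat{f_H^{+c}}(\beta_c)|^2 \ge \eps^2$ and $|B| > \eps\cdot 2^k$.

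The rest is iteration: starting from the given $H_0$ of codimension at most $m$, while the $H_i$-partition fails $\eps$-regularity, apply the one-step refinement to produce $H_{i+1}$. Each step buys $\eps^3$ in the index, and since $\ind(\cdot) \le 1$ the loop halts after at most $\lceil \eps^{-3}\rceil$ steps, with final codimension bounded by $T(m,\eps)$ obtained by iterating the map $k \mapsto k + 2^k$ about $\eps^{-3}$ times from $k_0 = m$. The hypothesis $n > T$ ensures the constructed subspace actually lives inside $\F_2^n$. The main obstacle to watch is that each refinement must simultaneously deliver a constant-in-$\eps$ jump in the energy \emph{and} keep the codimension growth purely a function of $k$ and $\eps$; this is exactly where the grouping-by-$\beta_c$ calculation enters, since $|D| \le |B| \le 2^k$ caps the number of distinct characters one needs to add, no matter how large the ambient dimension $n$ is.
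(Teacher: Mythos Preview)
The paper does not actually prove this lemma; it is quoted from Green's paper \cite{Green05} and used as a black box (the only comment, hidden in an \texttt{\textbackslash ignore} block, is that starting the iteration from the given $H$ rather than the trivial partition is a cosmetic change). So there is no in-paper proof to compare against.

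Your argument is the standard Fourier-analytic energy increment and is correct. The index you define coincides with the paper's $\ind(f,H)$ from the proof of Lemma~\ref{lem:functionalreg}, the Parseval identity $\E_c|\widehat{f_H^{+c}}(\beta)|^2=\sum_{\alpha\in H^\perp}|\widehat f(\tilde\beta+\alpha)|^2$ is exactly what falls out of the computation in Lemma~\ref{lem:coset-uniformity}, and the grouping-by-$\beta_c$ trick correctly delivers the $\eps^3$ gain while keeping the codimension growth bounded by $k\mapsto k+2^k$. One cosmetic loose end: the lemma as stated asks for the final order $k$ to satisfy $k\ge m$, whereas your iteration could in principle terminate at the very first step with $\mathrm{codim}(H_0)<m$ if the given partition is already $\eps$-regular. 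This is handled by first refining $H_0$ arbitrarily to codimension exactly $m$ before starting the loop; you should say so explicitly.
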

\ignore{\begin{remark}
The only difference between Lemma \ref{lem:regularity} and the regularity lemma as formulated
in \cite{Green05} is that here, we require the final partition to be a refinement of a given
partition. It is easy to modify Green's proof to yield our lemma, merely by making the initial
partition of the iterative partitioning procedure be the $H$-based partition instead of the trivial
partition (all of $\F_2^n$).
\end{remark}}

Our main tool in this work is a functional variant of Green's regularity lemma, in which the
uniformity parameter $\eps$ is not a constant but rather an arbitrary function of the order of the partition.
It is quite analogous to a similar lemma, first proved in \cite{AFKS}, in the graph property testing
setting.  The recent work \cite{GT10} shows a (very strong) functional regularity lemma in
the arithmetic setting but it applies over the integers and not $\F_2$.

\begin{lemma}[Functional regularity lemma]\label{lem:functionalreg}
For integer $m$ and function $\mathcal{E} : \Z^+ \to (0,1)$, there exists $T =
T_{\ref{lem:functionalreg}}(m,\mathcal{E})$ such that the following is true.  Given function $f :
\F_2^n \to \zo$ with $n \geq T$, there exist subspaces $H' \leq H \leq \F_2^n$ that satisfy:
\begin{itemize}
\item
Order of $H$-based partition is $k \geq m$, and order of $H'$-based partition is $\ell \leq T$.
\item
There are at most $\mathcal{E}(0)\cdot 2^n$ many $g \in \F_2^n$ such that $f_H^{+g}$ is not
$\mathcal{E}(0)$-uniform.
\item
For every $g \in \F_2^n$, there are at most $\mathcal{E}(k)\cdot 2^{n-k}$ many $h \in H$ such that
$f_{H'}^{+g+h}$ is not $\mathcal{E}(k)$-uniform.
\item
There are at most $\mathcal{E}(0)\cdot 2^n$ many $g \in \F_2^n$ for which there are more than
$\mathcal{E}(0)\cdot 2^{n-k}$ many $h \in H$ such that $|\rho(f_H^{+g}) - \rho(f_{H'}^{+g+h})| >
\mathcal{E}(0)$.
\end{itemize}
\end{lemma}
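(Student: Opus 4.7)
My plan is to prove Lemma \ref{lem:functionalreg} via an iterative refinement driven by an energy-increment argument, in direct analogy with the functional regularity lemma of Alon, Fischer, Krivelevich and Szegedy in the graph setting. The key quantity is the energy $E(H) := \E_{g \in \F_2^n}[\rho(f_H^{+g})^2] \in [0,1]$, which is monotone non-decreasing under subspace refinement $H' \leq H$. A short Fourier computation (expand $f_H^{+g}$ on $H$ and average over cosets of $H'$ in $H$) yields the identity
\begin{equation*}
E(H') - E(H) = \E_g\big[\Var_{h \in H/H'}\rho(f_{H'}^{+g+h})\big] = \E_g \sum_{0 \neq \alpha \in (H/H')^*} \big|\widehat{f_H^{+g}}(\alpha)\big|^2,
\end{equation*}
which links the energy gain of a refinement to the average Fourier mass that the refinement absorbs.

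The construction then proceeds in two phases. In phase one, I would iteratively apply Lemma \ref{lem:regularity} to produce a coarse subspace $H$ of order $k \geq m$ that is simultaneously (i) $\mathcal{E}(0)$-regular in the sense of Lemma \ref{lem:regularity}, immediately giving Conclusion 2, and (ii) \emph{energy-stable}, meaning that no further Green-regular refinement can boost $E$ by more than $\eta := \mathcal{E}(0)^4/10$. Stability is enforced by simply replacing $H$ with such a refinement whenever the energy jump exceeds $\eta$; since $E \leq 1$, at most $O(1/\eta)$ such replacements are possible, so the loop halts with $k$ bounded by a tower function of $m$ and $\mathcal{E}(0)$. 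In phase two, I would apply Lemma \ref{lem:regularity} separately inside each of the $2^k$ cosets $g+H$ to the restriction $f_H^{+g}:H \to \zo$, with uniformity parameter $\delta := \mathcal{E}(k) \cdot 2^{-L}$ where $L$ is an a priori upper bound on the eventual codimension of $H'$ inside $H$, obtaining subspaces $H'_g \leq H$, and then set $H' := \bigcap_g H'_g$.

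Conclusion 3 follows because $f_{H'_g}^{+g+h}$ is $\delta$-uniform outside a $\delta$-fraction of $h$'s, and Lemma \ref{lem:coset-uniformity} converts this to $2^L\delta = \mathcal{E}(k)$-uniformity of the further restriction $f_{H'}^{+g+h}$. Conclusion 4 falls out of energy-stability of $H$: since $H'$ refines $H$, the Fourier identity above gives $\E_g[\Var_h \rho(f_{H'}^{+g+h})] = E(H') - E(H) \leq \eta = \mathcal{E}(0)^4/10$; a Markov bound over $g$ at threshold $\mathcal{E}(0)^3$ followed by Chebyshev over $h$ at threshold $\mathcal{E}(0)$ then yields the required two-level quantifier. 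The main obstacle I expect is the circular parameter dependency: $\delta$ is defined in terms of $L$, but $L$ is the codimension $\ell - k$ that is in turn determined by $\delta$ through the tower-bound $T_{\ref{lem:regularity}}(0,\delta)$. I would resolve this by fixing $L$ up front as a sufficiently large tower in $k$ and $1/\mathcal{E}(k)$ and only then defining $\delta$ from $L$; since $T_{\ref{lem:regularity}}$ is already tower-type, absorbing another layer leaves the final bound $T_{\ref{lem:functionalreg}}(m,\mathcal{E})$ bounded. A secondary subtlety is that Lemma \ref{lem:regularity} only guarantees regularity globally, not per-coset of $H$, which is precisely why phase two applies it separately inside each coset and pays the $2^L$ price when intersecting via Lemma \ref{lem:coset-uniformity}.
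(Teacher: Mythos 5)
Your energy-increment framework and your treatment of the fourth conclusion (the identity $E(H')-E(H)=\E_g\left[\Var_{h\in H/H'}\,\rho(f_{H'}^{+g+h})\right]$ followed by Markov over $g$ and Chebyshev over $h$, using $\E_h[\rho(f_{H'}^{+g+h})]=\rho(f_H^{+g})$) are sound and essentially equivalent to Claims \ref{clm:defect} and \ref{clm:indextodensity}. The genuine gap is in phase two. For Lemma \ref{lem:coset-uniformity} to upgrade $\delta$-uniformity of $f_{H'_g}^{+g+h}$ with $\delta=\mathcal{E}(k)2^{-L}$ into $\mathcal{E}(k)$-uniformity of $f_{H'}^{+g+h}$, you need the codimension of $H'$ inside $H'_g$ to be at most $L$. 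But each $H'_g$ produced by Lemma \ref{lem:regularity} with parameter $\delta$ is only guaranteed to have codimension at most $T_{\ref{lem:regularity}}(\cdot,\delta)$ inside $H$, a tower-type quantity in $1/\delta=2^{L}/\mathcal{E}(k)$, and after intersecting over the $2^k$ cosets the codimension of $H'=\bigcap_g H'_g$ can be of order $2^k\cdot T_{\ref{lem:regularity}}(\cdot,\mathcal{E}(k)2^{-L})$. So your proposed fix of choosing $L$ ``up front as a sufficiently large tower'' does not close the circle: you would need $L\geq 2^k\,T_{\ref{lem:regularity}}(\cdot,\mathcal{E}(k)2^{-L})$, and the right-hand side grows far faster in $L$ than the left, so no choice of $L$ works; the step $2^L\delta=\mathcal{E}(k)$ is exactly where the argument breaks. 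A secondary, fixable issue is that the bound $E(H')-E(H)\leq\eta$ used for Conclusion 4 is only justified if the energy-stability of $H$ is enforced against the specific refinement $H'$ you actually construct, i.e., the two phases must be interleaved into a single stopping rule rather than run sequentially.

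The way around this, and the route the paper takes, is to avoid per-coset regularization and coset-restriction losses altogether. Iterate Lemma \ref{lem:regularity} applied \emph{globally} to $f$, choosing at each step the uniformity parameter $\mathcal{E}(t)\cdot 2^{-t}$ where $t$ is an upper bound on the codimension $k$ of the \emph{current} coarse subspace $H$ --- a quantity already known when the parameter is chosen, so there is no circularity. The global guarantee then says that at most $\mathcal{E}(k)2^{-k}\cdot 2^n=\mathcal{E}(k)2^{n-k}$ elements $g$ lie in non-uniform cosets of $H'$, and since this global count already equals the bound demanded in the third conclusion, it trivially bounds the number of bad $h\in H$ for \emph{every} fixed $g$; no application of Lemma \ref{lem:coset-uniformity} is needed in this lemma at all. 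Stopping the iteration when the index increment between consecutive subspaces drops below $\mathcal{E}(0)^4/2$ then yields the fourth conclusion exactly as you argued, and the first two conclusions come from the order bounds and uniformity guarantee of the last two applications of Lemma \ref{lem:regularity}.
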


\begin{proof}
Let us first give an informal overview of the proof.  The basic idea is to repeatedly apply Lemma
\ref{lem:regularity}, at each step refining the partition obtained in the previous step.  At each
step, Lemma \ref{lem:regularity} is applied with a uniformity parameter that depends on the order of
the partition obtained in the previous step.  We stop
when the {\em index} of the partitions stop increasing substantially.  Given a subspace $H$, the
index of the $H$-based partition is defined to be the variance of the densities in the cosets:
\begin{equation*}
\ind(f,H)~\eqdef~ \frac{1}{2^n} \sum_{g \in \F_2^n} \rho^2(f_H^{+g})
\end{equation*}
We show that when the indexes of two successive partitions are close, then on average, each coset of
the finer partitioning has roughly the same density as the coset of the coarser partitioning it is
contained in.

To implement the above ideas, we need the following two claims about the index of partitions.  Their
proofs are essentially identical to those for the corresponding Lemmas 3.6 and 3.7
respectively in \cite{AFKS}, and so we are a bit brief in the following.
\begin{claim}\label{clm:defect}
Given subspace $H \leq \F_2^n$ and function $f : \F_2^n \to \zo$, suppose that there are at least $\eps
2^n$ many $g \in \F_2^n$ such that $|\rho(f) - \rho(f_H^{+g})| > \eps$.  Then:
\begin{equation*}
\ind(f, H) > \rho^2(f) + \frac{\eps^3}{2}
\end{equation*}
\end{claim}
\begin{proof}
Observe that the average of $\rho(f_H^{+g})$ over all $g \in
\F_2^n$ equals $\rho(f)$.  From our assumptions, either there are $\frac{\eps}{2}2^n$ many $g \in
\F_2^n$ such that $\rho(f) - \rho(f_H^{+g}) > \eps$ or there are  $\frac{\eps}{2}2^n$ many
$g \in \F_2^n$ such that $\rho(f) - \rho(f_H^{+g}) < -\eps$.  For either case, we
can use the defect form of the Cauchy-Schwarz inequality to prove our claim.
\end{proof}

\begin{claim}\label{clm:indextodensity}
For function $f : \F_2^n \to \zo$ and subspaces $H' \leq H \leq \F_2^n$, suppose the $H$-based
partition of order $k$ and its refinement, the $H'$-based partition, of order $\ell$ satisfy
$\ind(f,H') - \ind(f,H) \leq \frac{\eps^4}{2}$ for some $\eps$.  Then, there are at most $\eps 2^n$
many $g \in \F_2^n$ for which there are more than $\eps 2^{n-k}$ many $h \in H$ satisfying
$|\rho(f_H^{+g}) - \rho(f_{H'}^{+g+h})| > \eps$.
\end{claim}
\begin{proof}
Suppose that there are $> \eps 2^n$ many $g \in \F_2^n$ such that there are $> \eps 2^{n-k}$ many $h
\in H$ satisfying $|\rho(f_H^{+g}) - \rho(f_{H'}^{+g+h})| > \eps$.  Use Claim \ref{clm:defect} to
obtain a contradiction:
\begin{align*}
\ind(f,H')
= \frac{1}{2^\ell} \sum_{u \in \F_2^n/H'} \rho^2(f_{H'}^{+u})
&= \frac{1}{2^k} \sum_{v \in \F_2^n/H} \frac{1}{2^{\ell-k}} \sum_{h \in H/H'}\rho^2(f_{H'}^{+v+h})\\
&= \frac{1}{2^k} \sum_{v \in \F_2^n/H} \ind(f_H^{+v})\\
&> \frac{1}{2^k} \left(\sum_{v \in \F_2^n/H} \rho^2(f_H^{+v}) + \eps\cdot 2^k
  \frac{\eps^3}{2}\right) \\
&= \ind(f,H) + \frac{\eps^4}{2}
\end{align*}
\end{proof}

Now we have the pieces needed to prove the lemma.  We can assume $\mathcal{E}(\cdot)$ is monotone
non-increasing.  Let $\eps = \mathcal{E}(0)$.  We define $T$ inductively as follows.  Let $T^{(1)} =
T_{\ref{lem:regularity}}(m,\eps)$, and for $i > 1$, let:
\begin{equation*}
T^{(i)} = T_{\ref{lem:regularity}}\left(T^{(i-1)},
  \mathcal{E}\left(T^{(i-1)}\right)\cdot 2^{-T^{(i-1)}} \right)
\end{equation*}
Set $T = T_{\ref{lem:functionalreg}}(m,\mathcal{E}) \eqdef T^{(2\eps^{-4} + 1)}$.

We now show that this choice of $T$ suffices.  Given function $f : \F_2^n \to \zo$, apply Lemma
\ref{lem:regularity} with $m$ and $\eps$ to get a subspace $H_1$, and thereafter repeatedly apply it
to get a sequence of finer subspaces $H_2, H_3, H_4, \dots$, with $H_1 \geq H_2 \geq H_3 \geq H_4
\geq \cdots$, by invoking Lemma \ref{lem:regularity} at each step $i > 1$ with $T^{(i-1)}$ and
$\mathcal{E}\left(T^{(i-1)}\right)\cdot 2^{-T^{(i-1)}}$ as the two input parameters.  Stop when
$\ind(f, H_{i+1})  - \ind(f, H_i) < \frac{\eps^4}{2}$.  This happens when $i$ is at most $2\eps^{-4} + 1$ because the
index of any partition is less than $1$.  Let $H = H_i$ and $H' = H_{i+1}$.  It's clear that the
codimension $k$ of $H$ at least $m$ and that the codimension $\ell$ of $H'$  is at most $T$.  The
second item in the lemma follows from the uniformity guarantee of Lemma \ref{lem:regularity}
and from the fact that $\mathcal{E}(T^{(i-1)}) < \mathcal{E}(0)$.  For the third, note that Lemma
\ref{lem:regularity} guarantees that there are at most $\mathcal{E}(k) 2^{-k} 2^n = \mathcal{E}(k)
2^{n-k}$ values of $g \in \F_2^n$ such that $f_{H'}^{+g}$ is not $(\mathcal{E}(k) 2^{-k})$-uniform
and, hence, not $\mathcal{E}(k)$-uniform.  So, clearly, there are at most so many $g$ contained in
any coset of $H$.   Finally, the fourth item follows from Claim \ref{clm:indextodensity}. This completes the proof
of Lemma \ref{lem:functionalreg}.
\end{proof}

We use Lemma \ref{lem:functionalreg} in two main ways.  For one of them, we use the lemma directly.
For the other, we use the following simple but extremely useful corollary which allows us to say
that there are many cosets in a partitioning which, on the one hand, are {\em all} uniform, and on
the other hand, are arranged in an algebraically nice structure.

\begin{corollary}\label{cor:allreg}
For every $m$ and $\mathcal{E} : \Z^+ \to (0,1)$, there exist $T =
T_{\ref{cor:allreg}}(m,\mathcal{E})$ and $\delta = \delta_{\ref{cor:allreg}}(m, \mathcal{E})$ such
that the following is true.  Given function $f : \F_2^n \to \zo$ with $n \geq T$, there exist
subspaces $H' \leq H \leq \F_2^n$ and an injective linear map $I : \F_2^n/H \to \F_2^n/H' $
such that:
\begin{itemize}
\item
The $H$-based partition is of order $k$, where $m \leq k \leq T$.  Additionally, $|H'| \geq \delta
2^n$.

\item
For each $u \in \F_2^n/H$, $I(u) + H'$ lies inside the coset $u + H$.  Note that $I(0) = 0$ since
$I$ is linear.

\item
For every nonzero $u \in \F_2^n/H$, the set $f_{H'}^{+I(u)}$ is $\mathcal{E}(k)$-uniform.

\item
There are at most $\mathcal{E}(0) 2^n$ many $g \in \F_2^n$ for which $|\rho(f_H^{+g}) -
\rho(f_{H'}^{+I(u)})| > \mathcal{E}(0)$ where $u = g \pmod H$.
\end{itemize}
\end{corollary}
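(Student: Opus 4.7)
My plan is to derive the corollary from Lemma \ref{lem:functionalreg} by constructing the linear section $I$ probabilistically. First I would apply the lemma with the shrunken function $\mathcal{E}'(j) := \mathcal{E}(j)\cdot 2^{-j-2}$ and with initial dimension $m' := \max\{m, \lceil \log_2(4/\mathcal{E}(0))\rceil\}$ in place of $m$; this yields subspaces $H'\leq H\leq \F_2^n$ of codimensions $k$ and $\ell$ with $m'\leq k\leq \ell\leq T := T_{\ref{lem:functionalreg}}(m',\mathcal{E}')$. Setting $\delta := 2^{-T}$ immediately gives $|H'|\geq \delta\cdot 2^n$, handling the first bullet, and $k\leq T$ handles the rest of it.

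Next I would construct $I$ at random: pick a basis $u_1,\dots,u_k$ of $\F_2^n/H$, independently choose each representative $I(u_i)$ so that its coset modulo $H'$ is uniform over the $2^{\ell-k}$ cosets of $H'$ lying inside $u_i+H$, and extend linearly by $I(\sum_{i\in S}u_i) := \sum_{i\in S}I(u_i)$. This automatically gives $I(0)=0$, the coset-inclusion bullet $I(u)+H'\subseteq u+H$, and injectivity (if $I(u)\in H'$ then $I(u)\in u+H$ forces $u=0$). The crucial distributional fact is that for every \emph{nonzero} $u=\sum_{i\in S}u_i$, the coset $I(u)+H'$ is uniform over the $2^{\ell-k}$ cosets of $H'$ inside $u+H$: fix any $i_0\in S$ and condition on the other $I(u_j)$'s; then $I(u_{i_0})$ alone remains uniform in its coset of $H/H'$ inside $\F_2^n/H'$, and adding any fixed element shifts this to the target coset.

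Finally I would verify the remaining two bullets with positive probability. Item (3) of Lemma \ref{lem:functionalreg}, after averaging over the $|H'|$ choices of $h\in H$ hitting each $H'$-coset, implies that within every $H$-coset at most an $\mathcal{E}'(k)$-fraction of the $H'$-cosets are non-$\mathcal{E}(k)$-uniform; combined with the distributional fact, $\Pr[I(u)\text{ is non-uniform}]\leq \mathcal{E}'(k)$ for each nonzero $u$, and a union bound over the $2^k-1$ nonzero $u$'s gives failure probability at most $(2^k-1)\mathcal{E}(k)2^{-k-2}<1/4$. For the density bullet, item (4) analogously says that outside at most an $\mathcal{E}'(0)$-fraction of ``coarse-bad'' $H$-cosets, only an $\mathcal{E}'(0)$-fraction of the inner $H'$-cosets have density deviating by more than $\mathcal{E}(0)$ from $\rho(f_H^{+u})$; linearity of expectation bounds the expected number of density-bad nonzero $u$ by $(\mathcal{E}(0)/2)\cdot 2^k$, and Markov upgrades this to a high-probability statement. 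The main obstacle is the deterministic case $u=0$: since $I(0)=0$ is forced and $f_{H'}$ itself may genuinely be density-bad, it could contribute up to $|H|=2^{n-k}$ stray $g$'s to the bad count; but the choice $m'\geq \log_2(4/\mathcal{E}(0))$ ensures $|H|\leq (\mathcal{E}(0)/4)\cdot 2^n$, which is comfortably absorbed inside the allowed slack of $\mathcal{E}(0)\cdot 2^n$. A concluding union bound over the two failure events establishes that a section $I$ with all required properties exists.
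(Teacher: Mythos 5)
Your proposal is correct and follows essentially the same route as the paper: apply Lemma \ref{lem:functionalreg} with a suitably shrunken error function (the paper uses $\min(\mathcal{E}(r),\mathcal{E}(0)/6,2^{-r-1})$), then choose the section $I$ at random on a basis of $\F_2^n/H$ and extend linearly, establishing the uniformity bullet by a union bound over nonzero $u$ and the density bullet by a first-moment/Markov argument. Your explicit enlargement of $m$ to $\max\{m,\lceil\log_2(4/\mathcal{E}(0))\rceil\}$ to absorb the forced choice $I(0)=0$ in the density count is a slightly more careful treatment of a point the paper handles only tersely, but it is the same argument.
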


\begin{proof}
We can assume $\mathcal{E}$ is a nonincreasing function.  Denote $\mathcal{E}(0)$ as $\eps$, and set
$\mathcal{E}'(r) = \min(\mathcal{E}(r),\frac{\eps}{6}, \frac{1}{2^{r+1}})$.  We will show that $T =
T_{\ref{cor:allreg}}(m,\mathcal{E}) \eqdef T_{\ref{lem:functionalreg}}(m,\mathcal{E}')$ and $\delta
= \delta_{\ref{cor:allreg}}(m,\mathcal{E}) \eqdef 1/2^T$ suffice for our proof.

Apply Theorem \ref{lem:functionalreg} with $m$ and the function $\mathcal{E}'$ as inputs.  Let $H$ and $H'$ be the
subspaces obtained there, for the given $f : \F_2^n \to \zo$.  We find $I$ satisfying the
conditions of the claim exists using the probabilistic method.

Fix $k$ linearly independent elements $u_1,\dots,u_k \in \F_2^n/H$ (viewing $\F_2^n/H$ as a vector
space over $\F_2$).  For every $i \in
[k]$, choose independently and uniformly at random an element $v$  from $H/H'$ and let $I(u_i)$
equal $u_i + v + H'$.  The value of $I$ over the rest of $\F_2^n/H$ is
determined by linearity, as the $u_i$'s form a basis for $\F_2^n/H$.  It's immediate that $I(u) +
H'$ lies inside $u+ H$ for every $u \in \F_2^n/H$.

Observe that unless $u = 0$, each $I(u)+H'$ is uniformly distributed among the cosets of $H'$ lying in $u+H$.  Hence, for any nonzero
$u$, the probability that $f_{H'}^{+I(u)}$ is not $\mathcal{E}(k)$-uniform is at most $1/2^{k+1}$, by our
choice of parameters.  Applying the union bound, the probability that there exists nonzero $u \in
\F_2^n/H$ such that $f_{H'}^{+I(u)}$ is not $\mathcal{E}(k)$-uniform is at most $1/2$.  Also, the expected number
of $g \in \F_2^n$, with $u = g \pmod H$, for which $|\rho(f_H^{+g}) - \rho(f_{H'}^{+I(u)})| > \eps$
is at most $\frac{\eps}{6} 2^n + \frac{\eps}{6}2^n + 1 \leq \frac{\eps}{2}2^n$, and hence by the
Markov inequality, with probability at least $\frac{1}{2}$, the number of $g \in \F_2^n$ satisfying
this condition is at most $\eps 2^n$. Therefore, there must exist a choice of $I$ making both the
third and fourth claims true.
\end{proof}

The next lemma is in a similar spirit to Corollary \ref{cor:allreg}.  It also obtains a set of
uniform cosets which are structured algebraically, but in this case, all of them are contained
inside the same subspace. 

\begin{lemma}\label{lem:allreginside}
For every positive integer $d$ and $\gamma \in (0,1)$, there exists $\delta =
\delta_{\ref{lem:allreginside}}(d,\gamma)$ such that the following is true.  Given $f: \F_2^n \to
\zo$, there exists a subspace $H \leq \F_2^n$ and a subspace $K$ of dimension $d$ in the quotient
space $\F_2^n/H$ with the following properties:
\begin{itemize}
\item
$|H| \geq \delta 2^n$.
\item
For every nonzero $u \in K$, $f_H^{+u}$ is $\gamma$-uniform.
\item
Either $\rho(f_H^{+u}) \geq \frac{1}{2}$ for every nonzero $u \in K$ or $\rho(f_H^{+u}) < \frac{1}{2}$
for every nonzero $u \in K$.
\end{itemize}
\end{lemma}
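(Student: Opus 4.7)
The plan is to combine the pseudorandom partition produced by Corollary \ref{cor:allreg} with a Ramsey-type theorem for $\F_2$-subspaces so as to pass to a sub-family of uniform cosets whose densities all lie on a single side of $\tfrac{1}{2}$. The Ramsey ingredient I will invoke (Theorem \ref{thm:ramsey} from the following section) is the following statement: for every $d \in \Z^+$, there exists $R(d) \in \Z^+$ such that any $2$-coloring of the nonzero elements of $\F_2^k$ with $k \geq R(d)$ admits a $d$-dimensional linear subspace all of whose nonzero elements receive the same color.

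First I would apply Corollary \ref{cor:allreg} to $f$ with parameters $m \eqdef R(d)$ and the constant function $\mathcal{E} \equiv \gamma$. This produces subspaces $H_0' \leq H_0 \leq \F_2^n$, an injective linear map $I : \F_2^n/H_0 \to \F_2^n/H_0'$, and an index $k = [\F_2^n : H_0] \geq R(d)$, such that $|H_0'| \geq \delta_0 \cdot 2^n$ with $\delta_0 \eqdef \delta_{\ref{cor:allreg}}(R(d), \gamma)$, and such that $f_{H_0'}^{+I(u)}$ is $\gamma$-uniform for every nonzero $u \in \F_2^n/H_0$. I then define a $2$-coloring $c$ of $\F_2^n/H_0 \setminus \{0\}$ by setting $c(u) = 1$ if $\rho(f_{H_0'}^{+I(u)}) \geq \tfrac{1}{2}$ and $c(u) = 0$ otherwise. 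Since $k \geq R(d)$, the Ramsey theorem supplies a $d$-dimensional subspace $K' \leq \F_2^n/H_0$ whose nonzero elements are monochromatic under $c$.

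To conclude, I set $H \eqdef H_0'$ and $K \eqdef I(K')$. Injectivity and linearity of $I$ imply that $K$ is a $d$-dimensional subspace of $\F_2^n/H$, and $|H| \geq \delta \cdot 2^n$ with $\delta \eqdef \delta_0$, verifying the first bullet. Every nonzero $v \in K$ is of the form $v = I(u)$ for some nonzero $u \in K'$, so $f_H^{+v} = f_{H_0'}^{+I(u)}$ is $\gamma$-uniform, verifying the second bullet. The monochromaticity of $K'$ under $c$ ensures that either $\rho(f_H^{+v}) \geq \tfrac{1}{2}$ for all nonzero $v \in K$ or $\rho(f_H^{+v}) < \tfrac{1}{2}$ for all nonzero $v \in K$, which is the third bullet.

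The only substantive step I expect to have to justify is the Ramsey-theoretic input, namely that any $2$-coloring of $\F_2^k \setminus \{0\}$ admits a monochromatic linear subspace of dimension $d$ once $k$ is sufficiently large. This is a special case of the Graham--Leeb--Rothschild theorem for vector spaces over $\F_2$ (applied to $2$-colorings of $0$-dimensional subspaces), and one could in principle give a self-contained inductive proof. The final quantitative dependence of $\delta_{\ref{lem:allreginside}}(d,\gamma)$ on $d$ is inherited entirely through the growth of $R(d)$ and the resulting value of $\delta_{\ref{cor:allreg}}(R(d),\gamma)$; no further loss enters.
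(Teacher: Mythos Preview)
Your proposal is correct, but it takes a somewhat different route from the paper's own argument. The paper does not go through Corollary~\ref{cor:allreg}; instead it applies Green's basic regularity lemma (Lemma~\ref{lem:regularity}) directly with parameters $r = N_{\ref{thm:ramsey}}(d)$ and $\min(2^{-r-2},\gamma)$, so that at most a $2^{-r-2}$ fraction of cosets of $H$ fail to be $\gamma$-uniform. It then invokes the Tur\'an-type result for subspaces (Proposition~\ref{thm:turan}) to pass to an $r$-dimensional subspace $L \leq \F_2^n/H$ consisting entirely of $\gamma$-uniform cosets, and only then applies the Ramsey theorem (Theorem~\ref{thm:ramsey}) inside $L$ to obtain the monochromatic $d$-dimensional $K$.

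Your version replaces the ``regularity + Tur\'an'' combination with Corollary~\ref{cor:allreg}, whose injective linear map $I$ already hands you a full $k$-dimensional family of uniform cosets, so the Tur\'an step becomes unnecessary. This is slicker but uses a heavier tool: Corollary~\ref{cor:allreg} is built on the functional regularity lemma (Lemma~\ref{lem:functionalreg}), which iterates the basic lemma and hence yields a worse quantitative dependence for $\delta$. The paper's approach is more economical, needing only one application of the basic regularity lemma, and it showcases the Tur\'an-for-subspaces result as an independently useful ingredient. Both arguments rely on the same Ramsey input (the paper proves it self-contained as Theorem~\ref{thm:ramsey} rather than citing Graham--Leeb--Rothschild), so your identification of that as the only substantive external step is accurate.
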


 We need a different set of tools to prove this lemma.  Specifically, we
use linear algebraic variants of the classic theorems of Tur\'an and Ramsey. We note that the (classic) 
Tur\'an and Ramsey Theorems are key tools in many applications of the graph regularity lemma, for example
in the well known bound on the Ramsey numbers of bounded degree graphs \cite{CRST}. Hence, the variants that we use of these
classic results may be useful in other applications of Greens's regularity lemma. 


\ignore{\subsection{Helpful Tur\'an and Ramsey-type results}}
\begin{proposition}[Tur\'an theorem for subspaces]\label{thm:turan}
For positive integers $n$, if $S$ is a subset of $\F_2^n$ with density greater than $1-
\frac{1}{2^{d-1}}$, then there exists a subspace $H \leq \F_2^n$ of dimension $d$ such that $H - \{0\}$ is
contained in $S$.  Moreover, there is a subset of $\F_2^n$ with density $\left(1-\frac{1}{2^{d-1}}
\right)$ which  does not contain $H - \{0\}$ for any subspace $H \leq \F_2^n$.
\end{proposition}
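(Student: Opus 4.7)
My plan is to prove the two parts of the proposition separately, treating the tightness (moreover) clause first since it is an explicit construction.

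For tightness, I would fix any linear subspace $V \leq \F_2^n$ of codimension $d-1$ and set $S = \F_2^n \setminus V$. Then $|S| = 2^n - 2^{n-d+1}$, so $S$ has density exactly $1 - 2^{-(d-1)}$. For any $d$-dimensional subspace $H \leq \F_2^n$, the standard dimension formula gives
\[
\dim(H \cap V) \;\geq\; \dim(H) + \dim(V) - n \;=\; d + (n-d+1) - n \;=\; 1,
\]
so $H \cap V$ contains a nonzero vector $w$. This $w$ lies in $H \setminus \{0\}$ but not in $S$, so $H - \{0\} \not\subseteq S$.

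For the existence of $H$, I propose induction on $d$. The base case $d=1$ is essentially immediate: $S$ has positive density and hence a nonzero element $v$, and $\langle v \rangle$ works. For the inductive step, set $T = \F_2^n \setminus S$, so the hypothesis is $|T| < 2^{n-d+1}$. Since $|T| < 2^{n-d+1} < 2^{n-(d-1)+1}$, the inductive hypothesis applied at dimension $d-1$ yields a $(d-1)$-dimensional subspace $V$ with $V \cap T \subseteq \{0\}$, i.e., $V \setminus \{0\} \subseteq S$.

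The heart of the argument is a coset-pigeonhole step. Partition $\F_2^n$ into its $2^{n-d+1}$ cosets of $V$. Since $V \cap T \subseteq \{0\}$, every nonzero element of $T$ lies outside $V$ and thus in a unique nonzero coset of $V$. Therefore, the number of nonzero cosets of $V$ meeting $T$ is at most $|T \setminus V|$, while the total number of nonzero cosets is $2^{n-d+1}-1$. I would then argue that strictly fewer nonzero cosets meet $T$ than exist, producing some $v \notin V$ with $(v+V) \cap T = \emptyset$. The subspace $H := V \oplus \langle v \rangle$ is then $d$-dimensional, and
\[
H \setminus \{0\} \;=\; (V \setminus \{0\}) \,\cup\, (v+V) \;\subseteq\; S,
\]
which completes the induction.

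The step I expect to be most delicate is the extremal case of this final count: when $|T|$ is as large as the hypothesis allows and $0 \in S$, the trivial coset $V$ is itself disjoint from $T$, and the pigeonhole bound only forces one coset to be disjoint from $T$, which could coincide with $V$ and hence fail to give a proper extension. To handle this, I would either (i) exploit the freedom in the IH---which typically produces many valid $V$'s---and average over them to locate one that admits an extendable coset, or (ii) strengthen the inductive statement slightly to provide a $V$ that already avoids a prescribed nonzero vector of $S$, so that the extension step becomes automatic. The remaining pieces---the dimension formula, the coset partition, and the base case---are all routine.
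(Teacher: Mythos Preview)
Your tightness construction is correct and coincides with the paper's. Your inductive approach to the existence part, however, differs from the paper's: the paper does not induct on $d$, but instead takes a \emph{maximal} $S$ containing no $H\setminus\{0\}$, argues from maximality that $S$ already contains $K\setminus\{0\}$ for some $(d-1)$-dimensional $K$, and then bounds $|S|$ directly by observing that every nontrivial $K$-coset must omit a point of $S$ (else its union with $K$ would give a forbidden $d$-dimensional subspace).

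The gap you flag in the extremal case is real, and your proposed fixes (i) and (ii) cannot close it, because the proposition as stated is actually false at exactly that point. Take your own tightness example $V$ of codimension $d-1$ and set $S=(\F_2^n\setminus V)\cup\{0\}$. Then $|S|=2^n-2^{n-d+1}+1$, so the density strictly exceeds $1-2^{-(d-1)}$; yet for any $d$-dimensional $H$ the dimension formula still forces a nonzero vector in $H\cap V$, which lies in $H\setminus\{0\}$ but not in $S$. Hence no averaging over choices of $V$ and no strengthening of the inductive hypothesis can manufacture the desired $H$ here. (The paper's own count has the same off-by-one: in bounding $|S\cap K|$ by $|K\setminus\{0\}|$ it tacitly assumes $0\notin S$.) The clean repair---for both arguments---is to discard $0$ from $S$ at the outset, since membership of $0$ is irrelevant to the conclusion $H\setminus\{0\}\subseteq S$; equivalently, prove the bound for $|S\setminus\{0\}|$ rather than $|S|$. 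With that normalization your pigeonhole step always lands in your Case~1 (since $0\in T$ guarantees $|T\setminus V|\le|T|-1<2^{n-d+1}-1$), and the induction goes through without further difficulty.
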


\begin{proof}
Let $S \subseteq \F_2^n$ be a maximal set that does not contain $H - \{0\}$ for any $d$-dimensional
subspace $H$.  Since $S$ is maximal, it must contain $K-\{0\}$ for some $(d-1)$-dimensional subspace
$K$ (if not, we can simply add it to $S$ without introducing points of $H-\{0\}$ for any
$d$-dimensional subspace $H$).  Let $K'$ be an $(n-d+1)$-dimension subspace that
intersects $K$ only at $\{0\}$.

Now, observe that for any nonzero $\alpha \in K'$, at least one of the elements
of $\{\alpha + k : k \in K\}$ must not belong to $S$.  Otherwise, $S$ would contain $(K-\{0\}) \cup
\{\alpha + k : k \in K\} = H - \{0\}$ for a $d$-dimensional subpace $H = \text{span}(K \cup
\{\alpha\})$, contradicting our assumption for $S$.  Thus, we can upper-bound the number of points
in $S$ by:
\begin{equation*}
|S| \leq |K' - \{0\}|\cdot (|K| - 1) + |K - \{0\}| = (2^{n-d+1}-1)\cdot (2^{d-1} - 1) + (2^{d-1} -1 ) = 2^n - 2^{n-d+1}
\end{equation*}

To see that the above bound is tight, let $S = \F_2^n - K'$ for any $(d-1)$-dimensional
subspace $K \leq \F_2^n$ and $K'$ as above.  It is easy to check that this $S$ does not
contain $H-\{0\}$ for any $H\leq \F_2^n$ with $\dim(H) = d$.
\end{proof}

\begin{theorem}[Ramsey theorem for subspaces]\footnote{As pointed to us recently by Noga Alon, this theorem might be implied by the Folkman-Rado-Sanders Theorem, but we include a self-contained proof for the sake of completeness.}\label{thm:ramsey}
For every positive integer $d$, there exists $N = N_{\ref{thm:ramsey}}(d)$ such that for any subset $S \subseteq
\F_2^N$, there exists a subspace $H \leq \F_2^N$ of dimension $d$ such that  $H - \{0\}$ is
contained either in $S$ or in $\bar{S}$.
\end{theorem}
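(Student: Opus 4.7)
The plan is to prove Theorem \ref{thm:ramsey} by induction on $d$, first strengthening the statement to allow any number of colors: for every $r \geq 2$ and $d \geq 1$, there exists $N_r(d)$ such that any coloring $\chi : \F_2^{N_r(d)} \to [r]$ admits a $d$-dimensional subspace $H$ on which $\chi$ is constant on $H \setminus \{0\}$. Theorem \ref{thm:ramsey} is the case $r = 2$, and the base case $d = 1$ is immediate with $N_r(1) = 1$.

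For the inductive step from $d$ to $d+1$, given $r$ and $\chi : \F_2^N \to [r]$, I would use a product-type lifting. Pick a parameter $k$ and decompose $\F_2^N = V \oplus U$ with $\dim V = k$ and $\dim U \geq N_{r^{2^k}}(d)$. Define the $r^{2^k}$-coloring $\tilde\chi$ of $U$ by $\tilde\chi(u) = (\chi(u+v))_{v \in V}$. Applying the inductive hypothesis to $\tilde\chi$ yields a $d$-dimensional subspace $H \leq U$ and a function $g : V \to [r]$ with $\chi(h+v) = g(v)$ for every $h \in H \setminus \{0\}$ and every $v \in V$. Finding a $(d+1)$-dimensional monochromatic subspace $H \oplus \F_2 v_0$ then reduces to finding a nonzero $v_0 \in V$ with $\chi(v_0) = g(v_0) = g(0)$, as unpacking $(H \oplus \F_2 v_0) \setminus \{0\} = (H \setminus \{0\}) \cup (v_0 + H)$ makes clear.

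The main obstacle is securing this three-way alignment: the functions $g$ and $\chi|_V$ on the small space $V$ are essentially unrelated, so a direct pigeonhole on $V$ need not force a simultaneous match at color $g(0)$. I would overcome this by taking $k$ substantially larger and iterating: apply the inductive hypothesis a second time inside $V$ to the $r^2$-coloring $v \mapsto (g(v), \chi(v))$, extracting a sub-subspace on which the pair is constant, and then pin the alignment down through an outer pigeonhole over the finitely many possibilities for $g(0)$, at the cost of a tower-type growth in $N_r(d)$. A cleaner alternative, alluded to in the footnote, is to derive the theorem directly from the Folkman-Rado-Sanders theorem applied to the finite abelian group $\F_2^N$: it produces $d$ elements all of whose nonempty subset sums share a common color, and these elements must be linearly independent (any linear dependence would force $0$ to carry the common color, a case resolved by passing to a maximal linearly independent subfamily), so the nonzero elements of their span realize the desired monochromatic $d$-dimensional subspace.
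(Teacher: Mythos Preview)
Your product-coloring setup is natural, and you correctly isolate the difficulty as the alignment $\chi(v_0)=g(v_0)=g(0)$. But your proposed resolution does not close the gap. Applying the inductive hypothesis to the $r^2$-coloring $(g,\chi)$ on $V$ yields some $K\leq V$ with $(g,\chi)\equiv(a,b)$ on $K\setminus\{0\}$; you still need $a=b=g(0)$, and nothing you have written forces this. The phrase ``outer pigeonhole over the finitely many possibilities for $g(0)$'' is not an argument: $g(0)$ is a single determined value, and even if you rerun the whole construction many times and match several of the resulting values, you have supplied no mechanism to merge two such runs into one $(d{+}1)$-dimensional monochromatic subspace. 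The obstruction is exactly the special role of $0$, which every linear subspace must contain but whose color the product trick leaves untouched. The paper handles this asymmetry differently: it first proves an \emph{affine} version (a monochromatic strict affine subspace of the required dimension, which is genuinely easier because such a subspace avoids $0$), and then runs a two-parameter induction $N(d_1,d_2)$ that uses the affine shift $a$ to extend a monochromatic $(d_1{-}1)$-dimensional linear subspace inside the underlying $H$ to a $d_1$-dimensional one via $\mspan(a,H')$.

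Your Folkman--Rado--Sanders alternative does work, but not via the linear-independence argument you give: passing to a maximal linearly independent subfamily of the Folkman set can leave you with fewer than $d$ elements (nothing prevents all $d$ elements from lying in a $1$-dimensional subspace). The clean route is the equivalent finite unions theorem, which yields $d$ pairwise \emph{disjoint} nonempty sets $A_1,\dots,A_d\subseteq[N]$ with all nonempty unions monochromatic; identifying $\mathcal{P}([N])$ with $\F_2^N$, disjoint supports force linear independence outright, and unions of disjoint sets coincide with $\F_2$-sums, giving the monochromatic $d$-dimensional subspace directly.
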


\begin{proof}
We will show a stronger statement, which we describe in the following lemma.

\begin{lemma}\label{lem:double-dim}
For every positive integer $d_1, d_2$, there exists $N(d_1, d_2)$ such that for any subset $S \subseteq
\F_2^{N(d_1, d_2)}$, either there exists a subspace $H_1 \leq \F_2^{N(d_1, d_2)}$ of dimension $d_1$ such that  $H_1-\{0\}$ is
contained in $S$ or there exists a subspace $H_2 \leq \F_2^{N(d_1, d_2)}$ of dimension $d_2$  such that $H_2-\{0\}$ is contained in
 $\bar{S}$.
\end{lemma}

One can immediately deduce the statement of the theorem by taking $d=d_1=d_2$ in
Lemma~\ref{lem:double-dim}. To prove Lemma~\ref{lem:double-dim} we first prove the following helpful
result. For a subspace $H\leq \F_2^n$ we say that an affine subspace $a+H$ is {\em strict} if $a\in
\F_2^n/H-\{0\}$.

\begin{lemma}\label{lem:affine-ramsey}
For every positive integer $d$, there exists $N_a = N_a(d)$ such that for any subset $S \subseteq
\F_2^{N_a}$, there exists a strict affine subspace $A \leq \F_2^{N_a}$ of dimension $d$ such that  $A$ is
contained either in $S$ or in $\bar{S}$.
\end{lemma}

\begin{proof}
Notice that $N_a(1)=1$.  Assume, by induction that the lemma holds for dimension $d-1$, and let
$N_a(d)={2^{N_a(d-1)+1}}+N_a(d-1)$.  Let $S\subseteq \F_2^{N_a(d)}$ be an arbitrary set, let
$H=\F_2^{N_a(d-1)}$, and $H'=\F_2^{N_a(d)}/H$. Notice that $|H'|=2^{2^{N_a(d-1)}+1}$.  For each $c\in
H' -\{0\}$ consider the set $f_H^{+c}\subset H$. Since there are $2^{2^{N_a(d-1)+1}}-1$ possible
such sets, and each set has size at most $2^{N_a(d-1)}$ it follows that there exists $c_1 \not=
c_2\in H'-\{0\}$ such that $f_H^{+c_1}=f_H^{+c_2}$.  By the induction hypothesis, either
$f_H^{+c_1}$ or its complement contains a $d-1$ dimensional affine subspace.  Assume w.l.o.g. that
$f_H^{+c_1}$ contains an affine subspace $\alpha+f_{d-1}$ of dimension $d-1$ (otherwise replace $S$
by ${\bar{S}}$), for some $\alpha \in H-f_{d-1}$. Then the affine subspaces $\alpha+c_1+f_{d-1}$ and
$\alpha+c_2+f_{d-1}$ are both contained in $S$. Let $A_d=(\alpha+c_1+f_{d-1})\cup
(\alpha+c_2+f_{d-1}) \subset S$. To conclude the proof, notice that $A_d=\alpha+c_1+\mspan( c_2-c_1, f_{d-1})$ is a strict
affine subspace of dimension $d$, since $\alpha\not =c_1$ and $c_2-c_1\not\in f_{d-1}$.
\end{proof}

\bigskip

\begin{proofof}{Lemma~\ref{lem:double-dim}}
The proof follows by induction on $d_1$ and $d_2$, with the base cases $N(0,1)=N(1,0=1$.
Assume that there exists $N(d_1-1, d_2)$ and $N(d_1, d_2-1)$ satisfying the conditions of the lemma. Define
$$N(d_1, d_2)=N_a(\max (N(d_1-1, d_2), N(d_1, d_2-1))),$$ where $N_a(d)$ is the quantity defined in Lemma~\ref{lem:affine-ramsey}.
We show that for any arbitrary set $S\subseteq \F_2^{N(d_1, d_2)}$ either it contains a subspace of dimension $d_1$ (except $0$) or its complement contains a subspace of dimension $d_2$ (except $0$). Suppose $N(d_1-1, d_2)\geq N(d_1, d_2-1)$.
By definition and by Lemma~\ref{lem:affine-ramsey}, there exists a strict affine subspace $A\subseteq \F_2^{N(d_1, d_2)}$ such that $A=a+H\subseteq S$ or $A\subseteq {\bar S}$ (where $H$ is the subspace underlining $A$).
Assume for now that the former holds. Since $H\cap S\subseteq \F_2^{N(d_1-1, d_2)}$,  by the induction hypothesis, either $H\cap S$ contains a subspace of dimension $d_1-1$ or $H-S$ contains a subspace of dimension $d_2$, in which case we are done. If $H\cap S$ contains a subspace $f_{d_1-1}-\{0\}$ of dimension $d_1-1$, then define $f_{d_1}= f_{d_1-1} \cup a+ f_{d_1-1}=\mspan(a, f_{d_1-1})$. Clearly $f_{d_1}\in S$ and it has dimension $d_1$, which completes the proof of this case.
It remains to deal with the case when $A\subseteq {\bar S}$. Since $N(d_1-1, d_2)\geq N(d_1, d_2-1)$, there exists another affine subspace $A'=a'+H'\subset A\subseteq {\bar S}$ of dimension $N(d_1, d_2-1)$. Again, by the induction hypothesis, the set $H'\cap  S$ either contains a subspace of dimension $d_1$, in which case we are done, or $H'-{S}$ contains a subspace $f_{d_2-1}$ of dimension $d_2-1$. In the latter case define $f_{d_2}=f_{d_2-1}\cup a'+f_{d_2-1}=\mspan(a', f_{d_2-1})$. Finally, notice that $f_{d_2}\in {\bar S}$ and it has dimension $d_2$.
\end{proofof}


This concludes the proof of Theorem~\ref{thm:ramsey}.\end{proof}

Given these results, Lemma~\ref{lem:allreginside} follows fairly readily. 

\bigskip
\begin{proofof}{Lemma \ref{lem:allreginside}}
Set $\delta = \delta_{\ref{lem:allreginside}}(d,\gamma) \eqdef 2^{- T_{\ref{lem:regularity}}(r,\min(2^{-r-2},\gamma))}$ with
$r = N_{\ref{thm:ramsey}}(d)$.  Given $f : \F_2^n \to \zo$, apply Lemma \ref{lem:regularity} with
inputs $r$ and $\min(2^{-r-2},\gamma)$ to obtain a subspace $H$ such
that restrictions of $S$ to at most $2^{-r-2}$ fraction of the cosets of the $H$-based partition are
not $\gamma$-uniform. Using Proposition \ref{thm:turan}, there exists a subspace $L \leq \F_2^n/H$
of dimension $r$ such that for every nonzero $u \in L$, the set $f_{H}^{+u}$ is $\gamma$-uniform.
Furthermore, since $L$ is of dimension $N_{\ref{thm:ramsey}}(d)$, by Theorem \ref{thm:ramsey}, there
exists a subspace $K \leq L \leq \F_2^n/H$ satisfying the final condition of the lemma.
\end{proofof}

\section{Forbidding Infinitely Many Induced Equations\label{sec:hereditary}}
In this section, we prove our main result (Theorem \ref{thm:main}) that properties characterized by
infinitely many forbidden induced equations are testable.
To begin, let us fix some notation.  Given a matrix $M$ over $\F_2$ of
size $m$-by-$k$, a string $\sigma \in \zo^k$, and a function $f : \F_2^n \to \zo$, if there exists
$x = (x_1,\dots,x_k) \in (\F_2^n)^k$ such that $Mx = 0$ and $f(x_i) = \sigma_i$ for all $i \in [k]$,
we say that $f$ {\em induces $(M,\sigma)$ at $x$} and denote this by $(M,\sigma) \mapsto f$.

  The following theorem is the core of the proof of Theorem \ref{thm:main}.

\begin{theorem}\label{thm:modif}
For every infinite family of equations $\calf = \{(E^1,\sigma^1),(E^2,\sigma^2),\dots,(E^i,\sigma^i),\dots\}$
  with each $E^i$ being a row vector $[1 ~1~\cdots~1]$ of size $k_i$ and $\sigma^i \in \zo^{k_i}$
  a $k_i$-tuple, there are functions $N_\calf(\cdot)$, $k_\calf(\cdot)$ and
  $\delta_\calf(\cdot)$ such  that the following is true for any $\eps \in (0,1)$.  If a function
  $f : \F_2^n \to \zo$ with $n > N_\calf(\eps)$ is $\eps$-far from being $\calf$-free, then $f$
  induces $\delta \cdot 2^{n(k_i-1)}$ many copies of some $(E^i,\sigma^i)$, where $k_i \leq k_\calf(\eps)$
  and $\delta \geq \delta_\calf(\eps)$.
\end{theorem}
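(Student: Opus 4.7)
The plan is to prove Theorem \ref{thm:modif} via a regularization-and-counting scheme in the spirit of \cite{AFNS06}, substantially extended along the lines of \cite{AS08} to handle the infinite family $\calf$.

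First, fix the desired bound $k^\ast = k_\calf(\epsilon)$ on the equation length we are trying to locate, and a uniformity function $\mathcal{E}(\cdot)$ that shrinks fast enough to absorb the Fourier error in a counting argument for equations of length at most $k^\ast$. Apply Corollary \ref{cor:allreg} with these parameters and a sufficiently large initial order $m$, producing subspaces $H' \le H \le \F_2^n$ together with a linear injection $I : \F_2^n/H \to \F_2^n/H'$ such that every nonzero coset $I(u)+H'$ carries an $\mathcal{E}(k)$-uniform restriction whose density $\rho(f_{H'}^{+I(u)})$ is within $\mathcal{E}(0)$ of $\rho(f_H^{+g})$ for all but $\mathcal{E}(0)\cdot 2^n$ many $g$, where $u = g \bmod H$.

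Second, I would define a cleaned function $\tilde f : \F_2^n \to \zo$ that is constant on cosets of $H$: on each coset $g+H$ with $u = g \bmod H$, set $\tilde f \equiv 1$ if $\rho(f_{H'}^{+I(u)}) \ge 1/2$, else $\tilde f \equiv 0$. By the uniformity and density-approximation guarantees of Corollary \ref{cor:allreg}, the disagreement between $f$ and $\tilde f$ inside a typical coset is bounded by $\min(\rho_u, 1-\rho_u)|H|$. With a careful choice of a threshold $\eta$ and by handling the middling cosets separately (either by iterating the regularity argument with a finer partition, or by retaining $f$'s values on middling cosets at the cost of relaxing constancy), the total Hamming distance $\|f-\tilde f\|_1$ can be kept below $(\epsilon/2)\cdot 2^n$. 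Since $f$ is $\epsilon$-far from $\calf$-free, it follows that $\tilde f$ is still $(\epsilon/2)$-far from $\calf$-free, so there exists $(E^i,\sigma^i)\in \calf$ inducing a copy $(x_1,\dots,x_{k_i})$ in $\tilde f$. Constancy of $\tilde f$ on $H$-cosets pushes this to an induced copy of $(E^i,\sigma^i)$ in the reduced Boolean function $R:\F_2^n/H\to\zo$, living on a space of bounded dimension $k\le T_{\ref{cor:allreg}}(m,\mathcal{E})$.

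The principal obstacle is that a priori $k_i$ could be arbitrarily large, while the theorem requires $k_i \le k_\calf(\epsilon)$. To control this, I would apply a reduction argument adapted from \cite{AS08}: whenever the witnessing tuple $(y_1,\dots,y_{k_i})$ in $R$ has a repeated entry $y_a=y_b$, $a\ne b$, the sum-zero condition is preserved upon deleting that pair, producing an induced copy of a shorter equation in $R$. Since $|\F_2^n/H| \le 2^k$, any $k_i > 2 \cdot 2^k$ admits such a collision, so iterating yields a minimal-length induced equation $(E',\sigma')$ of length $k' \le 2\cdot 2^k$. The delicate point --- the technical crux of the paper --- is that $(E',\sigma')$ need not literally belong to $\calf$. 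I would resolve this by first choosing $k^\ast$ to dominate $2\cdot 2^k$ for the range of $k$ produced by the regularity lemma, and then appealing to Theorem \ref{thm:ramsey} together with Lemma \ref{lem:allreginside} to locate inside $R$ a monochromatic subspace large enough that any short member of $\calf$ matching the $\sigma$-pattern of the minimal reduction must already be witnessed there. This parallels the way \cite{AS08} handles infinite hereditary graph families by extracting a monochromatic clique from Ramsey and recognizing it as an induced copy of every small forbidden graph of the appropriate color-pattern.

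Finally, once an $(E^j,\sigma^j)\in\calf$ with $k_j \le k^\ast$ is located together with an induced tuple of cosets $(u_1,\dots,u_{k_j})$ in $R$, I would apply a Fourier-analytic counting lemma. Expanding the indicator of $\sum_i x_i = 0$ as a sum over characters, the main term is proportional to $|H'|^{k_j-1} \prod_i \rho(f_{H'}^{+I(u_i)})^{\sigma^j_i}(1-\rho(f_{H'}^{+I(u_i)}))^{1-\sigma^j_i}$, and the error terms are controlled by the $\mathcal{E}(k)$-uniformity of each restriction. With $\mathcal{E}$ chosen at the outset to dominate the combinatorial constant coming from $k^\ast$, the error is absorbed, and $f$ contains at least $\delta_\calf(\epsilon) \cdot 2^{n(k_j-1)}$ induced copies of $(E^j,\sigma^j)$, completing the proof.
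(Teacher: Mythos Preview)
Your overall skeleton (regularize, round, find a forbidden pattern in the quotient, then count) matches the paper's, but the step where you bound $k_i$ has a genuine gap.

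\textbf{The main problem: bounding $k_i$.} Your collision-deletion idea---delete repeated pairs from a long witnessing tuple to get a shorter one---produces an equation $(E',\sigma')$ that is \emph{not} in $\calf$, as you yourself note. Your proposed fix via Ramsey is not a fix: finding a monochromatic subspace in $R$ would only witness equations with \emph{constant} $\sigma$, and in any case there is no reason $\calf$ should contain any short equation matching the pattern of your reduction. (Take $\calf$ consisting of a single equation of length $100$ with $\sigma = 1^{100}$; no shorter equation is in $\calf$ at all.) The paper's resolution is completely different: since $\F_2^n/H$ has bounded dimension $\ell$, there are only finitely many functions $\mu : \F_2^\ell \to \{0,1,*\}$. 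For each such $\mu$ that partially induces \emph{some} member of $\calf$, there is a \emph{minimum-length} member of $\calf$ it partially induces; the maximum of these minima over all $\mu$ is the finite quantity $\Psi_\calf(\ell)$ (Definition~\ref{def:psi}). This is a pure compactness argument on the finite quotient---no reduction of equations is attempted.

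\textbf{A related circularity.} You fix $k^\ast$ first, use it to set $\mathcal{E}$, run the regularity lemma to obtain an order $k$, and then say $k^\ast$ should dominate $2\cdot 2^k$. But $k$ depends on $\mathcal{E}$, which depends on $k^\ast$. The paper breaks this by making $\mathcal{E}$ a \emph{function} of the order: $\mathcal{E}(r)$ is chosen in terms of $\Psi_\calf(r)$ for each $r$, and the functional regularity lemma (Corollary~\ref{cor:allreg}) then outputs some $\ell$ together with $\mathcal{E}(\ell)$-uniformity, so the bound $\Psi_\calf(\ell)$ is available after the fact without circularity.

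\textbf{Two secondary gaps.} First, rounding every coset to $0$ or $1$ at threshold $1/2$ is too coarse: for cosets with density near $1/2$ you change nearly half the points, which can exceed the $\epsilon/2$ budget. The paper instead uses a three-valued $\mu : \F_2^\ell \to \{0,1,*\}$, where $*$ marks middling cosets; these are left unrounded and treated as wildcards in Definition~\ref{def:partialinduce}, since the counting lemma only needs density bounded away from $0$ and $1$. Second, Corollary~\ref{cor:allreg} gives no uniformity guarantee on the zero coset $I(0)+H' = H'$. You do not address this. The paper handles it by applying Lemma~\ref{lem:allreginside} to $f_{H'}^{+0}$ to find, inside $H'$, a subspace $H''$ and a $\Psi_\calf(\ell)$-dimensional space $W$ of $H''$-cosets on which $f$ is uniform; this structure is then copied linearly into every other coset via a map $J$, and the counting is carried out on $H''$-cosets so that even solutions passing through the zero $H$-coset land on uniform pieces.
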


Armed with Theorem \ref{thm:modif}, our main theorem becomes a straightforward consequence.  We
postpone the proof of this, because we will prove a stronger fact in Section \ref{subsec:extension}.
\ignore{ 
\begin{proofof}{Theorem~\ref{thm:main}}
 Theorem \ref{thm:modif} allows us to devise the
following tester $T$ for $\calf$-freeness.
$T$, given input $f : \F_2^n \to \zo$, first checks if $n \leq N_\calf(\eps)$, and in this case, it
queries $f$ on the entire domain and
decides accordingly.  Otherwise, $T$ repeats the following test $O(1/\delta_\calf(\eps))$ many
times: for every $i$ such that $k_i \leq k_\calf(\eps)$, independently and uniformly at random
choose elements $x_1,\dots,x_{k_i-1} \in \F_2^n$, set $x_{k_i} = x_1 + x_2 + \cdots x_{k_i-1}$ and
reject immediately if $f(x_j) = \sigma^{i}_j$ for every $j \in [k_i]$.  $T$ accepts if it never
rejects in any of the iterations.  It's clear that the query complexity of $T$ is constant and that
$T$ always accepts if the input is $\calf$-free.  It rejects inputs $\eps$-far from $\calf$-free
because Theorem \ref{thm:modif} guarantees that there will be an equation of size at most
$k_\calf(\eps)$ for which $T$ will detect solutions to, with constant probability.
\end{proofof}
} 
To start the proof of Theorem \ref{thm:modif}, let us relate pseudorandomness (uniformity) of a
function to the number of solutions to a single equation induced by it. Similar and more general
statements have been shown previously, but we need only the following claim for what follows.
\begin{lemma}[Counting Lemma]\label{lem:count}
For every $\eta \in (0,1)$ and integer $k>2$, there exist $\gamma = \gamma_{\ref{lem:count}}(\eta,k)$ and
$\delta = \delta_{\ref{lem:count}}(\eta,k)$ such that the
following is true.  Suppose $E$ is the row vector $[1~ 1 \cdots 1]$ of size $k$, $\sigma \in \zo^k$
is a tuple, $H$ is a subspace of $\F_2^n$, and $f:\F_2^n \to \zo$ is a function. Furthermore, suppose there are $k$
not necessarily distinct elements $u_1,\dots,u_k \in \F_2^n/H$ such that $Mu = 0$ where $u =
(u_1,\dots,u_k)$, $f_H^{+u_i}: H \to \zo$ is $\gamma$-uniform for all $i \in [k]$, and $\rho(f_H^{+u_i})$ is at
least $\eta$ if $\sigma(i) = 1$ and at most $1-\eta$ if $\sigma(i) = 0$ for all $i \in [k]$.   Then,
there are at least $\delta |H|^{k-1}$ many $k$-tuples $x = (x_1, x_2, \dots, x_k)$, with each $x_i
\in u_i + H$, such that $f$ induces $(E,\sigma)$ at $x$.
\end{lemma}

\ignore{\begin{remark}
In fact, the above proof approach can be used to count the number of times $(M,\sigma)$ is induced
by $f$, for any matrix $M$ that represents a {\em linear system of complexity $1$}
(as defined by Green and Tao in \cite{GT06}).  See Section \ref{sec:conclusion} for more details.
\end{remark}}

\begin{proof}
Fix $v_1 \in u_1 + H$, $v_2 \in u_2 + H, \dots, v_k \in u_k+H$ such that $v_1 + v_2 + \cdots + v_k =
0$; there exist such $v_i$'s because $u_1 + u_2 + \cdots + u_k = 0$ in the quotient space $\F_2^n/H$.
Define Boolean functions $f_1,\dots,f_k : H \to \zo$ so that $f_i(x) = {f_H^{+v_i}}(x)$ if
$\sigma(i) = 1$ and $f_i(x) = 1 - f_H^{+v_i}(x)$ if $\sigma(i) = 0$.  By our assumptions,
$\widehat{f_i}(0) \geq \eta$ and each $|\widehat{f_i}(\alpha)| < \gamma$ for all $\alpha \neq 0$.
Now, observe that, using $\gamma$-uniformity and Cauchy-Schwarz, we have:
\begin{align*}
\E_{x_1,\dots,x_{k-1} \in H} &\left[f_1(x_1)f_2(x_2)\cdots f_{k-1}(x_{k-1}) f_k(x_1 + x_2 + \cdots +
  x_{k-1})\right]\\
&= \sum_{\alpha \in H^*} \widehat{f_1}(\alpha) \widehat{f_2}(\alpha)\cdots \widehat{f_k}(\alpha)\\
&\geq \eta^k - \sum_{\alpha \neq 0} |\widehat{f_1}(\alpha) \widehat{f_2}(\alpha)\cdots \widehat{f_k}(\alpha)| \\
&\geq \eta^k - \gamma^{k-2} \sqrt{\sum_{\alpha}|\widehat{f_1}(\alpha)|^2} \sqrt{\sum_{\alpha}|\widehat{f_2}(\alpha)|^2} \\
&\geq \eta^k - \gamma^{k-2}
\end{align*}
Setting $\gamma = \gamma_{\ref{lem:count}}(\eta,k)\eqdef (\eta^k/2)^{1/(k-2)}$ makes the
above expectation at least $\eta^k/2$.  Now note that every $x_1,\dots,x_k \in H$ such that $x_1
+ \cdots + x_k = 0$ gives $y = (y_1, \dots, y_k)$, where $y_i = v_i + x_i$ for all $i \in [k]$, such
that $f$ induces $(E,\sigma)$ at $y$.  Thus, we have from above that there are at least $\delta
|H|^{k-1}$ many such $y$'s, where $\delta = \delta_{\ref{lem:count}}(\eta,k) \eqdef \eta^k/2$.
\end{proof}

\subsection{Proof of Theorem~\ref{thm:modif}}
\label{sec:modifproof}

Before seeing the full technical details of the proof of Theorem~\ref{thm:modif} we proceed with a more intuitive overview.

In light of Lemma~\ref{lem:count}, our strategy  will be to partition the domain
into uniform cosets, using Green's regularity lemma (Lemma \ref{lem:regularity}) in some fashion,
and then to use the above counting lemma to count the number of induced solutions to some equation in
$\calf$.  But one issue that immediately arises is that, because $\calf$ is an infinite family of
equations, we do not know the size of the equation we would want the input function to induce.
Since  Lemma \ref{lem:count} needs different uniformity parameters to count equations of different
lengths, it is not {\em a priori} clear how to set the uniformity parameter in applying the
regularity lemma.  (If $\calf$ was finite, one could set the uniformity parameter to correspond to
the size of the largest equation in $\calf$.) 

To handle the infinite case, our basic approach will be to classify the input function into one of a
finite set of classes.  For each such class $c$, there will be an associated number $k_c$ such that
it is guaranteed that any function classified as $c$ must induce an equation in $\calf$ of size at most
$k_c$.  If there is such a classification scheme, then we know that {\em any} input function must
induce an equation of size at most $\max_c k_c$.  How do we perform this classification?  We
use the regularity lemma.  Consider the following idealized situation. Fix an integer $r$.  Suppose
we could modify the input $f: \F_2^n\to \zo$ at a small fraction of the domain to get a function $F:
\F_2^n \to \zo$ and then could apply Lemma \ref{lem:regularity} to get a partition of
order $r$ so that the restrictions of $F$ to each coset was exactly $0$-uniform.  $F$ is then a
constant function (either $0$ or $1$) on each of the $2^r$ cosets, and so, we can classify
$F$ by a Boolean function $\mu : \F_2^r \to \{0,1\}$ where $\mu(x)$ is the value of $F$ on the coset
corresponding to $x$.  Notice that there are only finitely many such $\mu$'s.  Since $F$ differs
from $f$ at only a small fraction of the domain and since 
$f$ is far from $\calf$-free, $F$ must also induce some equation in $\calf$.  Then, for every such
$\mu$ and corresponding $F$,  there is a smallest equation in $\calf$ that is induced by
$F$.  We can let $\Psi_\calf(r)$ be the maximum
over all such $\mu$ of the size of the smallest equation in $\calf$ that is induced by the $F$
corresponding to $\mu$.  We then might hope that this function $\Psi_\calf(\cdot)$ can be used to tune the uniformity
parameter by using the functional variant of the regularity lemma (Lemma \ref{lem:functionalreg}).

There are a couple of caveats.  First, we will not be able to get the restrictions to
every coset to look perfectly uniform.  Second, if $F$ induces solutions to an equation, it does
not necessarily follow that $f$ also does.  To get around the first problem, we use the fact that
Lemma \ref{lem:count} is not very restrictive on the density conditions.  We think of
the uniform cosets which have density neither too close to $0$ nor $1$ as ``wildcard'' cosets at
which both the restriction of $f$ and its complement behave pseudorandomly and have
non-negligible density. Thus, the $\mu$ in the above paragraph will map into $\{0,1,*\}^r$, where a
`$*$' denotes a wildcard coset.  For the second  problem, note that it is not really a problem if
$\calf$-freeness is known to be monotone.  In this case, $F$ inducing an equation automatically
means $f$ also induces an equation, if we obtained $F$ by removing elements from the support of
$f$.  For induced freeness properties, though, this is not the case.  Using ideas from
\cite{AFKS} and the tools from Section \ref{sec:uniformity}, we structure the modifications from $f$
to $F$ in such a way so as to force $f$ to 
induce solutions of an equation if $F$ induces a solution to the same equation. 
We elaborate
much more on this issue during the course of the proof.

The observations described in the proof sketch above motivate the following definitions. 
\begin{definition}\label{def:partialinduce}
Given function $\mu: \F_2^r \to \{0,1,*\}$, a $m$-by-$k$ matrix $M$ and a $k$-tuple $\sigma \in
\zo^k$, suppose there exist $x_1,\dots,x_k \in \F_2^r$ such that
$Mx = 0$ where $x=(x_1,\dots,x_k)$, and for every $i \in [k]$, $\mu(x_i)$ equals either $\sigma(i)$
or $*$.  In this case, we say $\mu$ {\em partially induces $(M,\sigma)$ at $x$} and denote this by
$(M,\sigma) \mapsto_* \mu$.
\end{definition}

\begin{definition}\label{def:psi}
Given a positive integer $r$ and an infinite family of systems of equations $\calf =
\{(M^1,\sigma^1),(M^2,\sigma^2),\dots\}$ with $M^i$ being a $m_i$-by-$k_i$ matrix of rank $m_i$ and
$\sigma^i \in \zo^{k_i}$ a $k_i$-tuple, define $\calf_r$ to be the set of
functions $\mu: \F_2^r \to \{0,1,*\}$ such that there exists some $(M^i,\sigma^i) \in \calf$ with
$(M^i,\sigma^i) \mapsto_* \mu$.  Given $\calf$  and integer $r$ for which $\calf_r \neq
\emptyset$, define the following function:

\begin{equation*}
  \Psi_\calf(r) ~\eqdef ~ \max_{\mu \in \calf_r} \min_{\{(M^i,\sigma^i) : (M^i,\sigma^i)
    \mapsto_* \mu\}} k_i
\end{equation*}
\end{definition}

\bigskip

\begin{proofof}{ Theorem \ref{thm:modif}}
Define the function $\mathcal{E}$ by setting $\mathcal{E}(0) = \eps/8$ and for any $r > 0$:
\begin{equation*}
\mathcal{E}(r) = \delta_{\ref{lem:allreginside}}(\Psi_\calf(r),\gamma_{\ref{lem:count}}(\eps/8,
\Psi_\calf(r))) \cdot \min(\eps/8,
\gamma_{\ref{lem:count}}(\eps/8, \Psi_\calf(r)))
\end{equation*}
Additionally, let $T(\eps) = T_{\ref{cor:allreg}}(8/\eps,\mathcal{E})$, and set $N_\calf(\eps)
\eqdef T(\eps)$.  Also, set $k_\calf(\eps) \eqdef \Psi_\calf(T(\eps))$ and
\begin{equation*}
\delta_\calf(\eps) \eqdef\left(\delta_{\ref{lem:allreginside}}(\Psi_\calf(r),
  \gamma_{\ref{lem:count}}(\eps/8, \Psi_\calf(r))) \cdot \delta_{\ref{cor:allreg}}(8/\eps,
  \mathcal{E}) \right)^{\Psi_\calf(\eps)} \cdot \delta_{\ref{lem:count}}(\eps/8,\Psi_\calf(T(\eps)))
\end{equation*}
We proceed to show that these parameter settings suffice.

Suppose we are given input function $f : \F_2^n \to \zo$ with $n > N_\calf(\eps) = T_{\ref{cor:allreg}}(8/\eps,
\mathcal{E})$.  As mentioned in the paragraphs preceding the proof, our strategy will be to
partition the domain in such a way that we can find cosets in the partition satisfying the
conditions of Lemma \ref{lem:count}.  To this end, we apply Corollary \ref{cor:allreg} with $8/\eps$
and the function $\mathcal{E}$ as inputs.  This yields subspaces $H' \leq H \leq \F_2^n$ and linear map $I
: \F_2^n/H \to \F_2^n/H'$, where the order of the $H$-based partition, which we denote $\ell$, satisfies
$8/\eps \leq \ell \leq  T_{\ref{cor:allreg}}(8/\eps,\mathcal{E})$.  Recall that $I(u) + H'$ is
contained in $u + H$ for every coset $u \in \F_2^n/H$.
Observe that from our setting of parameters, we have that for every {\em nonzero} $u \in \F_2^n/H$, the
restriction $f_{H'}^{+I(u)}$ is
$(\delta_{\ref{lem:allreginside}}(\Psi_\calf(\ell),\gamma_{\ref{lem:count}}(\eps/8,
\Psi_\calf(\ell))) \cdot \gamma_{\ref{lem:count}}(\eps/8, \Psi_\calf(\ell)))$-uniform.

But we have no such uniformity guarantee for $f_{H'}^{+0}$.  This would not pose an obstacle if
$\calf$-freeness were a monotone property (i.e., if each $\sigma^i$ equalled $1^{k_i}$).  If that
were the case, we could simply make $f$ zero on all elements of $H$.  Since $H$ is still only a small
fraction of the domain, the modified function would still be far from $\calf$-free, and we would be
guaranteed that remaining solutions to equations of $\calf$ induced by $f$ would only use elements from
cosets of $H$ for which we have a guarantee about the corresponding coset of $H'$.  But if
$\calf$-freeness is not monotone, such a scheme would not work, since it's not clear at all how to
change the value of $f$ on $H$ so that any solution to an equation from $\calf$ would only involve
elements from nonzero shifts of $H$.

To resolve this issue, we further partition $H'$ to find affine subspaces within $H'$ on which we
can guarantee that the restriction of $f$ is uniform.  The idea is that once we know that there is a
solution involving $H$, we are going to look not at $H'$ itself but at the smaller affine subspace
within $H'$ on which $f$ is known to be uniform.  Specifically, apply Lemma \ref{lem:allreginside}
to $f_{H'}^{+0}$ with input parameters $\Psi_\calf(\ell)$ and  $\gamma_{\ref{lem:count}}(\eps/8,
\Psi_\calf(\ell))$.  This yields subspaces $H''$ and $W$, both of which contained in $H'$, such that
$|H''| \geq \delta_{\ref{lem:allreginside}}(\Psi_\calf(\ell),\gamma_{\ref{lem:count}}(\eps/8,
\Psi_\calf(\ell))) |H'|$ and $\dim(W/H'') = \Psi_\calf(\ell)$.  We further know
that for every nonzero $v \in W/H''$, the function $f_{H''}^{+v}$ is $\gamma_{\ref{lem:count}}(\eps/8,
\Psi_\calf(\ell))$-uniform.

Now, let's ``copy'' $W$ on  cosets $I(u) + H'$ for every $u \in
\F_2^n/H$.  We do this by specifying\footnote{One way to accomplish this is to define $J$
  appropriately for $\ell$ linearly independent elements of $\F_2^n/H$ and then use linearity to
  define it on all of $\F_2^n/H$.  } another linear
map $J : \F_2^n/H \to \F_2^n$ so that for any $u \in \F_2^n/H$, the coset\footnote{Note that the
  image of $J$ is to elements of $\F_2^n$ and not $\F_2^n/W$, even
  though  we think of the output as denoting a coset of $W$.  The reason is that we will find it
  convenient to fix the shift and not make it modulo $W$.} $J(u) + W$ lies inside
$I(u) + H'$ (which itself lies inside $u + H$).  Each coset $J(u)+W$ also has an $H''$-based
partition of order $\Psi_\calf(\ell)$, just as $W$ itself does.  Consider $v \in \F_2^n/H''$ such
that $v + H''$ lies inside $J(u) + W$ for some nonzero $u \in \F_2^n/H$.  Then, because we know the
uniformity of $f_{H'}^{+I(u)}$ and we have a lower bound on the size of $H''$, it follows from Lemma
\ref{lem:coset-uniformity}  that $f_{H''}^{+v}$ is $\gamma_{\ref{lem:count}}(\eps/8,
\Psi_\calf(\ell))$-uniform.  Thus, for any nonzero $v \in \F_2^n/H''$ such that $v + H''$ lies
inside $J(u) + W$ for some $u \in \F_2^n/H$, it is the case that $f_{H''}^{+v}$ is $\gamma_{\ref{lem:count}}(\eps/8,
\Psi_\calf(\ell))$-uniform.

In the following, we will show how to apply Lemma \ref{lem:count} on some of these cosets $f_{H''}^{+v}$.  We have
already argued their uniformity above.  We now need to make sure that the pattern of their densities
allow Lemma \ref{lem:count} to infer many induced copies of some equation in $\calf$.  To this end,
we modify $f$ to construct a new function $F : \F_2^n \to \zo$.  $F$ is initially identical to $f$
on the entire domain, but is then modified in the following order:
\begin{enumerate}
\item
For every nonzero $u \in \F_2^n/H$ such that $|\rho(F_{H}^{+u}) - \rho(F_{H'}^{+I(u)})| > \eps/8$,
do the following.  If $\rho(F_{H'} ^{+I(u)}) \geq \frac{1}{2}$, then make $F(x) = 1$ on all $x \in
u+H$.  Otherwise, make $F(x)= 0$ on all  $x \in u+H$.

\item
For every nonzero $u \in \F_2^n/H$ such that $\rho(F_{H'}^{+I(u)}) > 1-\eps/4$, make $F(x)=1$ for
all  $x \in u+H$.  On the other hand, if $u\in \F_2^n/H$ is nonzero and $\rho(F_{H'}^{+I(u)}) <
\eps/4$, make $F(x) = 0$ for all  $x \in u+H$.

\item
If for all nonzero $v \in W/H''$, $\rho(F_{H''}^{+v}) \geq \frac{1}{2}$, then make $F(x) = 1$ for
all  $x \in H$.  On the other hand, if for all nonzero $v \in W/H''$, $\rho(F_{H''}^{+v})
< \frac{1}{2}$, them make $F(x) = 0$ for all  $x \in H$.  (One of these two
conditions is true by construction.)
\end{enumerate}

The following observation shows that $F$ also must induce solutions to some equation from $\calf$,
since $F$ is $\eps$-far from being $\calf$-free.
\begin{claim}
$F$ is $\eps$-close to $f$.
\end{claim}
\begin{proof}
We count the number of elements added or removed at each step of the modification.  For the first
step, Corollary \ref{cor:allreg} guarantees that at most $\mathcal{E}(0) \leq \eps/8$ fraction of
cosets $u+H$ have $|\rho(F_{H}^{+u}) - \rho(F_{H'}^{+I(u)})| > \eps/8$.  So, $F$ is modified in at
most $\frac{\eps}{8} 2^n$ locations in the first step.  In the second step, if
$1 > \rho(F_{H'}^{+I(u)}) > 1-\eps/4$, then $\rho(F_H^{+u}) > 1-3\eps/8$ because the first step has
been completed.  Similarly, if $0 < \rho(F_{H'}^{+I(u)}) < \eps/4$, then $\rho(F_H^{+u}) <
3\eps/8$.  So, $F$ is modified in at most $\frac{3\eps}{4}2^n$ locations in the second step.  As
for the third step, $H$ contains at most $2^{n-\ell} \leq 2^{n - 8/\eps} < \frac{\eps}{8} 2^n$
elements for $\eps \in (0,1)$.  So, in all, $F$ is $\eps$-close to $f$.
\end{proof}

Now, we define a function $\mu: \F_2^\ell \to \{0,1,*\}$ based on $F$ and argue that it must
partially induce solutions to some equation in $\calf$.  Since $H$ is of codimension $\ell$,
$\F_2^n/H \cong \F_2^\ell$ and we identify the two spaces.  For $u \in \F_2^n/H$, if $F(x) = 1$ on
the entire coset $u+H$, let $\mu(u) = 1$.  On the other hand, if $F(x) = 0$ on the entire coset
$u+H$, then let $\mu(u) = 0$.  In any other case, let $\mu(u) = *$.

\begin{claim}
There exists $(E^i,\sigma^i) \in \calf$ such that $(E^i,\sigma^i) \mapsto_* \mu$.
\end{claim}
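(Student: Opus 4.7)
The plan is to argue that any induced copy of an equation in $\calf$ that $F$ contains automatically descends to a partial induced copy at the level of the coset-quotient function $\mu$. First I would observe that since $f$ is $\eps$-far from being $\calf$-free and $F$ differs from $f$ on fewer than $\eps 2^n$ points (by the preceding claim), $F$ itself cannot be $\calf$-free. Hence there is some $(E^i,\sigma^i)\in\calf$ together with a tuple $x=(x_1,\dots,x_{k_i})\in(\F_2^n)^{k_i}$ satisfying $E^i x = x_1+\cdots+x_{k_i}=0$ and $F(x_j)=\sigma^i(j)$ for every $j\in[k_i]$.

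Next I would project to the quotient. For each $j$, let $u_j\in\F_2^n/H$ be the coset containing $x_j$; identifying $\F_2^n/H$ with $\F_2^\ell$, this gives a tuple $u=(u_1,\dots,u_{k_i})$. Because the $x_j$ sum to zero in $\F_2^n$, the $u_j$ sum to zero in $\F_2^n/H$, so $E^i u=0$, which is the first requirement of Definition~\ref{def:partialinduce}.

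For the second requirement, I would argue coset-by-coset that $\mu(u_j)\in\{\sigma^i(j),*\}$. If $\mu(u_j)=*$ this is immediate. Otherwise, by the construction of $\mu$, the function $F$ is constant on the entire coset $u_j+H$, equal to that constant value $\mu(u_j)\in\{0,1\}$. Since $x_j\in u_j+H$ and $F(x_j)=\sigma^i(j)$, this constant must equal $\sigma^i(j)$, so $\mu(u_j)=\sigma^i(j)$. Combining with $E^i u=0$ gives $(E^i,\sigma^i)\mapsto_*\mu$ at $u$, proving the claim.

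There is essentially no obstacle here beyond carefully unwinding the definitions: the work in the proof was done when we set up $F$ and $\mu$ so that, on one hand, $F$ differs from $f$ in few enough places to guarantee it is still not $\calf$-free, and on the other hand, every coset of $H$ on which $F$ fails to be constant is encoded as a wildcard $*$ in $\mu$. Once both properties hold, the transfer from an induced solution in $F$ to a partial induced solution in $\mu$ is immediate. The slightly delicate point to keep in mind is just that ``$\eps$-close'' must be strict (fewer than $\eps 2^n$ differences) in order to conclude that $F$ is not $\calf$-free from the assumption that $f$ is $\eps$-far from $\calf$-free; this is already ensured by the preceding claim's accounting of modifications.
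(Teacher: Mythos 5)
Your proof is correct and follows essentially the same route as the paper's: use the closeness of $F$ to $f$ to conclude $F$ is not $\calf$-free, project an induced solution to the quotient $\F_2^n/H$, and note that on any coset where $\mu$ is not $*$ the function $F$ is constant, forcing agreement with $\sigma^i$. Your remark about strictness of the distance bound is a fair point of care, and it is indeed guaranteed by the accounting in the preceding claim.
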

\begin{proof}
As already observed, $F$ is not $\calf$-free, and let $(E^i,\sigma^i)\in \calf$ be some equation whose
solution is induced by $F$ at $(x_1,\dots,x_{k_i}) \in (\F_2^n)^{k_i}$.  Now let $y =
(y_1,\dots,y_{k_i}) \in (\F_2^\ell)^{k_i}$ where for each $j \in [k_i]$, $y_j = x_j \pmod H$.  It's
clear that $E^i y = 0$.  To argue that $F$ partially induces $\mu$ at $y$, suppose for
contradiction that for some $j  \in [k_i]$, $\mu(y_j) = 0$ but $\sigma^i_j = 1$.  But if $\mu(y_j) =
0$, then $F$ is the constant function $0$ on all of $y_j + H$, contradicting the existence of $x_j
\in y_j + H$ with $F(x) = 1$.  We get a similar contradiction if $\mu(y_j) = 1$ but $\sigma^i_j = 0$.
\end{proof}

Using Definition \ref{def:psi}, we immediately get that there is some $(E^i,\sigma^i) \in \calf$ of
size at most $\Psi_\calf(\ell)$ such that $(E^i,\sigma^i) \mapsto_* \mu$.  Fix $x_1,\dots,x_{k_i}
\in \F_2^n$ where $F$ induces $(E^i,\sigma^i)$, and as in the above proof, let $y_1,\dots,y_{k_i}
\in \F_2^n/H$ where each $y_j = x_j \pmod H$.  Also, pick $k_i-1$ linearly independent elements
$\tilde{v}_1,\dots, \tilde{v}_{k_i-1}$ from $W/H''$, which is possible since $\dim(W/H'') =
\Psi_\calf(\ell) > k_i - 1$, and choose $v_1 \in \tilde{v}_1 + H'', \dots, v_{k_i-1} \in
\tilde{v}_{k_i-1} + H''$ such that $v_1,\dots,v_{k_i}$ are linearly independent.  Additionally set
$v_{k_i} = \sum_{j=1}^{k_i-1} v_j$.  Notice that none of $v_1,\dots,v_{k_i}$
are in $H''$.  Now, consider the sets $f_{H''}^{+J(y_1) + v_1}, f_{H''}^{+J(y_2) + v_2}, \dots,
f_{H''}^{+J(y_{k_i}) + v_{k_i}}$. (Notice these are restrictions of $f$, not $F$!)   We will
show that these sets respect the density and uniformity conditions for Lemma \ref{lem:count} to
apply.

As for uniformity, we have already argued that each of these sets is
$\gamma_{\ref{lem:count}}(\eps/8, \Psi_\calf(\ell))$-uniform, since $J(y_j) + v_j$ is not in $H''$ for
every $j \in [k_i]$.  For density, we argue as follows.
For every $j \in [k_i]$, there are three cases: $\mu(y_j) = 1$, $\mu(y_j) = 0$, and $\mu(y_j) = *$.
Consider the first case.  If $y_j + H$ was affected by the first modification from $f$ to $F$,
then, $\rho(f_{H'}^{+I(y_j)}) \geq
\frac{1}{2}$, and using the $\mathcal{E}(\ell)$-uniformity of $f_{H'}^{+I(y_j)}$ along with Lemma
\ref{lem:coset-uniformity}, we get that $\rho(f_{H''}^{+J(y_j) + v_j}) \geq \frac{1}{2} -
\mathcal{E}(\ell) \cdot \delta^{-1}_{\ref{lem:allreginside}}(\Psi_\calf(r),\gamma_{\ref{lem:count}}(\eps/8,
\Psi_\calf(r))) \geq \frac{1}{2} - \frac{\eps}{8} \geq \frac{\eps}{8}$.  If $y_j + H$ was affected by the second
modification, then, by the same argument, we get that $\rho(f_{H''}^{+J(y_j) +
  v_j}) \geq 1 - \frac{\eps}{4} - \frac{\eps}{8} \geq \frac{\eps}{8}$.  Else, if $y_j + H$ was
affected by the third modification from $S$ to $S'$, we are automatically guaranteed that  $\rho(f_{H''}^{+J(y_j) +
  v_j}) \geq \frac{1}{2}$ since $J(y_j) + v_j \not\in H''$.  The case $\mu(y_j) = 0$ is similar, and the analysis
shows that $\rho(f_{H''}^{+J(y_j) + v_j}) \geq 1-\frac{\eps}{8}$.  Finally, consider the
``wildcard'' case, $\mu(y_j) = *$.  This case arises only if $y_j \neq 0$ and $\eps/4 \leq
\rho(f_{H'}^{+I(y_j)}) \leq 1-\eps/4$.  Again using $\mathcal{E}(\ell)$-uniformity of $f_{H'}^{+I(y_j)}$ along with Lemma
\ref{lem:coset-uniformity}, we get that $\eps/8 \leq \rho(f_{H''}^{+J(y_j) + v_j}) \leq 1-\eps/8$.

Thus, we can apply Lemma \ref{lem:count} with $\eps/8$ and $\Psi_\calf(\ell)$ as the parameters to
get that there are at least $\delta_{\ref{lem:count}}(\eps/8, \Psi_\calf(\ell)) |H''|^{k_i-1}$
tuples $z = (z_1,\dots,z_{k_i})$ with each $z_j \in J(y_j) + v_j + H''$ at which $(E^i,\sigma^i)$ is
induced .  Finally, each such $z_1,\dots,z_{k_i}$ leads to a distinct $z' = (z_1',\dots,z_{k_i}')
\in (\F_2^n)^{k_i}$ at which $(E^i,\sigma^i)$ is induced by $f$, by setting each $z_j'$ to
$J(y_j) + v_j + z_j$ and observing that $\sum_{j=1}^{k_i} J(y_j) + v_j =  J\left(\sum_{j=1}^{k_i} y_j\right) +
\sum_{j=1}^{k_i} v_j = 0$. This completes the proof of Theorem \ref{thm:modif}.
\end{proofof}

\subsection{Extending to Systems of Equations of Complexity $1$}\label{subsec:extension}

As mentioned in the introduction, the result we actually prove is stronger than Theorem
\ref{thm:main}.  To describe the full set of properties for which we can show testability, we first
need to make the following definition.
\begin{definition}[Complexity of linear system \cite{GT06}]
An $m \times k$ matrix $M$ over $\F_2$ is said to be of {\em
 (Cauchy-Schwarz) complexity $c$}, if $c$ is the smallest positive integer
for which the following is true.  For every $i \in [k]$, there exists
a partition of $[k]\backslash \{i\}$ into $c+1$ subsets $S_1,\cdots,S_{c+1}$
such that for every $j \in [c+1]$, $\left(\mathbf{e}_i + \sum_{i' \in S_j}
\mathbf{e}_{i'}\right) \not \in \mathsf{rowspace}(M)$, where
$\mathsf{rowspace}(M)$ is the linear subspace of $\F_2^k$ spanned by
the rows of $M$.  
\end{definition}
In other words, if we view the rowspace of the matrix $M$ as
specifying a collection of linear dependencies on $k$ variables $x_1,\dots,x_k$, then $M$ has
complexity $c$ if for every variable $x_i$, the rest of the variables
$x_1,\dots,x_{i-1},x_{i+1},\dots,x_k$ can be partitioned into $c+1$ sets $S_1,\dots,S_{c+1}$ such
that $x_i$ is not linearly dependent on the variables of any single $S_j$. Let us make a few
remarks to illustrate the definition.  Green and Tao show (Lemma 1.6 in \cite{GT06}) that if
each of these linear dependencies involves more than two variables, then the complexity of
$M$ is at most $\mathsf{rank}(M) = m$.  In particular then, if $M$ has one row and is nonzero on more
than two coordinates, $M$ has
complexity $1$.  This is the setting we discussed in the introduction.  We slightly extend this
observation in the claim below.  Before we state
it, we observe that in the context of property testing, 
it is only natural to exclude matrices which yield linear dependencies involving less than three
variables. If the rowspace of the matrix $M$ contains a vector which is nonzero at only one 
coordinate $i$, then for any string $\sigma$ of length $k$, the property of $(M,\sigma)$-freeness
must contain all functions $f$ such that $f(0) = 
1-\sigma_i$, and so {\em every} function is exponentially close to such a property.  Similarly, if
$\mathsf{rowspace}(M)$ contains a vector nonzero only at two coordinates $i$ and $j$, then for any
$\sigma \in \{0,1\}^k$, either $(M,\sigma)$-freeness is trivial (if $\sigma_i \neq \sigma_j$) or it is
equivalent to $(M',\sigma')$-freeness where $\sigma'$ is the string obtained by removing coordinate
$j$ and $M'$ is the matrix obtained by removing column $j$,
adding $1$ (mod $2$) to every element in column $i$ and row-reducing the resulting matrix.

\begin{claim}\label{lem:2rows}
If $M \in \F_2^{m \times k}$ is a matrix with two rows such that every vector in its
rowspace has at least three nonzero coordinates, then $M$ has complexity $1$.
\end{claim}
\begin{proof}
Let $R_1 \subseteq [k]$ be the set of coordinates for which the first row is nonzero, and $R_2
\subseteq [k]$ those for which the second row is nonzero.  We can assume that $R_1 \not \subseteq
R_2$ and $R_2 \not \subseteq R_1$, because if, say, $R_1 \subseteq R_2$, we could replace the second
row by the sum of the first and second, making $R_1$ and $R_2$ disjoint but preserving the rowspace
of the matrix.  Also, we we can assume w.l.o.g. that $R_1 \cup R_2 = [k]$.

Fix $i \in [k]$.  We want to show a partition of $[k]\backslash \{i\}$ into sets $S_1$, $S_2$ such
that $\mathbf{e}_i + \sum_{i' \in S_1} \mathbf{e}_{i'} \notin \mathsf{rowspace}(M)$ and similarly
for $S_2$.  If $i \in R_1\backslash R_2$, let $S_1$ consist of two elements, one from
$R_2\backslash R_1$ and one from $R_1\backslash\{i\}$, and let $S_2$ be the rest.
If $i \in R_2\backslash R_1$, let $S_1$ consist of one element from  $R_1\backslash R_2$ and one
from $R_2 \backslash \{i\}$, and let $S_2$ be the rest.   And finally, if $i \in R_1 \cap R_2$, let
$S_1$ consist of one element from $R_1 \backslash R_2$ and one from $R_2 \backslash R_1$, and let
$S_2$ be the rest.  It is straightforward to check that the definition of complexity $1$ is
satisfied by these choices.
\end{proof}

\noindent
More generally, an infinitely large class of complexity $1$ linear systems is generated by {\em
  graphic matroids}. We refer the reader to \cite{BCSX09} for definition and details.  That this
class contains the class of matrices proved to be of complexity $1$ in Claim \ref{lem:2rows}
is easy to show.  We proved the claim separately above only to be self-contained without introducing
matroid notation.    One final remark is that if $M$ is the matrix in the characterization of
Reed-Muller codes of order $d$ from Appendix \ref{app:reedmuller}, then $M$ has complexity exactly
$d$; see Example $3$ of \cite{GT06}.


Our main result in this section is the extension of  Theorem \ref{thm:main}  to  complexity 1
systems of equations. 

\begin{theorem}\label{thm:mainstrong}
Let $\mathcal{F} = \{(M^1,\sigma^1),(M^2,\sigma^2),\dots\}$ be a possibly infinite set
of induced systems of equations, with each $M^i$ of complexity $1$. Then, the property of being ${\cal
  F}$-free is testable  with one-sided error.
\end{theorem}

We next describe how to modify the previous proof to the new setting.
The following analog to Theorem \ref{thm:modif} is the core of the proof of Theorem \ref{thm:mainstrong}.

\begin{theorem}\label{thm:genmodif}
For every infinite family $\calf = \{(M^1,\sigma^1),(M^2,\sigma^2), \dots, (M^i,
\sigma^i), \dots\}$, where each $M^i$ is a $m_i \times k_i$ matrix over $\F_2$ of complexity $1$,
there are functions $N_\calf(\cdot)$, $k_\calf(\cdot)$ and $\delta_\calf(\cdot)$ such 
that the following is true for any $\eps \in (0,1)$.  If a function $f : \F_2^n \to \zo$ with $n >
N_\calf(\eps)$ is $\eps$-far from being $\calf$-free, then $f$ induces $\delta \cdot 2^{n(k_i-m_i)}$
many copies of some $(M^i,\sigma^i)$, where $k_i \leq k_\calf(\eps)$ and $\delta \geq
\delta_\calf(\eps)$.  
\end{theorem}

We show how to deduce Theorem \ref{thm:mainstrong} from Theorem \ref{thm:genmodif} next.  Note that,
as promised earlier, this is also a proof of Theorem \ref{thm:main} assuming Theorem
\ref{thm:modif}. 

\bigskip

\begin{proofof}{Theorem~\ref{thm:mainstrong}}
 \indent Theorem \ref{thm:genmodif} allows us to devise the
following tester $T$ for $\calf$-freeness.
$T$, given input $f : \F_2^n \to \zo$, first checks if $n \leq N_\calf(\eps)$, and in this case, it
queries $f$ on the entire domain and
decides accordingly.  Otherwise, $T$ selects independently and uniformly at random a set $D$ of $d$
elements from $\F_2^n$, where we will specify $d$ at the end of the argument.  It then 
queries all points in the linear subspace spanned by the elements of $D$ and then accepts or rejects
based on whether $f$ restricted to this subspace is $\calf$-free or not.

Clearly, if $f$ is $\calf$-free, then the tester always accepts because the property is
subspace-hereditary.  Also, 
if $n \leq N_\calf(\eps)$, then the correctness of the algorithm is trivial.  So, suppose $f$ is
$\eps$-far from $\calf$-free and $n > N_\calf(\eps)$.  For the $M^i$ guaranteed to exist from
Theorem \ref{thm:genmodif}, let $K$ be a $k_i \times c$ matrix over $\F_2$,
where $c = k_i -m_i \leq k_\calf(\eps)$, such that the columns of $K$ form a basis for
the kernel of $M^i$.  Then, every $y = (y_1,\dots,y_c) \in (\F_2^n)^c$ yields a distinct vector $x =
(x_1,\dots,x_k) \in (\F_2^n)^k$ formed by letting $x = Ky$ that satisfies $M^ix= M^iKy = 0$.
Therefore, because of Theorem \ref{thm:genmodif}, the probability that uniformly chosen
$y_1,\cdots,y_c \in \F_2^n$ yield $x = (x_1,\dots,x_k)$ such that $f$ induces $(M^i,\sigma^i)$ at
$x$ is at least $\delta_\calf(\eps)$.  The probability that $D$ does not contain such
$y_1,\dots,y_c$ is at most $(1-\delta)^{d/c} < e^{\delta_\calf(\eps) d/c}<1/3$ if we choose $d =
O(c/\delta_\calf(\eps)) = O(k_\calf(\eps)/\delta_\calf(\eps))$.   Thus with probability at least
$2/3$, $\mathsf{span}(D)$ contains $x_1,\dots,x_k$ such that $f$ induces $(M^i,\sigma^i)$ at
$x=(x_1, \dots, x_k)$, making the tester reject.
\end{proofof}

To prove Theorem \ref{thm:genmodif}, the
main ingredient that changes is the counting lemma.

\begin{lemma}[Counting Lemma for Complexity $1$]\label{lem:gencount}
For every $\eta \in (0,1)$ and integer $k>2$, there exist $\gamma =
\gamma_{\ref{lem:count}}(\eta,k)$ and $\delta = \delta_{\ref{lem:count}}(\eta,k)$ such that the
following is true.  Suppose $M$ is an $m \times k$ matrix of complexity $1$ and rank $m < k$, $\sigma \in \zo^k$
is a tuple, $H$ is a subspace of $\F_2^n$, and $f:\F_2^n \to \zo$ is a function. Furthermore, suppose there are $k$
not necessarily distinct elements $u_1,\dots,u_k \in \F_2^n/H$ such that $Mu = 0$ where $u =
(u_1,\dots,u_k)$, $f_H^{+u_i}: H \to \zo$ is $\gamma$-uniform for all $i \in [k]$, and $\rho(f_H^{+u_i})$ is at
least $\eta$ if $\sigma(i) = 1$ and at most $1-\eta$ if $\sigma(i) = 0$ for all $i \in [k]$.   Then,
there are at least $\delta |H|^{k-m}$ many $k$-tuples $x = (x_1, x_2, \dots, x_k)$, with each $x_i
\in u_i + H$, such that $f$ induces $(M,\sigma)$ at $x$.
\end{lemma}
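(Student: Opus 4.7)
The plan is to imitate the Fourier-analytic argument of Lemma~\ref{lem:count}, replacing the single-equation calculation there by the generalized von Neumann inequality of Green and Tao~\cite{GT06}, which is tailored precisely to complexity-$1$ linear systems. The overall structure remains the same: isolate a main term contributing about $\prod_i \rho(f_H^{+u_i})\cdot |H|^{k-m}$ solutions, and then show that the contributions from the nonzero Fourier coefficients sum to something much smaller.

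First I would reduce to a zero-sum problem on $H$. Because $Mu = 0$ in $\F_2^n/H$, one can pick representatives $v_i \in u_i + H$ with $Mv = 0$ in $\F_2^n$; substituting $x_i = v_i + y_i$ transforms the count into
\[
N \;=\; \sum_{\substack{y \in H^k\\ My = 0}}\, \prod_{i=1}^k g_i(y_i),
\]
where $g_i : H \to \zo$ equals $f_H^{+v_i}$ if $\sigma_i = 1$ and $1 - f_H^{+v_i}$ if $\sigma_i = 0$. By hypothesis each $g_i$ is $\gamma$-uniform with mean $\mu_i := \widehat{g}_i(0) \geq \eta$. Let $K = \ker M \leq H^k$, of size $|H|^{k-m}$. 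Character orthogonality on $H^k$ gives
\[
N \;=\; |H|^{k-m}\sum_{\vec{\alpha} \in K^\perp} \prod_{i=1}^k \widehat{g}_i(\alpha_i),
\]
where $K^\perp$ is the image of $M^{\top}$ inside $(H^{*})^k$. The $\vec{\alpha} = 0$ term contributes $\prod_i \mu_i \cdot |H|^{k-m} \geq \eta^k\, |H|^{k-m}$, the desired main term.

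The crux is bounding the contributions of the nonzero $\vec{\alpha}$, and here the complexity-$1$ hypothesis enters as the one non-routine step. Writing $g_i = \mu_i + h_i$ with $h_i$ mean-zero and $\gamma$-uniform, and expanding $\prod_i g_i$ into $2^k$ terms, it suffices to show that for every nonempty $S \subseteq [k]$,
\[
\Bigl|\,\sum_{y \in K} \prod_{i \in S} h_i(y_i)\,\Bigr| \;\leq\; |H|^{k-m}\cdot \gamma^{1/2}.
\]
Fix any $j \in S$. By complexity $1$ of $M$, there is a partition of $[k]\setminus\{j\}$ into $S_1, S_2$ such that neither $\mathbf{e}_j + \sum_{i \in S_1}\mathbf{e}_i$ nor $\mathbf{e}_j + \sum_{i \in S_2}\mathbf{e}_i$ lies in $\mathsf{rowspace}(M)$. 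Two successive applications of Cauchy--Schwarz, first in the $S_1$-variables and then in the $S_2$-variables (the generalized von Neumann argument of~\cite{GT06}), reduce the left side to a quantity controlled by the Gowers $U^2$-norm of $h_j$; the non-containment of the two vectors above is exactly what prevents the resulting averages from degenerating to constants when the squares are opened. Finally, $\|h_j\|_{U^2}^4 = \sum_\alpha |\widehat{h}_j(\alpha)|^4 \leq \max_{\alpha\neq 0}|\widehat{g}_j(\alpha)|^2 \cdot \|h_j\|_2^2 \leq \gamma^2$ yields the displayed bound. Summing over the $2^k-1$ nonempty $S$, the total error is at most $2^k \gamma^{1/2}\,|H|^{k-m}$, so the choices $\gamma := (\eta^k/2^{k+1})^2$ and $\delta := \eta^k/2$ give $N \geq \delta\,|H|^{k-m}$ as required.

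The main obstacle is exactly the Cauchy--Schwarz step: it is the \emph{only} place where the complexity of $M$ is used, and the definition of complexity $1$ is calibrated so that the two partitions $S_1, S_2$ with the non-containment property exist, which in turn is what lets the resulting sum collapse cleanly into a $U^2$-norm of a single $h_j$. Everything else reduces to the same Fourier bookkeeping that already appeared in the proof of Lemma~\ref{lem:count}.
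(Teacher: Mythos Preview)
Your proposal is correct and follows the same approach as the paper. The paper's entire proof of this lemma is a one-line citation: ``Lemma~\ref{lem:gencount} is a special case of the Generalized von Neumann Theorem (Proposition~7.1 in~\cite{GT06}).'' You carry out the same reduction to a count on $H$ via coset representatives and the functions $g_i$ (exactly as the paper does for Lemma~\ref{lem:count}), and then, rather than citing Green--Tao, you sketch the proof of the complexity-$1$ generalized von Neumann inequality itself; the constants you obtain differ from those of Lemma~\ref{lem:count} but are perfectly adequate.
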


Lemma \ref{lem:gencount} is a special case of the Generalized von Neumann Theorem
(Proposition 7.1 in \cite{GT06}). The rest of the proof is a straightforward modification of
Section \ref{sec:modifproof}.  Namely, whenever the old proof requires $k$ elements or $k$ cosets
$x_1, \dots, x_k$ to satisfy the equation $x_1 + \cdots + x_k = 0$, the new proof would require that
they satisfy the equation $Mx = 0$ where $x = (x_1, \dots, x_k)$.

\ignore{
\begin{remark}
We get somewhat better bounds if $\calf$-freeness is additionally known to be monotone or
known to be affine-invariant.  In both cases, we do not need to apply Lemma \ref{lem:allreginside}
for the second level of partitioning.  We omit details here. \enote{need to address this statement}
\end{remark}
}


\section{Characterization of natural one-sided testable properties}\label{app:charac}

We now turn to showing Theorem \ref{thm:charac} which states that for linear-invariant properties,
testability with a one-sided error oblivious tester is equivalent 
to the property being semi subspace-hereditary (recall here Definition \ref{def:semihereditary}).

First we formalize the discussion from the introduction regarding the fact that it is always possible to assume that the testing algorithm for a
one-sided testable linear-invariant property  makes its decision only by querying the
input function on a random linear subspace of constant dimension.

\begin{proposition}\label{prop:canonical}
Let $\calp$ be a linear invariant property, and let $T$ be an arbitrary one-sided tester for $\calp$
with query complexity $d(\eps,n)$.  Then, there exists a one-sided tester $T'$ for $\calp$ that
selects a random subspace $H$ of dimension $d(\eps,n)$, queries the input on all points of $H$, and
decides based on the oracle answers, the value of $\eps$ and $n$, and internal randomness\footnote{Note here, we leave open
  the possibility that the decision of the tester may not be
  based only on properties of the selected subspace.  This gap can be resolved using the same techniques as used by \cite{GoldreichTrevisan}
  for the graph case, but this point is not relevant for our purposes and so we do not elaborate more
here.}.  Note that $T'$ is non-adaptive and has query complexity $2^{d(\eps,n)}$.
\end{proposition}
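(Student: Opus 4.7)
The plan is to mimic the canonicalization argument of Goldreich and Trevisan \cite{GoldreichTrevisan} for graph property testers, with the group of vertex permutations replaced by $\mathrm{GL}_n(\F_2)$, the group of invertible $\F_2$-linear maps on $\F_2^n$. First I would convert the adaptive one-sided tester $T$ into a non-adaptive one-sided tester $T_1$ with query complexity at most $2^{d(\eps,n)}$: $T_1$ pre-queries every point appearing at some node of $T$'s depth-$d(\eps,n)$ decision tree, caches the answers, and then simulates $T$'s adaptive logic on the cache; one-sided correctness is preserved.

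Next I would define the canonical tester $T'$ by symmetrizing $T_1$ over $\mathrm{GL}_n(\F_2)$. Fix an arbitrary $d(\eps,n)$-dimensional subspace $V_0 \leq \F_2^n$. On input $f$, $T'$ draws $L \in \mathrm{GL}_n(\F_2)$ uniformly at random, queries $f$ at every point of $H = L(V_0)$, and then simulates $T_1$ on the composed function $f \circ L$ by answering each of $T_1$'s queries $x$ with the cached value $f(L(x))$. Because $\calp$ is linear-invariant, $f \in \calp$ iff $f \circ L \in \calp$; because $L$ is a bijection on $\F_2^n$, $f$ is $\eps$-far from $\calp$ iff $f \circ L$ is. Hence $T'$ inherits the one-sided error and the soundness of $T_1$. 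Moreover, as $L$ ranges uniformly over $\mathrm{GL}_n(\F_2)$, the image $L(V_0)$ is uniform over the $d(\eps,n)$-dimensional subspaces of $\F_2^n$, so $H$ is a uniformly random subspace of the desired dimension and $T'$ has the structure required by the proposition, with query complexity $|V_0| = 2^{d(\eps,n)}$.

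The main obstacle is ensuring that the simulation of $T_1$ inside $T'$ is well-defined, i.e., that every query $x$ that $T_1$ attempts to make during the simulation lies in $V_0$, so that $L(x) \in H$ and $f(L(x))$ is among the cached answers. To justify this I would first normalize $T_1$ so that its entire set of potential queries is contained in $V_0$. The key observation is that a one-sided tester for a linear-invariant property rejects $f$ only when the queried values form a \emph{witness} of incompatibility with $\calp$, i.e., an answer pattern with which no function in $\calp$ agrees; such a witness is a purely linear-algebraic condition on the restriction of $f$ to the span of the query set. Using this, one can replace $T_1$'s decision tree by an equivalent one whose internal-node queries are rewritten as elements of $V_0$ (the linear change of coordinates being absorbed into the random $L$), preserving correctness and keeping the query complexity within $|V_0| = 2^{d(\eps,n)}$. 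This step is the direct analog of, and I expect to be of similar technical depth to, the corresponding move in the Goldreich--Trevisan canonicalization for graphs.
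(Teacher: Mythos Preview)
Your approach has a genuine gap in the step where you try to confine all of $T_1$'s queries to the fixed $d$-dimensional subspace $V_0$ (write $d = d(\eps,n)$). The de-adaptivization you perform --- pre-querying every node of $T$'s depth-$d$ decision tree --- produces a non-adaptive $T_1$ whose query set (for a fixed random coin of $T$) contains up to $2^d-1$ points of $\F_2^n$, and there is no reason these points should span only $d$ dimensions. For instance, an adaptive tester whose decision-tree nodes happen to be $2^d-1$ distinct standard basis vectors yields a $T_1$ whose queries span a $(2^d-1)$-dimensional space. Since any change of coordinates in $\mathrm{GL}_n(\F_2)$ preserves the dimension of a span, no such change can move $T_1$'s query set into the $d$-dimensional $V_0$. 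Your witness observation --- that a one-sided tester rejects only on answer patterns inconsistent with every $f\in\calp$, and that such patterns are $\mathrm{GL}_n$-invariant --- is correct but does not address this: it lets you linearly relabel any \emph{single} rejecting path (which indeed involves only $d$ points), but a non-adaptive $T_1$ must commit to its full query set before seeing any answers, and that set is the union of all paths.

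The paper sidesteps the obstacle by reversing your order of operations: it does \emph{not} de-adaptivize first. Its intermediate tester $T_2$ simply runs the adaptive $T$, making $d$ queries $x_1,\dots,x_d$, and only afterward additionally queries all of $\mathrm{span}(x_1,\dots,x_d)$, which is automatically of dimension at most $d$. The final tester $T'$ draws $L\in\mathrm{GL}_n(\F_2)$ uniformly and runs $T_2$ on $f\circ L$. Revealing $L$ one linearly independent query at a time, whenever $x_i\notin\mathrm{span}(x_1,\dots,x_{i-1})$ the actual query $L(x_i)$ is uniform outside $\mathrm{span}(L(x_1),\dots,L(x_{i-1}))$, regardless of the adaptive choices already made; hence the final query set is a uniformly random subspace of dimension at most $d$. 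The blow-up from $d$ to $2^d$ thus lands in the number of queries, not in the dimension of their span, and no ``fit into $V_0$'' step is ever needed.
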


\begin{proof}
Consider a tester $T_2$ that acts as follows.  If the tester $T$ on the input makes queries
$x_1,\dots,x_d$, then $T'$ queries all points in $\text{span}(x_1,\dots,x_d)$ but makes
its decision based on $x_1,\dots,x_d$ just as $T$ does.  Clearly, $T_2$ is also a one-sided tester
for $\calp$ and with query complexity at most $2^{d(\eps)}$.

Now, define a tester $T'$ as follows.  Given oracle access to a function $f: \F_2^n \to \zo$, $T'$
first selects uniformly at random a non-singular linear transformation $L : \F_2^n \to \F_2^n$, and
then invokes $T_2$ providing it with oracle access to the function $f \circ L$.  That is, when $T_2$
makes query $x$, then algorithm $T'$ makes query $L(x)$.  We argue that the sequence of queries made
by $T'$ are the elements of a uniformly chosen random subspace of dimension at most $d(\eps)$.  To
see this, fix the input $f$ and the randomness of $T_2$.  Then, for each $i \in [2^{d(\eps)}]$ for which
the $i$'th query, $x_i$, made by $T_2$ is linearly independent of the previous $i-1$ queries,
$x_1,\dots,x_{i-1}$, it's the case that $L(x_i)$ is a uniformly chosen random element from outside
$\text{span}(L(x_1),\dots,L(x_{i-1}))$.  So, for every fixing of the random coins of $T_2$, the queries
made by $T'$ span a uniformly chosen subspace of dimension at most $d(\eps)$, and hence, this is
also the case when the coins are not fixed.  $T'$ is a one-sided tester for $\calp$ because if $f \in
\calp$, then $f \circ L \in \calp$ by linear invariance, and if $f$ is $\eps$-far from $\calf$, then
$f \circ L$ is also $\eps$-far from $\calp$ because $L$ is a permutation on $\F_2^n$.
\end{proof}

An oblivious tester, as defined in Definition \ref{defoblivious}, differs from the tester $T'$ of
the above proposition in that the dimension of the selected subspace and the decision made by the
tester are not allowed to depend on $n$.  As argued there, it is very reasonable to expect natural
linear-invariant properties to have such testers, and indeed, prior works have already implicitly
restricted themselves in this way.

We can now proceed with the proof of Theorem \ref{thm:charac}.

\bigskip

\begin{proofof}{Theorem \ref{thm:charac}}
Let us first prove the forward direction of the theorem.  Note that for this direction, we do
not need to assume the truth of Conjecture \ref{mainconj}.
Given a linear-invariant property $\mathcal{P}$
that can be tested with one-sided error by an oblivious tester,
we will build a subspace-hereditary property $\calh$ containing $\calp$, by identifying a (possibly infinite)
collection of matrices $M^i$ and binary strings $\sigma^i$ such that
$\calh$ is equivalent to the property of being $\{(M^i, \sigma^i)\}_i$- free.

Let $\cals$ consist of the pairs $(H,S)$, where $H$ is a subspace of $\F_2^n$ and $S \subseteq H$ is
a subset, that satisfy the following two properties:
(1) $\dim(H)=d(\epsilon)$ for some $\epsilon$, and
(2) if for this $\epsilon$, the tester rejects its input with some positive probability when the
evaluation of its input on the sampled subspace is $\indic_S$.
For $(H,S)\in \cals$ let $d=\dim(H)$.
Consider the matrix $A_H$ over $\F_2$ with each row representing an element of $H$ in some fixed
basis. Notice that $A_H$ is a $(2^{\ell}\times \ell)$-sized matrix. Define $M_H$, a matrix over
$\F_2$  of size
$(2^{\ell}-\ell)\times 2^{\ell}$, such that $M_H A_H=0$. Finally, for each $i\in [2^\ell]$ define
$\sigma_S(i)=\indic_S(x_i)$, where $x_i$ is the element represented in the $i$'th row of $A_H$. Let
$\calm$ be the set of pairs $(M_H, \sigma_S)$ obtained in this way from every $(H,S)\in \cals$.

We now proceed to verify that $\calh$ satisfies the conditions of
Definition~\ref{def:semihereditary}.  To show that $\calp$ is $\calm$-free, let $f\in \calp_n$, and
suppose that there exists $(M_H ,\sigma_S)\in \calm$ such that $(M_H, \sigma_S)\mapsto f$, for some
$\epsilon$, and for some $H$ with $\dim(H)= d(\epsilon)$ and $S \subseteq H$. We show that $f$
is rejected with some positive probability, a contradiction to the fact that the test is
one-sided. If $(M_H,\sigma_S)$ is induced by $f$ at $(x_1,\dots,x_{2^{d(\eps)}})$, then these elements
  necessarily span a $d(\eps)$-dimensional subspace so that the function restricted to that subspace is
  $\indic_S \circ L$ for some linear transformation $L : \F_2^{n} \to \F_2^{d(\eps)}$ (determined by
  the choice of basis that was used to represent $H$).  Thus, this
  immediately implies by the definition of $(M_H,\sigma_S)$ that the tester   rejects $f$ with
  positive probability.

To verify the second part of the Definition~\ref{def:semihereditary}, let $M(\epsilon)=d(\epsilon)$.
Suppose $f:\F_2^n \to \zo$, with $n > M(\eps)$ is $\epsilon$-far from satisfying $\calp$. In this
case, in order for the tester to reject $f$ with positive probability, it must select a
$d(\eps)$-dimensional subspace $H$ so that the restriction to $H$ equals the indicator function on
$S$ (upto a linear transformation), for some $(H,S) \in {\cal S}$.  Therefore $T$ is not
$\calm$-free, and thus $T\not \in \calh$.

It remains to show the opposite direction of Theorem \ref{thm:charac}.  We here assume Conjecture
\ref{mainconj} that every subspace-hereditary property $\calp$ is testable by a one-sided tester.
Our first observation that, in this case, it is actually testable by an {\em oblivious} one-sided
tester.  Namely, we show that the clearly oblivious tester, which checks whether the input function
restricted to a random linear subspace satisfies $\calp$ or not, is a valid tester.  We need to argue that if a
non-oblivious tester rejects input $f$ that is $\eps$-far from $\calp$ by querying its
values on a random $d(\eps)$-dimensional subspace (we already know the 
tester is of this type from Proposition \ref{prop:canonical}), then with high probability, the input function restricted
to a random  $3d(\eps)$-dimensional subspace does not satisfy the property $\calp$.  Suppose it did. 
But then, if the original tester first uniformly selected a $3d(\eps)$-dimensional subspace $H$ and
then uniformly selected a $d(\eps)$-dimension subspace $H'$ inside it, and ran its decision
based on $f|_{H'}$, it will accept the input with large probability, which is a contradiction to the
soundness of the tester since $H''$ is a uniformly distributed $d(\eps)$-dimensional subspace.
Thus, for a testable subspace-hereditary property, we can assume that the tester simply checks for
$\calp$ on the sampled subspace, and is hence, oblivious to the value of $n$.  This argument is
analogous to one of Alon for graph properties, reported in \cite{GoldreichTrevisan}.

Now, assuming that every subspace-hereditary property is testable by an oblivious one-sided  tester
(Conjecture \ref{mainconj}), we wish to show that every semi 
subspace-hereditary property is testable by an oblivious one-sided tester. Let $\calp$ be a a semi
subspace-hereditary property and let $\calh$ be the subspace-hereditary property associated to $\calp$ in
Definition~\ref{def:semihereditary}. By our assumption, $\calh$ has a one-sided tester $T'$, which
on input $\eps$ makes $Q'(\eps)$ queries and rejects inputs $\eps$-far from $\calh$ with probability
$2/3$.
The tester $T$ for $\calp$ makes $Q(\eps)=\max(Q'(\eps/2), 2^{M(\eps/2)})$  queries (where
$M(\cdot)$ comes from Definition~\ref{def:semihereditary}) and proceeds as follows.  If the size of
the input is at most $Q(\eps)$, then by definition, $T$ receives the evaluation of the function all
of the input and in this case, it simply checks if the input belongs to $\calp$.  Otherwise $T$
emulates $T'$ with distance parameter $\eps/2$ and accepts if and only if $T'$ accepts.

Notice that $T$ is one-sided. Indeed,  if the input $f$ satisfies $\calp$  then  $f\in \calh$  and thus $T'$ always accepts, causing $T$ to always accept.
To prove  soundness, we first argue that if $f$ is $\eps$-far from $\calp$ then it is $\eps/2$-far from $\calh$.
Suppose otherwise, and modify $f$ in at most an $\eps/2$ fraction of the domain in order to obtain a function $g\in \calh$. Thus $g$ is still $\eps/2$-far from $\calp$, and by Definition~\ref{def:semihereditary} $g\not \in \calh$, a contradiction.
Finally, since $f$ is $\eps/2$-far from $\calh$ and since $T'$ mistakenly accepts such inputs with probability at most $1/3$ so does $T'$.
\end{proofof}

\section{Concluding Remarks and Open Problems}\label{sec:conclusion}

Obviously, the main open problem we would like to see resolved is Conjecture
\ref{mainconj}. One appealing way to prove the conjecture would be to proceed as we have but to
obtain a stronger notion of pseudorandomness in the regularity lemma.  The notion of
$\eps$-uniformity obtained from Green's regularity lemma corresponds to the Gowers $U^2$ norm,
whereas in order to be able to prove Conjecture \ref{mainconj} in its full generality, we would
presumably need a similar regularity lemma with respect to the Gowers $U^k$ norm \cite{Gow01} for any fixed $k$.
Such a higher order regularity lemma has been very recently obtained by Green and Tao \cite{GT10} over
the integers and over fields of large characteristic.  However, it is not yet available over $\F_2$,
as the inverse conjectures for the Gowers norms over $\F_2$ have not yet been completely clarified \cite{GreenPers}.

Let us mention some other observations and open problems related to this work.

\begin{itemize}
\ignore{
\item As we have mentioned in Subsection \ref{subseccharac}, we can show that any testing algorithm
of a linear-invariant property can be converted into a non-adaptive one. The argument is similar to
the one used by Goldreich and Trevisan \cite{GoldreichTrevisan} in their proof that such a
transformation exist for testing dense graphs. The ideas is to randomly pick a set of points,
$x_1,\ldots,x_d$, query $f$ on all points spanned by them and then decide according to the
answers. This is similar to the test of \cite{GoldreichTrevisan} which works by sampling a set of vertices
$v_1,\ldots,v_d$, querying about all edges spanned by them, and then answering according to these
answers.  We note that while in the case of graphs this results in a quadratic increase in the query
complexity (since $d$ vertices span $d^2$ edges) in our case the increase is exponential (since $d$
vectors over $\F^n_2$ span $2^d$) vector. We sketch the argument in Appendix ?????
}

\item As we have mentioned in Subsection \ref{subseccharac}, it is not too hard to construct
linear-invariant properties which are not testable. Actually, there are properties of this type that
cannot be tested with $o(2^n)$ queries. One example can be obtained from a variant of an
argument used in \cite{GGR} as follows; it is shown in \cite{GGR} (see Proposition 4.1) that for
every $n$ there exists a property of Boolean functions that contains $2^{\frac{1}{10}2^n}$ of the
Boolean functions over $\F_2^n$ and cannot be tested with less than $\frac{1}{20}2^n$ queries. This
family of functions is not necessarily linear invariant, so we just ``close'' it under linear
transformation, by adding to the property all the linear-transformed such functions. Since the
number of these linear transformation is bounded by $2^{n^2}$ (corresponding to all possible $n
\times n$ matrices over $\F_2$) we get that the new property contains at most
$2^{n^2}2^{\frac{1}{10}2^n} \leq 2^{\frac{1}{5}2^n}$ Boolean functions. One can verify that since
this new family contains a small fraction of all possible functions the argument of \cite{GGR}
caries over, and the new property cannot be tested with $o(2^n)$ queries.

\ignore{
\item
Our proof techniques actually show testability for a class of properties slightly larger than that
specified in Theorem \ref{thm:main}.  We can use Lemma 2.7 from \cite{BCSX09} to show that whenever
the linear system of equations described by the matrix $M$ is of complexity $1$ (see \cite{GT06} or
\cite{BCSX09} for definition), then our Lemma \ref{lem:count} still holds while the rest of the
proof machinery is unaffected.  For linear systems of larger complexity, bounds on higher-order
Gowers norms are needed to control the terms in the counting lemma.}

\item The upper bound one obtains from the general result given in Theorem \ref{thm:main} is
  terrible in terms of its dependence on $1/\eps$. A
natural open problem would be to find a characterization of these properties that can be tested
with a number of queries that depends polynomially on $\epsilon$. This, however, seems to be a very
hard problem. Even if the only forbidden equation is $x+y=z$ it is not known if such an efficient
test exists. This question was raised by Green \cite{Green05}; see \cite{BX10} for current best bounds.

\ignore{
\item Another natural question is if one can obtain any bound that would apply to all of the
properties covered by Theorem \ref{thm:main}, say, show that they can all be tested using
$2^{1/\epsilon}$ queries. It is natural to conjecture that since the family of properties considered
by Theorem \ref{thm:main} is rather general it would contain properties that require arbitrarily
large query complexity.  We observe that the proof of the analogous fact for graph properties \cite{ASmon}
readily extends to our setting also.  We omit the details here.
}

\item Our result here gives a (conjectured) characterization of the linear-invariant properties of
  Boolean functions that can be tested with one-sided error. It is of course natural to try to
  extend our framework to other families of properties, characterized by other or more general
  invariances.  For instance, can we carry out a full characterization for testable affine invariant
  properties of Boolean functions on the hypercube?

\item
It would be valuable to understand formally why the technology developed for
  handling graph properties can be extended so naturally to linear-invariant properties.
  This ``coincidence'' seems part of a larger trend in mathematics where claims about subsets find
  analogs in claims about vector subspaces.  See \cite{CohnF1} for an interesting attempt to shed
  light on this   puzzle.

\end{itemize}

\paragraph{Acknowledgements}
Arnab would like to thank Eldar Fischer for some initial stimulating discussions during a visit to
the Technion and Alex Samorodnitsky for constant encouragement and advice.  Many thanks also to Noga
Alon and Shachar Lovett for useful suggestions.

\bibliographystyle{alpha}
\bibliography{testing}


\appendix

\section{Proofs omitted from Section \ref{sec:intro}}\label{app:reedmuller}

\begin{paragraph}{Characterization of Reed Muller codes by forbidding systems of induced equations}
 First recall that Reed Muller codes of order $d$ are defined as
$${\cal R}{\cal M}(d)=\left\{ f:\F_2^n \rightarrow \F_2 :~ f(x)=\sum_{S\subset [n], |S|\leq d}~ \prod_{i\in S}~ x_i.\right\}$$

The most common characterization of ${\cal R}{\cal M}(d)$ (see for example \cite{AKKLR}) is that $f\in{\cal R}{\cal M}(d)$  if and only if $f$ satisfies
$$ \sum_{S\subset [n], |S|\leq d+1 } f\left(\alpha +\sum_{i\in S} \alpha_i\right)=0, \mbox{ for all } (\alpha, \alpha_1, \ldots, \alpha_{d+1})\in (\F_2^n)^{d+2}. $$

We use this description to  obtain a matrix $M\in \F_2^{(2^{d+1}-d-2)\times (2^{d+1})}$ and a collection of $\sigma^i\in \{0,1\}^{2^{d+1}}$ such that ${\cal R}{\cal M}(d)$ is $\{(M, \sigma^i)\}_i$- free. Intuitively, we want $M$ to encode all the linear relations between the elements of the set $A=\{\alpha+\sum_{i\in S} \alpha_i \}_{0\leq |S|\leq d+1}$, and we want to use the $\sigma^i$'s to enforce the fact $f$ should evaluate to $1$ on an even number of elements of $A$.

More exactly, assume that $B=\{\alpha, \alpha+\alpha_1, \ldots, \alpha+\alpha_{d+1}\}$ are linearly independent.
For every   $\beta\in A-B$, add to $M$ the row which is the vector representing  $\beta$ in the basis $B$. Further, consider all the $\sigma^i \in \{0, 1\}^{2^{d+1}}$ such that $|\{j: \sigma^i_j=1\}|$ is odd. Clearly the number of such $\sigma^i$'s is finite, and the patterns allowed by forbidding all $(M, \sigma^i)$ are only those that satisfy the above characterization.

Finally, notice that setting $d=1$ the resulting matrix $M$ contains only one row, and thus  Theorem~\ref{thm:main} applies to testing linearity.
\end{paragraph}

\bigskip
 We conclude with the proof
of Proposition \ref{HereditaryInduced} which was also omitted from the Introduction.

\bigskip

\begin{proofof}{Proposition \ref{HereditaryInduced}}
In one direction, it is easy to check that $\calf$-freeness is a subspace-hereditary linear-invariant
property, for any fixed family $\calf$.

Now, we show the other direction.  For a subspace-hereditary linear-invariant property $\mathcal{P}$, let
$\mathsf{Obs}$ denote the collection of pairs $(d,S)$, where $d \geq 1$ is an integer and $S
\subseteq \F_2^d$ is a subset, such that $\indic_S$ does not have property $\mathcal{P}$ and is
minimal with respect to restriction to subspaces. In other words, $(d,S)$ is contained in
$\mathsf{Obs}$ iff $\indic_S \not \in \calp_d$ but for any vector subspace $U \subseteq \F_2^d$ of dimension
$d' < d$, $\indic_{S|_U} \in \mathcal{P}_{d'}$ where $S|_U \subseteq U$ is the restriction of $S$ to $U$.

For every $(d,S) \in \mathsf{Obs}$,  we construct a matrix $M_d$ and a tuple $\sigma_S$ such that any
$f$ with property $\calp$ is $(M_d,\sigma_S)$-free.  Define $A_d$ to be the $2^d$-by-$d$
matrix over $\F_2$, where each of the $2^d$ rows corresponds to a distinct element of $\F_2^d$
represented using some choice of bases.  Now, define $M_d$ to be a $(q^d-d)$-by-$q^d$ matrix over
$\F$, such that $M_d A_d = 0$ and $\mathsf{rank}(M_d) = q^d-d$.  Define $\sigma_S$ as
$(\sigma(1),\sigma(2),\dots,\sigma(2^d))$ where $\sigma(i) = \indic_S(x_i)$ with $x_i$ being the
element of $\F_2^d$ represented in the $i$th row of  $A_d$.  
We observe now that any $f :  \F_2^n \to \zo$ having property $\calp$ is $(M_d,  \sigma_S)$-free.
Suppose the opposite, so that there exists $x = (x_1,\dots,x_{q^d}) \in (\F_2^n)^d$ 
satisfying $Mx = 0$ and $f(x_i) = \sigma(i)$.  Then, by definition of $M_d$, the $x_1,\dots,
x_{2^d}$ are the elements of a $d$-dimensional subspace $V$ over $\F_2$, and by definition of
$\sigma_S$, $S_f|_ V = S$ where $S_f$ is the support of $f$.  Thus $f|_V \not \in \calp$ which is a
contradiction to the fact that $f$ has property $\calp$ because $\calp$ is subspace-hereditary.  

Finally, define $\calf_\calp = \{(M_d, \sigma_S)\}$.  We have just seen that any $f$ having property
$\calp$ is $\calf_\calp$-free.  On the other hand, suppose $f$ does not have property $\calp$. Then,
because of heredity, there must be a $d$-dimensional subspace $V$ such that the support of $f|_V$ is
isomorphic to $S$ for some $(d,S) \in \mathsf{Obs}$ under linear transformations, which means by the
same argument as above, that $f$ will not be $(M_d,\sigma_S)$-free.
\ignore{
In the case that $\calp$ is known to be monotone-hereditary, we can construct $\calf_\calp$ as
follows.  Let $\mathsf{Obs}'$ denote the collection of sets which do not have property $\calp$ and
which are both minimal with respect to restriction to subspaces as well as removal of elements.
That is, here, $S$ is contained in $\mathsf{Obs}'$ iff there is a $d \geq 1$ such that $S
\subseteq \F^d$ and $S \not \in \calp_d$, but $(i)$, for any vector subspace $U \subseteq \F^d$ of
dimension $d' < d$, $S|_U \in \mathcal{P}_{d'}$, and $(ii)$, for any $S' \subsetneq S$, $S'
\in \calp_d$.  For every $S \in \mathsf{Obs}'$, we construct a matrix $M_S$. Define $A_S$ to be the
$|S|$-by-$d$ matrix over $\F$ where each row corresponds to
a distinct element of $S$ represented using some choice of bases, and let $M_S$ be a
$(|S|-\mathsf{rank}(A_S))$-by-$|S|$ matrix over $\F$ such that $M_SA_S = 0$ and $\mathsf{rank}(M_S)
= |S| - \mathsf{rank}(A_S)$.  Let $\calf_\calp = \{M_S : S \in \mathsf{Obs}'\}$.  A similar argument
to above shows that $\calf_\calp$-freeness is equivalent to $\calp$.}
\end{proofof}



\end{document}